\def\mdseries@tt{m}             
\newcommand{\Nat}{\ensuremath{\mathbb{N}}}
\newcommand{\eqdef}{\stackrel{\mbox{\begin{tiny}def\end{tiny}}}{=}} 
\newcommand{\cut}[1]{}
\mathchardef\mhyphen="2D 
\newcommand{\hide}[1]{}
\newcommand\cA{\mathcal{A}}
\newcommand\cB{\mathcal{B}}
\newcommand\cE{\mathcal{E}}
\newcommand\cG{\mathcal{G}}
\newcommand\Ll{\mathcal{L}}
\newcommand\cP{\mathcal{P}}
\newcommand\cT{\mathcal{T}}
\newcommand\replaceall{\mathsf{replaceAll}}
\newcommand\indexof{\mathsf{IndexOf}}
\newcommand\strline{\mathsf{SL}}
\newcommand\search{\mathsf{search}}
\newcommand\verify{\mathsf{vfy}}
\newcommand\searchleft{\mathsf{left}}
\newcommand\searchlong{\mathsf{long}}
\newcommand\wprof{\mathsf{WP}}
\newcommand\vars{\mathsf{Vars}}
\newcommand\rpleft{\mathsf{l}}
\newcommand\rpright{\mathsf{r}}
\newcommand\red{\mathsf{red}}
\newcommand{\dmdidx}{{\sf Idx_{dmd}}}
\newcommand{\lftlen}{{\sf Len_{lft}}}
\newcommand{\ASSERT}[1]{\textbf{assert}(#1)}
\newcommand{\OMIT}[1]{}
\newcommand\shortlong[2]{#2}
\newif\ifdraft\draftfalse
\newcommand{\anthony}[1]{\color{red} {YA: #1 :AY} \color{black}}
\newcommand{\zhilin}[1]{\color{brown} {ZL: #1 :LZ} \color{black}}
\newcommand{\tl}[1]{\color{blue} {TL: #1 :LT} \color{black}}
\newcommand{\mat}[1]{\color{cyan} {MH: #1 :HM} \color{black}}
\newcommand{\anthony}[1]{}
\newcommand{\zhilin}[1]{}
\newcommand{\tl}[1]{}
\newcommand{\mat}[1]{}
\newcommand{\concat} {\circ}
\newcommand{\str} {{\sf Str}}
\newcommand{\intnum} {{\sf Int}}
\newcommand{\regexp} {{\sf RegExp}}
\begin{document}
\newtheorem{remark}[theorem]{Remark}

\title[What Is Decidable about String Constraints with the ReplaceAll
Function]{What Is Decidable about String Constraints with the ReplaceAll
Function \shortlong{}{(Technical Report)}}         


\author{Taolue Chen}
\orcid{0000-0002-5993-1665}
\affiliation{
  \department{Department of Computer Science and Information Systems}              
  \institution{Birkbeck, University of London}            
  \streetaddress{Malet Street}
  \city{London}
  \postcode{WC1E 7HX}
  \country{United Kingdom}
}
\email{taolue@dcs.bbk.ac.uk}          

\author{Yan Chen}
\affiliation{
  \institution{State Key Laboratory of Computer Science, Institute of Software, Chinese Academy of Sciences} 
  \country{China}
}
\affiliation{
  \institution{University of Chinese Academy of Sciences} 
  \country{China}
}

\author{Matthew Hague}
\orcid{0000-0003-4913-3800}
\affiliation{
  \department{Department of Computer Science} 
  \institution{Royal Holloway, University of London} 
  \streetaddress{Egham Hill}
  \city{Egham}
  \state{Surrey}
  \postcode{TW20 0EX}
  \country{United Kingdom}
}
\email{matthew.hague@rhul.ac.uk}          

\author{Anthony W. Lin}
\orcid{0000-0003-4715-5096}
\affiliation{
  \department{Department of Computer Science}              
  \institution{University of Oxford}            
  \streetaddress{Wolfson Buildin, Parks Road}
  \city{Oxford}
  \postcode{OX1 3QD}
  \country{United Kingdom}
}
\email{anthony.lin@cs.ox.ac.uk}          

\author{Zhilin Wu}
\affiliation{
  \institution{State Key Laboratory of Computer Science, Institute of Software, Chinese Academy of Sciences} 
  \country{China}
}



\begin{CCSXML}
<ccs2012>
<concept>
<concept_id>10003752.10003790.10003794</concept_id>
<concept_desc>Theory of computation~Automated reasoning</concept_desc>
<concept_significance>500</concept_significance>
</concept>
<concept>
<concept_id>10003752.10003790.10011192</concept_id>
<concept_desc>Theory of computation~Verification by model checking</concept_desc>
<concept_significance>500</concept_significance>
</concept>
<concept>
<concept_id>10003752.10010124.10010138.10010142</concept_id>
<concept_desc>Theory of computation~Program verification</concept_desc>
<concept_significance>500</concept_significance>
</concept>
<concept>
<concept_id>10003752.10010124.10010138.10010143</concept_id>
<concept_desc>Theory of computation~Program analysis</concept_desc>
<concept_significance>500</concept_significance>
</concept>
<concept>
<concept_id>10003752.10003790.10002990</concept_id>
<concept_desc>Theory of computation~Logic and verification</concept_desc>
<concept_significance>300</concept_significance>
</concept>
<concept>
<concept_id>10003752.10003777.10003778</concept_id>
<concept_desc>Theory of computation~Complexity classes</concept_desc>
<concept_significance>100</concept_significance>
</concept>
</ccs2012>
\end{CCSXML}

\ccsdesc[500]{Theory of computation~Automated reasoning}
\ccsdesc[500]{Theory of computation~Verification by model checking}
\ccsdesc[500]{Theory of computation~Program verification}
\ccsdesc[500]{Theory of computation~Program analysis}
\ccsdesc[300]{Theory of computation~Logic and verification}
\ccsdesc[100]{Theory of computation~Complexity classes}

\keywords{String Constraints, ReplaceAll, Decision Procedures, Constraint Solving, Straight-Line Programs}


\begin{abstract}

The theory of strings with concatenation has been widely argued as the basis of
constraint solving for verifying string-manipulating programs. However, this
theory is far from adequate for expressing many string constraints that are
also needed in practice; for example, the use of regular constraints (pattern matching
against a regular expression), and the string-replace function (replacing
either the first occurrence or all occurrences of a ``pattern'' string
constant/variable/regular expression by a ``replacement'' string
constant/variable), among many others. Both regular constraints and the
string-replace function are crucial for such applications as analysis of
JavaScript (or more generally HTML5 applications) against cross-site scripting
(XSS) vulnerabilities, which motivates us to consider a richer class of string
constraints. The importance of the string-replace function (especially the
replace-all facility) is increasingly recognised, which can be witnessed by the
incorporation of the function in the input languages of several string
constraint solvers. 

Recently, it was shown that any theory of strings containing the string-replace
function (even the most restricted version where pattern/replacement strings
are both constant strings) becomes undecidable if we do not impose some kind of
straight-line (aka acyclicity) restriction on the formulas. Despite this,
the straight-line restriction is still practically sensible since this condition is  
typically met by string constraints that are generated by symbolic   execution.
In this paper, we provide the first systematic study of straight-line string 
constraints with the string-replace function and the regular constraints as the 
basic operations. We show that a large class of such constraints (i.e. when
only a constant string or a regular expression is permitted in the
pattern) is decidable. We note that the string-replace function, even under
this restriction, is sufficiently powerful for expressing the concatenation
    operator and much more (e.g. extensions of regular expressions with string variables).
This gives us the most expressive decidable logic containing concatenation,
replace, and regular constraints under the same umbrella.
Our decision procedure for the straight-line fragment follows an
    automata-theoretic approach, and is modular in the sense that the string-replace terms are removed one by one to generate more and more regular constraints, which can then be discharged by the state-of-the-art string constraint solvers. 
We also show that this fragment is, in a way, a maximal decidable subclass of
the straight-line fragment with string-replace and regular constraints.
To this end, we show undecidability results for the following two
extensions: (1) variables are permitted in the pattern parameter of
    the replace function, (2) length constraints are permitted.

    \OMIT{
    We also delineate the boundary of decidability by
    showing undecidability in the case of 
}
\OMIT{
    the theoretical foundation of string constraints for verifying 
    string-manipulating programs and 
    offer a different viewpoint:  
    the string-replace function and regular constraints (i.e. 
    not concatenation) should be the basic operations. 

We first note that 
the most general version of the string-replace function (where the replacements are string
variables) 
is sufficiently powerful to express the concatenation operator, 
solving such constraints is undecidable in general. 
}
\OMIT{
We then impose a straight-line restriction on the formulas (a shape of
formulas typically generated by symbolic execution), and show that decidability
can be recovered for a large subclass of the resulting constraints, namely as
    long as the pattern string is not a variable (which is again undecidable).

As a special subcase, we obtain the decidability of 

In addition, we show that adding either integer constraints, character constraints, or constraints involving the IndexOf function to the straight-line fragment leads to undecidability again.
}


    \OMIT{
Our goal in this paper is to investigate extensively the decidability and complexity of the satisfiability problem of string constraints with the function $\replaceall$. We show that while it is undecidable in general, the satisfiability problem for the straight-line fragment is in EXPSPACE, by following an automata-theoretical approach.
}
\end{abstract}

\maketitle


\section{Introduction}
\label{sec:intro}

The problem of 
automatically solving string constraints (aka satisfiability of logical theories over
strings) has recently witnessed renewed interests 
\cite{Berkeley-JavaScript,TCJ16,LB16,YABI14,S3,Abdulla14,Abdulla17,DV13,symbolic-transducer,BEK,HAMPI,cvc4,Z3-str,fang-yu-circuits,BTV09} 
because of important applications in the analysis of string-manipulating  programs. For example,
program analysis techniques like symbolic execution
\cite{king76,DART,EXE,jalangi} 
would
systematically explore executions in a program and collect symbolic path 
constraints, which could then be solved using a constraint solver and
used to determine which location in the program to continue exploring.
To successfully apply a constraint solver in this instance, it is
crucial that the constraint language precisely models the data types in the
program, along with the data-type operations used. In the context of
string-manipulating programs, this could include 
concatenation, regular constraints (i.e. pattern matching against a regular
expression), string-length functions, and the string-replace functions, among 
many others.

Perhaps the most well-known theory of strings for such applications as the
analysis of string-manipulating programs is the theory of strings with concatenation  (aka \emph{word equations}), whose decidability was shown by Makanin \cite{Makanin} in 1977 after it was open for many years. More importantly, 
this theory remains decidable even when regular constraints are incorporated into the 
language \cite{Schulz}. However, whether adding the string-length function
 preserves the decidability remains a long-standing open problem
\cite{Vijay-length,buchi}. 

Another important string operation---especially in popular scripting
languages like Python, JavaScript, and PHP---is the \emph{string-replace function}, 
which may be used to replace either the \emph{first} occurrence or
\emph{all} occurrences of a string (a string constant/variable, or a regular expression) by 
another string (a string constant/variable). The replace function (especially 
the replace-all functionality) is omnipresent in HTML5 applications
\cite{LB16,TCJ16,YABI14}. 
For example, a standard industry defense against cross-site scripting 
(XSS) vulnerabilities includes sanitising untrusted strings before adding them
into the DOM (Document Object Model) or the HTML document. 
This is typically done by 
various metacharacter-escaping mechanisms (see, for instance, 
\cite{Kern14,BEK,OWASP-XSS}). An example of such a mechanism is backslash-escape, which replaces \emph{every
occurrence} of quotes and double-quotes (i.e. \verb+'+ and \verb+"+) in the
string by \verb+\'+ and \verb+\"+. 
In addition to sanitisers, common JavaScript functionalities like \texttt{document.write()} 
and \texttt{innerHTML} apply an \emph{implicit browser transduction} --- which
decodes HTML codes (e.g. \verb+&#39;+ is replaced by \verb+'+) in the input 
string --- before inserting the input string into the DOM.
Both of these examples can be expressed by (perhaps multiple) 
applications of the string-replace function.
Moreover, although these examples replace constants by constants, the popularity of template systems such as Mustache~\cite{Mustache} and Closure Templates~\cite{Closure} demonstrate the need for replacements involving variables.
Using Mustache, a web-developer, for example, may define an HTML fragment with placeholders that is instantiated with user data during the construction of the delivered page.





\begin{example}
We give a simple example demonstrating a (naive) XSS vulnerability to illustrate the use of string-replace functions.
Consider the HTML fragment below.
\begin{minted}{html}
   <h1> User <span onMouseOver="popupText('{{bio}}')">{{userName}}</span> </h1>
\end{minted}
This HTML fragment is a template as might be used with systems such as Mustache to display a user on a webpage.
For each user that is to be displayed -- with their username and biography stored in variables \emph{user} and \emph{bio} respectively -- the string \verb+{{userName}}+ will be replaced by \emph{user} and the string \verb+{{bio}}+ will be replaced by \emph{bio}.
For example, a user \verb+Amelia+ with biography \verb+Amelia was born in 1979...+ would result in the HTML below.
\begin{minted}{html}
   <h1> User 
        <span onMouseOver="popupText('Amelia was born in 1979...')">
            Amelia </span> </h1>
\end{minted}
This HTML would display \verb+User Amelia+, and, when the mouse is placed over \verb+Amelia+, her biography would appear, thanks to the \verb+onMouseOver+ attribute in the \verb+span+ element.

Unfortunately, this template could be insecure if the user biography is not adequately sanitised: 
A user could enter a malicious biography, such as \verb+'); alert('Boo!'); alert('+ which would cause the following instantiation of the \verb+span+ element\footnote{
	Readers familiar with Mustache and Closure Templates may expect single quotes to be automatically escaped.
	However, we have tested our example with the latest versions of mustache.js~\cite{MustacheJS} and Closure Templates~\cite{Closure} (as of July 2017) and observed that the exploit is not disarmed by their automatic escaping features.
}.
\begin{minted}{html}
    <span onMouseOver="popupText(''); alert('Boo!'); alert('')">
\end{minted}
Now, when the mouse is placed over the user name, the malicious JavaScript \verb+alert('Boo!')+ is executed.

The presence of such malicious injections of code can be detected using string constraint solving and XSS \emph{attack patterns} given as regular expressions~\cite{BCFJKKV08,Berkeley-JavaScript,YABI14}.
For our example, given an attack pattern $P$ and template $temp$, we would generate the constraint
\[
    x_1 = \replaceall(temp, \verb+{{userName}}+, \mathit{user})
    \land
    x_2 = \replaceall(x_1, \verb+{{bio}}+, \mathit{bio})
    \land
    x_2 \in P
\]
which would detect if the HTML generated by instantiating the template is
    susceptible to the attack identified by $P$. \qed
\end{example}

In general, the string-replace function has three parameters, and in the current mainstream language such as Python and JavaScript, \emph{all of the three parameters can be inserted as string variables}. As result, when we perform program analysis for, for instance, detecting security vulnerabilities as described above, one often obtains string constraints of the form $z= \replaceall(x, p, y)$, where $x,y$ are string constants/variables, and $p$ is either a string constant/variable or a regular expression.
Such a constraint means that $z$ is obtained by replacing all occurrences of $p$
in $x$ with $y$. For convenience, we call $x, p, y$ as the \emph{subject}, the
\emph{pattern}, and the \emph{replacement} parameters respectively. 




The $\replaceall$ function is a powerful string operation that goes beyond the 
expressiveness of concatenation. (On the contrary, as we will see later, concatenation can be expressed by the $\replaceall$ function easily.) 
It was shown in a recent POPL paper \cite{LB16} that any theory of strings containing the 
string-replace function (even the most restricted version where 
pattern/replacement strings are both constant strings) becomes undecidable if 
we do not impose some kind of
\emph{straight-line restriction}\footnote{Similar notions that appear in the 
literature of string constraints (without replace) include acyclicity 
\cite{Abdulla14} and solved form \cite{Vijay-length}} on 
the formulas. Nonetheless, as already noted in \cite{LB16},
the straight-line restriction is reasonable since it
is typically satisfied by constraints that are generated by symbolic 
execution, e.g., all constraints in the standard Kaluza benchmarks
\cite{Berkeley-JavaScript} with 50,000+ test cases
generated by symbolic execution on JavaScript applications were
noted in \cite{Vijay-length} to satisfy this condition. 
Intuitively, as elegantly described in \cite{BTV09}, constraints from symbolic 
execution on string-manipulating programs can be viewed as the problem of path 
feasibility over loopless string-manipulating programs $S$ with variable 
assignments and assertions, i.e., generated by the grammar
\begin{equation*}
    S ::= y := f(x_1,\ldots,x_n) \ |\ \text{\ASSERT{$g(x_1,\ldots,x_n)$}}\ |\ 
            S_1; S_2\ 
\end{equation*}
where $f: (\Sigma^*)^n \to \Sigma^*$ and $g: (\Sigma^*)^n \to \{0,1\}$ are
some string functions.
Straight-line programs with assertions can be obtained by turning such programs 
into a Static Single Assignment (SSA) form (i.e. introduce a new variable 
on the left hand side of each assignment).
\OMIT{
As a matter of fact, it was shown in~\cite{LB16} that any 
incorporating a simple form of the $\replaceall$ function, where the pattern and the replacement are both string constants, into the theory of concatenations already results in an undecidable theory of strings.  
Therefore, it is challenging to reason about the string-replace function in its general form. 
}
A partial decidability result 
can be deduced from \cite{LB16}
for the straight-line fragment of the theory of strings, where (1) $f$ in the 
above
grammar is either a concatenation of string constants and variables, or 
the $\replaceall$ function where \emph{the 
pattern and the replacement are both string constants}, and (2) $g$ is
a boolean combination of regular constraints.
In fact, the decision procedure therein admits finite-state transducers, which
subsume only the aforementioned simple form of the $\replaceall$ function.
The decidability boundary of the straight-line fragment involving the 
$\replaceall$ function in its general form (e.g., 
when the replacement parameter is a variable) remains open.

\paragraph{Contribution.} We investigate the decidability boundary of the theory
$\strline[\replaceall]$ of strings involving 
the $\replaceall$ function and regular constraints, with the straight-line
restriction introduced in \cite{LB16}. We provide a decidability result for a 
large fragment of $\strline[\replaceall]$, which is sufficiently powerful to
express the concatenation operator. We show that this decidability result is in a sense maximal
by showing that several important natural extensions of the logic result in undecidability.
We detail these results below:

\OMIT{
We first show that---as mentioned earlier---with the $\replaceall$ function in its general form, the concatenation operation is in fact \emph{redundant}, in the sense that it can be simulated by the $\replaceall$ function. This motivates us to consider a theory of strings where the $\replaceall$ function, instead of the concatenation operation, and the regular constraints, are the basic modalities. We focus on the straight-line fragment of this theory, denoted by $\strline[\replaceall]$.
}

\begin{itemize}
\item If the pattern parameters of the $\replaceall$ function are allowed to be variables, then the satisfiability of $\strline[\replaceall]$ is undecidable (cf. Proposition~\ref{prop-und-pat-var}).
\item If the pattern parameters of the $\replaceall$ function are regular
    expressions, then the satisfiability of $\strline[\replaceall]$ is decidable
        and in EXPSPACE (cf. Theorem~\ref{thm-main}). In addition, we show that
        the satisfiability problem is PSPACE-complete for several cases that are
        meaningful in practice (cf. Corollary~\ref{cor-pspace}). This strictly
        generalises the decidability result in \cite{LB16} of the straight-line 
        fragment with concatenation, regular constraints, and the $\replaceall$ 
        function where
        patterns/replacement parameters are constant strings.
\item If $\strline[\replaceall]$, where the pattern parameter of the
    $\replaceall$ function is a constant letter, is extended with the 
        string-length constraint, then satisfiability becomes undecidable
        again. In fact, this undecidability can be obtained with
        either integer constraints, character constraints, or constraints 
        involving the $\indexof$ function
        (cf. Theorem~\ref{thm-ext-int} and 
        Proposition~\ref{prop-ext-ch-index}).
\end{itemize}

Our decision procedure for $\strline[\replaceall]$ where the pattern parameters
of the $\replaceall$ function are regular expressions follows an 
automata-theoretic approach. The key idea can be illustrated as follows. Let 
us consider the simple formula $C \equiv x = \replaceall(y, a, z) \wedge x \in e_1 \wedge y \in e_2 \wedge z \in e_3$. 
Suppose that $\cA_1,\cA_2,\cA_3$ are the nondeterministic finite state automata corresponding to $e_1,e_2,e_3$ respectively. 
We effectively eliminate the use of $\replaceall$ by nondeterministically
generating from $\cA_1$ a new regular constraint $\cA'_2$ for $y$ as well as a new regular constraint
$\cA'_3$ for $z$.   These constraints incorporate the effect of the
$\replaceall$ function (i.e.\ all regular constraints are on the ``source'' variables).
Then, the satisfiability of $C$ is turned into testing the nonemptiness of the intersection of $\cA_2$ and $\cA'_2$, as well as the nonemptiness of the intersection of $\cA_3$ and $\cA'_3$. When there are multiple occurrences of the $\replaceall$ function, this process can be iterated. 
Our decision procedure enjoys the following advantages:
\begin{itemize}
	\item It is automata-theoretic and built on clean automaton constructions, 
        moreover, when the formula is satisfiable, a solution can be 
        synthesised. For example, in the aforementioned XSS vulnerability detection example, one can synthesise the values of the variables $user$ and $bio$ for a potential attack. 
	\item The decision procedure is modular in
        that the $\replaceall$ terms are removed one by one to
        generate more and more regular constraints (emptiness of the
        intersection of regular constraints could be efficiently handled by
        state-of-the-art solvers like \cite{fang-yu-circuits}).  
    \item The decision procedure requires exponential space (thus double
        exponential time), but under assumptions that are reasonable in practice, 
        the decision procedure uses only polynomial space, which is not
        worse than other string logics (which can encode the PSPACE-complete
        problem of checking emptiness of the intersection of regular 
        constraints). 
\end{itemize}

%

\paragraph{Organisation.} 
This paper is organised as follows: Preliminaries are given in
Section~\ref{sec-prel}. The core string language is defined in
Section~\ref{sec-core}. The main results of this paper are summarised in
Section~\ref{sec-sat}. The decision procedure is presented in
Section~\ref{sec:replaceallsl}-\ref{sec:replaceallre}, case by case. The
extensions of the core string language are investigated in
Section~\ref{sec-ext}. The related work can be found in Section~\ref{sec-rel}.
The \shortlong{full version}{appendix} contains missing proofs and additional
examples.


\section{Preliminaries}\label{sec-prel}

\paragraph{General Notation} 
Let $\mathbb{Z}$ and $\Nat$ denote the set of integers and natural numbers respectively. For $k \in \Nat$, let $[k] = \{1,\cdots, k\}$. For a vector $\vec{x}=(x_1,\cdots, x_n)$, let $|\vec{x}|$ denote the length of $\vec{x}$ (i.e., $n$) and  $\vec{x}[i]$ denote $x_i$ for each $i \in [n]$. 

\paragraph{Regular Languages}
Fix a finite \emph{alphabet} $\Sigma$. Elements in $\Sigma^*$ are called \emph{strings}. Let $\varepsilon$ denote the empty string and  $\Sigma^+ = \Sigma^* \setminus \{\varepsilon\}$. We will use $a,b,\cdots$ to denote letters from $\Sigma$ and $u, v, w, \cdots$ to denote strings from $\Sigma^*$. For a string $u \in \Sigma^*$, let $|u|$ denote the \emph{length} of $u$ (in particular, $|\varepsilon|=0$). A \emph{position} of a nonempty string $u$ of length $n$ is a number $i \in [n]$ (Note that the first position is $1$, instead of  0). In addition, for $i \in [|u|]$, let $u[i]$ denote the $i$-th letter of $u$. 
For two strings $u_1, u_2$, we use $u_1 \cdot u_2$ to denote the \emph{concatenation} of $u_1$ and $u_2$, that is, the string $v$ such that $|v|= |u_1| + |u_2|$ and for each $i \in [|u_1|]$, $v[i]= u_1[i]$ and for each $i \in |u_2|$, $v[|u_1|+i]=u_2[i]$. Let $u, v$ be two strings. If $v = u \cdot v'$ for some string $v'$, then $u$ is said to be a \emph{prefix} of $v$. In addition, if $u \neq v$, then $u$ is said to be a \emph{strict} prefix of $v$. If $u$ is a prefix of $v$, that is, $v = u \cdot v'$ for some string $v'$, then 
we use $u^{-1} v$ to denote $v'$. In particular, $\varepsilon^{-1} v = v$.

A \emph{language} over $\Sigma$ is a subset of $\Sigma^*$. We will use $L_1, L_2, \dots$ to denote languages. For two languages $L_1, L_2$, we use $L_1 \cup L_2$ to denote the union of $L_1$ and $L_2$, and $L_1 \cdot L_2$ to denote the concatenation of $L_1$ and $L_2$, that is, the language $\{u_1 \cdot u_2 \mid u_1 \in L_1, u_2 \in L_2\}$. For a language $L$ and $n \in \Nat$, we define $L^n$, the \emph{iteration} of $L$ for $n$ times, inductively as follows: $L^0=\{\varepsilon\}$ and $L^{n} =L \cdot L^{n-1}$ for $n > 0$. We also use $L^*$ to denote the iteration of $L$ for arbitrarily many times, that is, $L^* = \bigcup \limits_{n \in \Nat} L^n$. Moreover, let $L^+ = \bigcup \limits_{n \in \Nat \setminus \{0\}} L^n$.

\begin{definition}[Regular expressions $\regexp$]
	\[e \eqdef \emptyset \mid \varepsilon \mid a \mid e + e \mid e \concat e \mid e^*, \mbox{ where } a \in \Sigma. \]
	Since $+$ is associative and commutative, we also write $(e_1 + e_2) + e_3$ as $e_1 + e_2 + e_3$ for brevity. We use the abbreviation $e^+ \equiv e \concat e^*$. Moreover, for $\Gamma = \{a_1, \cdots, a_n\}\subseteq \Sigma$, we use the abbreviations $\Gamma \equiv a_1 + \cdots + a_n$ and $\Gamma^\ast \equiv (a_1 + \cdots + a_n)^\ast$. 
\end{definition}
We define $\Ll(e)$ to be the language defined by $e$, that is, the set of strings that match $e$, inductively as follows: $\Ll(\emptyset) =\emptyset$,
$\Ll(\varepsilon) =\{\varepsilon\}$,
%
$\Ll(a)= \{a\}$,
%
$\Ll(e_1 + e_2) = \Ll(e_1) \cup \Ll(e_2)$,
%
$\Ll(e_1 \concat e_2) = \Ll(e_1) \cdot \Ll(e_2)$,
%
$\Ll(e_1^*)=(\Ll(e_1))^*$.
In addition, we use $|e|$ to denote the number of symbols occurring in $e$.

A \emph{nondeterministic finite automaton} (NFA) $\cA$ on $\Sigma$ is a tuple $(Q, \delta, q_0, F)$, where $Q$ is a finite set of \emph{states}, $q_0 \in Q$ is the \emph{initial} state, $F \subseteq Q$ is the set of \emph{final} states, and $\delta \subseteq Q \times \Sigma \times Q$ is the \emph{transition relation}. For a string $w = a_1 \dots a_n$, a \emph{run} of $\cA$ on $w$ is a state sequence $q_0 \dots q_n$ such that for each $i \in [n]$, $(q_{i-1}, a_i, q_i) \in \delta$. A run $q_0 \dots q_n$ is \emph{accepting} if $q_n \in F$. A string $w$ is \emph{accepted} by $\cA$ if there is an accepting run of $\cA$ on $w$. We use $\Ll(\cA)$ to denote the language defined by $\cA$, that is, the set of strings accepted by $\cA$. We will use $\cA, \cB, \cdots$ to denote NFAs. 
For a string $w= a_1 \dots a_n$, we also use the notation $q_1 \xrightarrow[\cA]{w} q_{n+1}$ to denote the fact that there are $q_2,\dots, q_n \in Q$ such that for each $i \in [n]$, $(q_i, a_i, q_{i+1}) \in \delta$.  For an NFA $\cA=(Q, \delta, q_0, F)$ and $q, q' \in Q$, we use $\cA(q,q')$ to denote the NFA obtained from $\cA$ by changing the initial state to $q$ and the set of final states to $\{q'\}$. The \emph{size} of an NFA $\cA=(Q, \delta, q_0, F)$, denoted by $|\cA|$, is defined as $|Q|$, the number of states. For convenience, we will also call an NFA without initial and final states, that is, a pair $(Q, \delta)$, as a \emph{transition graph}. 

It is well-known (e.g. see \cite{HU79}) that regular expressions and NFAs are 
expressively equivalent, and generate precisely all \emph{regular languages}.
In particular, from a regular expression, an equivalent NFA can be constructed 
in linear time. Moreover, regular languages are closed under Boolean
operations, i.e., union, intersection, and complementation.
In particular, given two NFA $\cA_1=(Q_1, \delta_1, q_{0,1}, F_1)$ and
$\cA_2=(Q_2, \delta_2, q_{0,2}, F_2)$ on $\Sigma$, the intersection $\Ll(\cA_1)
\cap \Ll(\cA_2)$ is recognised by the \emph{product automaton} $\cA_1 \times
\cA_2$ of $\cA_1$ and $\cA_2$ defined as $(Q_1 \times Q_2, \delta, (q_{0,1}, q_{0,2}), F_1 \times F_2)$, where $\delta$ comprises the transitions $((q_1, q_2), a, (q'_1, q'_2))$ such that $(q_1, a, q'_1) \in \delta_1$ and $(q_2, a, q'_2) \in \delta_2$.  

\paragraph{Graph-Theoretical Notation}
A DAG (\emph{directed acyclic graph}) $G$ is a finite directed graph $(V, E)$ with
no directed cycles, where $V$ (resp.~$E \subseteq V \times V$) is a set of vertices (resp.~edges).
Equivalently, a DAG is a directed graph that has a topological ordering, which
is a sequence of the vertices such that every edge is directed from an earlier 
vertex to a later vertex in the sequence. An edge $(\mathit{v},\mathit{v'})$ in
$G$ is called an \emph{incoming} edge of $\mathit{v'}$ and an \emph{outgoing}
edge of $\mathit{v}$. If $(\mathit{v},\mathit{v'}) \in E$, then $\mathit{v'}$ is
called a \emph{successor} of $\mathit{v}$ and $\mathit{v}$ is called a
\emph{predecessor} of $\mathit{v'}$. A \emph{path} $\pi$ in $G$ is a sequence
$\mathit{v}_0 \mathit{e}_1 \mathit{v}_1 \cdots \mathit{v}_{n-1} \mathit{e}_n
\mathit{v}_n$ such that for each $i \in [n]$, we have $\mathit{e}_i =
(\mathit{v}_{i-1},\mathit{v}_i) \in E$. The \emph{length} of the path $\pi$
is the number $n$ of edges in $\pi$. If there is a path from
$\mathit{v}$ to $\mathit{v'}$ (resp. from $\mathit{v'}$ to $\mathit{v}$) in $G$,
then $\mathit{v'}$ is said to be \emph{reachable} (resp. \emph{co-reachable})
from $\mathit{v}$ in $G$. If $\mathit{v}$ is reachable from $\mathit{v'}$ in
$G$, then $\mathit{v'}$ is also called an \emph{ancestor} of $\mathit{v}$ in
$G$. In addition, an edge $(\mathit{v'},\mathit{v''})$ is said to be reachable 
(resp. co-reachable) from $\mathit{v}$ if $\mathit{v'}$ is reachable from $\mathit{v}$ (resp. $\mathit{v''}$ is co-reachable from $\mathit{v}$). The \emph{in-degree} (resp. \emph{out-degree}) of a vertex $\mathit{v}$ is the number of incoming (resp. outgoing) edges of $\mathit{v}$. 
A \emph{subgraph} $G'$ of $G=(V,E)$ is a directed graph $(V', E')$ with
$V' \subseteq V$ and $E' \subseteq E$. Let $G'$ be a subgraph of $G$. Then $G \setminus G'$ is the graph obtained from $G$ by removing all the edges in $G'$. 
\anthony{Is this supposed to be ``removing all edges and vertices of $G'$?}

\hide{
\begin{definition}[Diamond and diamond graph]
Let $G=(V,E)$  be a DAG and $v,v' \in V$ with $v \neq v'$. Then a diamond in $G$ from $v$ to $v'$ is a pair of paths $\pi_1, \pi_2$ from $v$ to $v'$ such that  $\pi_1$ and $\pi_2$ are vertex-disjoint, except $v$ and $v'$. The diamond graph of $G$, denoted by $\cG_{\sf dmd}(G)$, is the graph $(G, E \cup E')$, where $E'$ comprises the pairs $(v,v')$ such that $v\neq v'$ and there is a diamond from $v$ to $v'$. The edges in $E'$ are called the diamond edges of $\cG_{\sf dmd}(G)$.
\end{definition}

\begin{definition}[Diamond index]
Let $G=(V,E)$ be a DAG and $\cG_{\sf dmd}(G)$ be the diamond graph of $G$. Then the diamond index of $G$, denoted by $\dmdidx(G)$, is the maximum number of diamond edges in a path of $\cG_{\sf dmd}(G)$.
\end{definition}

\begin{example}
Example for diamond index
\end{example}

\begin{proposition}\label{prop-num-path}
Let $G=(V,E)$ be a DAG such that the out-degree of each vertex is at most two. Then there are $n^{O(\dmdidx(G))}$ different paths\footnote{two paths are different iff the set of edges in the two paths are different.}  in $G$.
\end{proposition}
}

\paragraph{Computational Complexity}
In this paper, we study not only decidability but also the complexity of string logics. 
In particular, we shall deal with the following computational complexity
classes (see \cite{HU79} for more details): PSPACE (problems solvable in polynomial
space and thus in exponential time), and EXPSPACE (problems solvable
in exponential space and thus in double exponential time). Verification
problems that have complexity PSPACE or beyond (see \cite{BK08}
for a few examples) have substantially benefited from techniques
such as symbolic model checking \cite{McMillan}.


\section{The core constraint language}\label{sec-core}

In this section, we define a general string constraint language that supports 
concatenation, the $\replaceall$ function, and regular constraints. Throughout this section, we fix an alphabet $\Sigma$.

\subsection{Semantics of the $\replaceall$ Function}
To define the semantics of the $\replaceall$ function, we note that the function encompasses three parameters: the first parameter is the \emph{subject} string, the second parameter is a \emph{pattern} that is a string or a regular expression, and the third parameter is the \emph{replacement} string. When the pattern parameter is a string, the semantics is somehow self-explanatory. However, when it is a regular expression, there is no consensus on the semantics even for the mainstream programming languages such as Python and Javascript.
 This is particularly the case when interpreting the union (aka alternation) operator in regular expressions or performing a $\replaceall$ with a pattern that matches $\varepsilon$. In this paper, we mainly 
 focus on the semantics of \emph{leftmost and longest matching}.
Our handling of $\varepsilon$ matches is consistent with our testing of the implementation in Python and the \texttt{sed} command with the \texttt{--posix} flag.
We also assume union is commutative (e.g.\ $\replaceall(aa, a + aa, b) = \replaceall(aa, aa + a, b) = b$) as specified by POSIX, but often ignored in practice (where $bb$ is a common result in the former case).

\hide{
\begin{definition}
Let $u, v$ be two strings such that $v = v_1 u v_2$ for some $v_1,v_2$ and $e$ be a regular expression such that $\varepsilon \not \in \Ll(e)$. We say that $u$ is the \emph{leftmost and longest} matching of $e$ in $v$ if the following two conditions hold
\begin{enumerate}
	\item leftmost: $u \in \Ll(e)$,  and $(v'_1)^{-1} v \not \in  \Ll(e \concat \Sigma^*)$ for every strict prefix $v'_1$ of $v_1$, 
	\item longest: for every nonempty prefix $v'_2$ of $v_2$, $u \cdot v'_2 \not \in \Ll(e)$.
\end{enumerate} 
\end{definition}
}

\begin{definition}
Let $u, v$ be two strings such that $v = v_1 u v_2$ for some $v_1,v_2$ and $e$ be a regular expression. We say that $u$ is the \emph{leftmost and longest} matching of $e$ in $v$ if one of the following two conditions hold,
\begin{itemize}
\item case $\varepsilon \not \in \Ll(e)$:
\begin{enumerate}
	\item leftmost: $u \in \Ll(e)$,  and $(v'_1)^{-1} v \not \in  \Ll(e \concat \Sigma^*)$ for every strict prefix $v'_1$ of $v_1$, 
	\item longest: for every nonempty prefix $v'_2$ of $v_2$, $u \cdot v'_2 \not \in \Ll(e)$.
\end{enumerate} 
\item case $\varepsilon \in \Ll(e)$:
\begin{enumerate}
	\item leftmost: $u \in \Ll(e)$, and $v_1 = \varepsilon$, 
	\item longest: for every nonempty prefix $v'_2$ of $v_2$, $u \cdot v'_2 \not \in \Ll(e)$.
\end{enumerate} 
\end{itemize}
\end{definition}

\begin{example}
Let us first consider $\Sigma = \{0,1\}$, $v=1010101$, $v_1 =1$, $u = 010$, $v_2 = 101$, and $e = 0^*01(0^*+ 1^*)$. Then $v= v_1 u v_2$, and the leftmost and longest matching of $e$ in $v$ is $u$. This is because $u \in \Ll(e)$, $\varepsilon^{-1} v = v \not \in \Ll(e \concat \Sigma^*)$ (notice that $v_1$ has only one strict prefix, i.e. $\varepsilon$), and none of $u 1=0101$, $u 10=01010$, and $u101=010101$ belong to $\Ll(e)$ (notice that $v_2$ has three nonempty prefixes, i.e. $1,10,101$). For another example, let us consider $\Sigma = \{a,b,c\}$, $v=baac$, $v_1 = \varepsilon$, $u =\varepsilon$, $v_2 = v$, and $e = a^*$. Then $v = v_1 u v_2$ and  the leftmost and longest matching of $e$ in $v$ is $u$. This is because $u \in \Ll(e)$, $v_1 = \varepsilon$, and $b, ba, baa, baac \not \in \Ll(e)$. On the other hand, similarly, one can verify that the leftmost and longest matching of $e=a^*$ in $v=aac$ is $u=aa$.
\end{example}

\hide{
\begin{definition} \label{def:replaceall}
The semantics of $\replaceall(u, e, v)$, where $u, v$ are strings and $e$ is a regular expression, is defined inductively as follows:
\begin{itemize}
%
%
	\item if $u \not \in \Ll(\Sigma^\ast \concat e \concat \Sigma^\ast)$, that is, $u$ does \emph{not} contain any substring from $\Ll(e)$, then $\replaceall(u, e, v) = u$, 
	\item otherwise, let $u = u_1 \cdot u_2 \cdot u_3$ such that $u_2$ is the \emph{leftmost and longest} matching of $e$ in $u$, then 
	\begin{itemize}
	\item if $u_2 \neq \varepsilon$, then $\replaceall(u, e, v) = u_1 \cdot v \cdot \replaceall(u_3, e, v)$,
	\item otherwise (in this case, $u_2 = \varepsilon \in \Ll(e)$, thus $u_1 = \varepsilon$ as well), if $u = \varepsilon$, then $\replaceall(u, e, v) = v$, otherwise,  let $u = a \cdot u'$ for some $a \in \Sigma$ and $u' \in \Sigma^*$, then $\replaceall(u, e, v) = v \cdot a \cdot \replaceall(u', e, v)$. 
	\end{itemize}
\end{itemize}
\end{definition}
}

\begin{definition} \label{def:replaceall}
The semantics of $\replaceall(u, e, v)$, where $u, v$ are strings and $e$ is a regular expression, is defined inductively as follows:
\begin{itemize}
	\item if $u \not \in \Ll(\Sigma^\ast \concat e \concat \Sigma^\ast)$, that is, $u$ does \emph{not} contain any substring from $\Ll(e)$, then $\replaceall(u, e, v) = u$, 
	\item otherwise, 
	\begin{itemize}
	\item if $\varepsilon \in \Ll(e)$ and $u$ is the leftmost and longest matching of $e$ in $u$, then $\replaceall(u, e, v) = v$,
	\item if $\varepsilon \in \Ll(e)$, $u = u_1 \cdot  a \cdot  u_2$, $u_1$ is the leftmost and longest matching of $e$ in $u$, and $a \in \Sigma$, then $\replaceall(u, e, v) = v \cdot a \cdot \replaceall(u_2, e, v)$,
	\item if $\varepsilon \not \in \Ll(e)$, $u = u_1\cdot  u_2\cdot  u_3$, and $u_2$ is the leftmost and longest matching of $e$ in $u$,  then $\replaceall(u, e, v) = u_1 \cdot v \cdot \replaceall(u_3, e, v)$. 
	\end{itemize}
\end{itemize}
\end{definition}

\begin{example}
	
At first, $\replaceall(abab, ab, d) =d \cdot \replaceall(ab, ab, d)= dd \cdot \replaceall(\epsilon, ab, d)=dd \cdot \varepsilon = dd$ and $\replaceall(baac, a^+, b)=bbc$. In addition, $\replaceall(aaaa, ``", d) =dadadadad$ and $\replaceall(baac, a^*, b) =bbbcb$.  The argument for $\replaceall(baac, a^*, b) = bbbcb$ proceeds as follows: The leftmost and longest matching of $a^*$ in $baac$ is $u_1 =\varepsilon$, where $baac = u_1 \cdot b \cdot u_2$ and $u_2 = aac$. Then $\replaceall(baac, a^*, b) = b \cdot b \cdot \replaceall(aac, a^*, b)$. Since $aa$ is the leftmost and longest matching of $a^*$ in $aac$, we have $\replaceall(aac, a^*, b) = b \cdot c \cdot \replaceall(\varepsilon, a^*, b) = bcb$. Therefore, we get $\replaceall(baac, a^*, b) = bbbcb$. (The readers are invited to test this in Python and \texttt{sed}.)

\end{example}


\subsection{Straight-Line String Constraints With the $\replaceall$ Function}

We consider the String data type $\str$, and assume a countable set of variables
$x, y, z, \cdots$ of $\str$.

\begin{definition}[Relational and regular constraints]
	Relational constraints and regular constraints are defined by the following rules,
	\[
	\begin{array}{r c l cr}
	s &\eqdef & x \mid u & \ \ & \mbox{(string terms)}\\
	p &\eqdef & x \mid e & \ \ & \mbox{(pattern terms)}\\
	\varphi &\eqdef & x = s \concat s  \mid  x = \replaceall(s, p, s) \mid \varphi \wedge \varphi & \ \ & \mbox{(relational constraints)}\\
	\psi & \eqdef & x \in e \mid \psi \wedge \psi 
	& \ \ & \mbox{(regular constraints)} \\
	\end{array}
	\]
	where $x$ is a string variable, $u \in \Sigma^\ast$ and $e$ is a regular expression over $\Sigma$. 

\end{definition}

For a formula $\varphi$ (resp. $\psi$), let $\vars(\varphi)$ (resp. $\vars(\psi)$) denote the set of variables occurring in $\varphi$ (resp. $\psi$). Given a relational constraint $\varphi$, a variable $x$ is called a \emph{source variable} of $\varphi$ if $\varphi$ \emph{does not} contain a conjunct of the form $x = s_1 \concat s_2$ or $x = \replaceall(-,-,-)$.

We then notice that, with the $\replaceall$ function in its general form, the concatenation operation is in fact redundant.

\begin{proposition}\label{prop-concat}
	The concatenation  operation ($\concat$) can be simulated  by the $\replaceall$ function.
\end{proposition}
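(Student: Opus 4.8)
The plan is to exhibit, using only $\replaceall$, regular constraints, and two fresh alphabet symbols, an equisatisfiable rewriting that removes every conjunct of the form $x = s_1 \concat s_2$. Fix two symbols $\#_1, \#_2 \notin \Sigma$ and work over the extended alphabet $\Sigma' = \Sigma \uplus \{\#_1, \#_2\}$. The core observation is the identity
\[
u \concat v \;=\; \replaceall\bigl(\replaceall(\#_1 \#_2,\ \#_1,\ u),\ \#_2,\ v\bigr)
\qquad\text{for all } u, v \in \Sigma^*,
\]
which I would verify directly from Definition~\ref{def:replaceall}. The inner call has pattern $\#_1$, a single letter with $\varepsilon \notin \Ll(\#_1)$; since the subject $\#_1\#_2$ contains the substring $\#_1$ exactly once, at position $1$, its leftmost--longest match is $\#_1$ itself, with empty prefix and suffix $\#_2$, so the inner call returns $u \cdot \replaceall(\#_2, \#_1, u)$; as $\#_2$ is a letter different from $\#_1$, the subject $\#_2$ contains no occurrence of $\#_1$ and $\replaceall(\#_2, \#_1, u) = \#_2$, giving $u\#_2$. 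For the outer call, because $u \in \Sigma^*$ and $\#_2 \notin \Sigma$, the subject $u\#_2$ contains $\#_2$ exactly once, at its last position; the leftmost--longest match is that final letter, with prefix $u$ and empty suffix, so the outer call returns $u \cdot v \cdot \replaceall(\varepsilon, \#_2, v) = u \cdot v$.

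Next I would lift this to the logic. Given a relational constraint $\varphi$, replace each conjunct $x = s_1 \concat s_2$ by $x' = \replaceall(\#_1\#_2, \#_1, s_1) \,\wedge\, x = \replaceall(x', \#_2, s_2)$, where $x'$ is a fresh variable (one per conjunct; the two markers can be reused across all conjuncts). To compensate for the alphabet extension, also conjoin $v \in \Sigma^*$ --- a regular constraint whose regular expression ranges over the \emph{original} alphabet --- for every variable $v$ occurring in the original formula. Call the resulting relational constraint $\varphi'$; it contains no $\concat$, and all original variables are pinned to $\Sigma^*$. This rewriting keeps the straight-line shape, since $x'$ depends only on $s_1$ and $x$ only on $x'$ and $s_2$; hence it also shows that $\strline[\replaceall]$ with concatenation reduces to $\strline[\replaceall]$ without it.

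Finally I would establish equisatisfiability (over $\Sigma'$ for $\varphi'$, over $\Sigma$ for $\varphi$). From a $\Sigma$-model of the original constraints, extend each $x'$ to the value $(\text{value of } s_1)\cdot\#_2$; by the identity above every new conjunct holds, the old conjuncts are untouched, and the added $v \in \Sigma^*$ constraints hold because the model is over $\Sigma$. Conversely, from a $\Sigma'$-model of $\varphi'$ together with the regular constraints, the conjuncts $v \in \Sigma^*$ force every original variable onto $\Sigma^*$, and then the identity shows that each pair $x' = \replaceall(\#_1\#_2, \#_1, s_1) \wedge x = \replaceall(x', \#_2, s_2)$ entails $x = s_1 \concat s_2$; restricting to the original variables yields a $\Sigma$-model of $\varphi$.

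The step I expect to need the most care is not mathematically deep: it is checking that the leftmost--longest matching semantics of Definition~\ref{def:replaceall} behaves exactly as one expects on these instances --- which it does precisely because single-letter patterns admit no matching ambiguity --- together with the bookkeeping around the alphabet extension. Omitting the $v \in \Sigma^*$ constraints would break the ``only-if'' direction, because a model over $\Sigma'$ could otherwise smuggle the fresh markers into the values of the original variables.
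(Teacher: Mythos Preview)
Your approach is correct and essentially identical to the paper's: the paper rewrites $x = s_1 \concat s_2$ as $x' = \replaceall(ab, a, s_1) \wedge x = \replaceall(x', b, s_2)$ with $a,b$ fresh letters, which is exactly your construction with $\#_1,\#_2$ in place of $a,b$. You supply considerably more detail than the paper's one-line proof --- in particular the verification of the identity from Definition~\ref{def:replaceall} and the explicit $v \in \Sigma^*$ constraints to pin original variables after the alphabet extension, a bookkeeping point the paper leaves implicit.
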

\begin{proof}
	It is sufficient to observe that 
	a relational constraint $x = s_1 \concat s_2$ can be rewritten as
	\[x' = \replaceall(ab, a, s_1) \wedge x = \replaceall(x', b, s_2),\] where $a,b$ are two fresh letters.
\end{proof}

In light of Proposition~\ref{prop-concat}, in the sequel, we will \emph{dispense the concatenation operator} mostly and focus on \textbf{the string constraints that involve  the $\replaceall$ function only}.

Another example to show the power of the $\replaceall$ function is that it can simulate the extension of regular expressions with string variables, which is  supported by the mainstream scripting languages like Python, Javascript, and PHP. For instance, $x \in y^*$ can be expressed by $x =\replaceall(x', a, y) \wedge x' \in a^*$, where $x'$ is a fresh variable and $a$ is a fresh letter.

\medskip

The generality of the constraint language makes it undecidable,
even in very simple cases. To retain decidability, we follow \cite{LB16} and focus on the ``straight-line fragment" of the language. This straight-line fragment captures the structure of straight-line string-manipulating
programs with the $\replaceall$ string operation.  

\begin{definition}[Straight-line relational constraints]
	A relational constraint $ \varphi$ with the $\replaceall$ function is straight-line, if $\varphi \eqdef \bigwedge \limits_{1 \le i \le m} x_i = P_i$ such that
	\begin{itemize}
		\item $x_1,\dots, x_m$ are mutually distinct,
		\item for each $i \in [m]$, all the variables in $P_i$ are either source variables, or variables from $\{x_1,\dots, x_{i-1}\}$,
	\end{itemize}
\end{definition}

\begin{remark}
Checking whether a relational constraint $\varphi$ is straight-line can be done in linear time. 
\end{remark}

\begin{definition}[Straight-line string constraints]
	A straight-line string constraint $C$ with the $\replaceall$ function (denoted by $\strline[\replaceall]$)  is defined as $ \varphi \wedge \psi$,  where 
	\begin{itemize}
		\item $\varphi$ is a straight-line relational constraint with the $\replaceall$ function,  and
		\item $\psi$ is a regular constraint.
	\end{itemize}
\end{definition}

\begin{example}
The following string constraint belongs to $\strline[\replaceall]$: 
$$C \equiv x_2 = \replaceall(x_1, 0, y_1) \wedge x_3 = \replaceall(x_2, 1, y_2) \wedge x_1 \in \{0,1\}^* \wedge y_1 \in 1^* \wedge y_2 \in 0^*.$$
\end{example}
 

\section{The satisfiability problem} \label{sec-sat}
In this paper, we focus on the satisfiability problem of $\strline[\replaceall]$, which is formalised as follows. 

\smallskip

\begin{quote} \centering
\framebox{Given an $\strline[\replaceall]$ constraint $C$, decide whether $C$ is satisfiable.}
\end{quote}
\smallskip

To approach this problem, we identify several fragments of  $\strline[\replaceall]$, depending on whether the pattern and the replacement parameters are constants or variables.  We shall investigate extensively the satisfiability problem of the fragments of $\strline[\replaceall]$. 

%
%
%
%

We begin with the case where the pattern parameters of the $\replaceall$ terms are variables. It turns out that in this case the satisfiability problem of $\strline[\replaceall]$ is undecidable.
The proof is by a reduction from Post's Correspondence Problem. Due to space
constraints we relegate the proof to \shortlong{the full
version}{Appendix~\ref{sec:prop-und-pat-var-proof}}.
%

\begin{proposition}\label{prop-und-pat-var}
The satisfiability problem of $\strline[\replaceall]$ is undecidable, if the pattern parameters of the $\replaceall$ terms are allowed to be variables.
\end{proposition}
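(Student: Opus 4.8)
The plan is to reduce from Post's Correspondence Problem (PCP). Recall a PCP instance is a list of pairs of nonempty words $(u_1,v_1),\dots,(u_n,v_n)$ over an alphabet $\Sigma_0$ with $|\Sigma_0|\ge 2$, and the question is whether there is a nonempty index word $\sigma=i_1\cdots i_k\in[n]^+$ with $u_{i_1}\cdots u_{i_k}=v_{i_1}\cdots v_{i_k}$; writing $h_u,h_v\colon[n]^*\to\Sigma_0^*$ for the morphisms $i\mapsto u_i$ and $i\mapsto v_i$, this is the same as asking for $\sigma\in[n]^+$ with $h_u(\sigma)=h_v(\sigma)$. The obstacle to simulating this inside $\strline[\replaceall]$ is that the straight-line shape forbids asserting an equation $x=y$ between two already-defined variables (this is exactly what makes the constant-pattern fragment decidable). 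My key idea is that a single $\replaceall$ whose \emph{pattern} is a variable can be used as a substring test, and that this test collapses to an equality test once the two strings are wrapped between a delimiter symbol that occurs nowhere else.

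Concretely, I would work over $\Sigma=\Sigma_0\uplus[n]\uplus\{\$,c\}$, where $[n]=\{1,\dots,n\}$ are fresh index letters and $\$,c$ are two further fresh symbols, take $\sigma$ to be the only source variable, and build $C\equiv\varphi\wedge\psi$ where $\varphi$ contains:
\begin{itemize}
\item two ``morphism chains'' using only constant patterns, $a_1=\replaceall(\sigma,1,u_1)$, $a_i=\replaceall(a_{i-1},i,u_i)$ and $b_1=\replaceall(\sigma,1,v_1)$, $b_i=\replaceall(b_{i-1},i,v_i)$ for $2\le i\le n$; since each $u_i,v_i\in\Sigma_0^*$ contains no index letter, no replacement interferes with a later one, and these conjuncts force $a_n=h_u(\sigma)$ and $b_n=h_v(\sigma)$;
\item the delimiter wrappers $x_1'=\$\concat a_n$, $x_1=x_1'\concat\$$, $x_2'=\$\concat b_n$, $x_2=x_2'\concat\$$, so that the value of $x_1$ is $h_u(\sigma)$ flanked by a $\$$ on each side, and likewise for $x_2$ and $h_v(\sigma)$;
\item the crucial conjunct $z=\replaceall(x_2,x_1,c)$, whose pattern $x_1$ is a variable;
\end{itemize}
and where $\psi$ is $\sigma\in[n]\concat[n]^*\ \wedge\ z\in\Sigma^*\concat c\concat\Sigma^*$, i.e.\ $\sigma$ is a nonempty index word and $c$ occurs in $z$. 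First I would check $C\in\strline[\replaceall]$: the left-hand-side variables are pairwise distinct, and every right-hand side mentions only $\sigma$ or an earlier left-hand side; $C$ is computed from the instance in polynomial time.

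For correctness I would argue as follows. Since the value of $x_1$ always has length at least $2$, $\Ll(x_1)=\{x_1\}$ has no empty member, so $z=\replaceall(x_2,x_1,c)$ introduces the fresh letter $c$ if and only if $x_1$ occurs as a substring of $x_2$; hence the regular constraint on $z$ holds precisely when $x_1$ is a substring of $x_2$. Now $\$$ occurs exactly twice in each of $x_1,x_2$, namely at the two ends, so any occurrence of $x_1$ inside $x_2$ must align those two $\$$'s with the two $\$$'s of $x_2$, which (as $|x_1|\ge 2$) forces the occurrence to span all of $x_2$, i.e.\ $x_1=x_2$, equivalently $h_u(\sigma)=h_v(\sigma)$. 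Together with $\sigma\in[n]^+$ this says exactly that $\sigma$ is a PCP solution. Conversely, given a PCP solution $\sigma_0$, assigning every variable its forced value yields $x_1=x_2$, hence $z=\replaceall(x_2,x_1,c)=c$, and all conjuncts of $C$ hold. Thus $C$ is satisfiable if and only if the instance has a solution, and the undecidability of PCP gives the claim.

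I expect the only delicate point to be the delimiter argument turning ``$x_1$ is a substring of $x_2$'' into ``$x_1=x_2$'', together with verifying that the gadget really stays within the straight-line shape (in particular that both morphism chains may legitimately branch off the source variable $\sigma$, and that $z$ depends only on earlier variables). The handling of leftmost--longest matching is immediate here, because every pattern used is either a single constant letter or the single nonempty string denoted by the variable $x_1$, so no union or $\varepsilon$-matching subtleties arise.
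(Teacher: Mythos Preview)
Your proof is correct and follows essentially the same approach as the paper: reduce from PCP by computing both morphism images via chains of single-letter $\replaceall$ steps, then use one $\replaceall$ with a \emph{variable} pattern as an equality test on the two images. The paper's equality gadget is marginally more direct---it asserts $z=\replaceall(x_N,y_N,\$)$ together with the regular constraint $z\in\$$, which already forces $x_N=y_N$ (both being nonempty words over $\Sigma$ with $\$\notin\Sigma$), so no delimiter wrapping is needed---but your version is equally valid.
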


In light of Proposition~\ref{prop-und-pat-var}, we shall focus on the case that the pattern parameters of the $\replaceall$ terms are constants, being a single letter, a constant string, or a regular expression. 
The main result of the paper is summarised as the following Theorem~\ref{thm-main}.

\begin{theorem}\label{thm-main}
	The satisfiability problem of $\strline[\replaceall]$ is decidable in EXPSPACE, if the pattern parameters of the $\replaceall$ terms are regular expressions.  
\end{theorem}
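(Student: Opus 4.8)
The plan is to eliminate the $\replaceall$ terms one at a time, working backwards through the straight-line order, converting each into fresh regular constraints on the "source" side of that $\replaceall$ application. Concretely, suppose $\varphi \wedge \psi$ is the input constraint with $\varphi \equiv \bigwedge_{1 \le i \le m} x_i = P_i$ in straight-line form. I would process the equations $x_m = P_m, x_{m-1} = P_{m-1}, \ldots$ in reverse. Since $x_m$ does not appear in any other conjunct's right-hand side, its only constraints are the regular constraints (in $\psi$) directly on $x_m$, say $x_m \in e_1 \wedge \cdots \wedge x_m \in e_k$; intersect the corresponding NFAs to get a single NFA $\cA$ with $\Ll(\cA) = \bigcap_j \Ll(e_j)$. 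Now if $P_m = \replaceall(s, e, t)$, the core technical step is a construction that, given $\cA$ and the pattern expression $e$, produces (nondeterministically, or equivalently as a union over finitely many choices) a pair of NFAs $\cA_{\mathrm{src}}$ and $\cA_{\mathrm{repl}}$ such that: $s$ and $t$ admit values with $\replaceall(s,e,t) \in \Ll(\cA)$ iff there is a choice for which $s \in \Ll(\cA_{\mathrm{src}})$ and $t \in \Ll(\cA_{\mathrm{repl}})$ are simultaneously satisfiable together with all the other constraints. We then add $\cA_{\mathrm{src}}$ as a new regular constraint on $s$ (a string term — either a variable or a constant, in which case we simply check membership) and $\cA_{\mathrm{repl}}$ on $t$, delete the conjunct $x_m = P_m$, and recurse on the remaining $m-1$ equations. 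Constant-string patterns and the concatenation conjuncts $x = s_1 \concat s_2$ are handled as degenerate/simpler instances of the same idea.

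The heart of the argument — and the main obstacle — is the automaton construction for a single $\replaceall$ with a regular-expression pattern under leftmost-longest matching with commutative union. The subtlety is that an accepting run of $\cA$ on the output word $w = \replaceall(s, e, t)$ alternates between "copied" blocks of $s$ that contain no match of $e$ and inserted copies of $t$, and we must (i) guess, along the run of $\cA$, where each inserted copy of $t$ begins and ends — recording the entry and exit states of $\cA$ across each $t$-block so that all $t$-blocks can be forced to spell the \emph{same} word via an intersection of the NFAs $\cA(q_{\mathrm{in}}, q_{\mathrm{out}})$ over all guessed block boundaries — and (ii) certify that the copied portions of $s$, reassembled with the matched factors of $e$ reinserted between them, genuinely realise the leftmost-longest semantics of Definition~\ref{def:replaceall}. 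Point (ii) is where leftmost-longest bites: between two consecutive inserted $t$-copies, the copied substring of $s$ together with the following matched $e$-factor must avoid any earlier or longer match, which I would enforce by taking a product of $\cA$ with a small auxiliary automaton (built from an NFA for $e$ and for $e \concat \Sigma^\ast$, plus their determinisations/complements) that tracks "no match of $e$ has started yet and no longer match is available." The number of $t$-blocks can be unbounded, but since each block contributes only a pair of states of $\cA$, the relevant information is a \emph{set} of state-pairs, of which there are only exponentially many; so the whole construction stays within exponential size.

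For correctness I would prove the key lemma in both directions: (soundness) any accepting run of the constructed product automaton decomposes into a legal leftmost-longest parse, yielding witnesses for $s$ and $t$; (completeness) any genuine solution $s \mapsto u$, $t \mapsto v$ with $\replaceall(u,e,v) \in \Ll(\cA)$ induces a run of the product automaton that makes the right boundary guesses. Then an easy induction on $m$ shows that the original constraint is satisfiable iff the fully-reduced conjunction of regular constraints (on source variables only, plus membership checks for constant string terms) is satisfiable, and the latter is just nonemptiness of intersections of NFAs, decidable in polynomial space in the size of those NFAs.

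For the complexity bound: each elimination step replaces one $\replaceall$ conjunct by regular constraints whose NFAs are at most singly exponential in the current NFA sizes, and with $m$ eliminations this could in principle stack to a tower — so the real work is to show the blow-up does \emph{not} compound. I would argue this by building, for each variable, a single NFA whose size is bounded by an exponential in the total input size $|C|$ uniformly: the state-pair-set trick keeps each intermediate automaton at size $2^{\mathrm{poly}(|C|)}$ regardless of how many $\replaceall$'s reference that variable, because the construction for the $i$-th equation consumes an NFA of size $2^{\mathrm{poly}(|C|)}$ and produces NFAs still of size $2^{\mathrm{poly}(|C|)}$ (the polynomial exponent grows only additively with the depth, which is at most $m \le |C|$). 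Finally, emptiness of the intersection of these exponential-size NFAs is checkable on the fly in space polynomial in their size, i.e. in $\mathrm{poly}(2^{\mathrm{poly}(|C|)}) = 2^{\mathrm{poly}(|C|)}$ — so the whole procedure runs in EXPSPACE, giving Theorem~\ref{thm-main}.
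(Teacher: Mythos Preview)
Your overall strategy---eliminate $\replaceall$ terms top-down in the dependency order, guess a set of state-pairs $T_z$ for the replacement variable, and build a parsing automaton that enforces leftmost-longest matching for the source variable---is exactly the paper's approach, and your correctness sketch is sound.

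The gap is in the complexity argument. You write that each elimination step ``consumes an NFA of size $2^{\mathrm{poly}(|C|)}$ and produces NFAs still of size $2^{\mathrm{poly}(|C|)}$'', but this fails on the \emph{replacement} side once you commit to materialising ``a single NFA'' per variable. After the first elimination, the automaton $\cA$ for some variable may already have $2^{\mathrm{poly}(|C|)}$ states; then the guessed set $T_z \subseteq Q \times Q$ can itself have $2^{\mathrm{poly}(|C|)}$ pairs, and the explicit product $\bigcap_{(q,q') \in T_z} \cA(q,q')$ has size $\bigl(2^{\mathrm{poly}(|C|)}\bigr)^{2^{\mathrm{poly}(|C|)}}$, which is doubly exponential. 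Iterating this even once more is fatal. Your additive-in-depth argument is correct for the \emph{source} side (where the size is multiplied by $2^{O(p(|e|))}$ at each step), but the replacement side behaves differently.

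The paper sidesteps this by never materialising the intersection: it keeps, for each variable, a \emph{collection} $\cE(x)$ of compact pairs $(\cT,\cP)$, where $\cT$ is a transition graph and $\cP \subseteq Q \times Q$ encodes the conjunction $\bigcap_{(q,q')\in\cP} \cT(q,q')$ symbolically. Propagation along an $\rpright$-edge leaves $\cT$ unchanged and only updates $\cP$; propagation along an $\rpleft$-edge multiplies $|\cT|$ by the parsing-automaton factor. This way each individual $\cT$ stays single-exponential, and although $|\cE(x)|$ can be exponential (due to diamonds in the dependency graph), the final emptiness test is the intersection of exponentially many NFAs each of exponential size, which is checkable on the fly in space proportional to the \emph{sum} of their logarithmic sizes---still EXPSPACE. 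To repair your argument you need to drop the ``single NFA per variable'' invariant and carry the state-pair sets symbolically all the way to the final emptiness check.
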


The following three sections are devoted to the proof of Theorem~\ref{thm-main}.  
\begin{itemize}
\item We start with the \emph{single-letter} case that the pattern parameters of the $\replaceall$ terms are single letters (Section~\ref{sec:replaceallsl}),
\item then consider the \emph{constant-string} case that the pattern parameters of the $\replaceall$ terms are constant strings  (Section~\ref{sec:replaceallcs}), \item and finally the \emph{regular-expression} case that the pattern parameters of the $\replaceall$ terms are regular expressions  (Section~\ref{sec:replaceallre}).
\end{itemize}



We first introduce a graphical representation of $\strline[\replaceall]$ formulae as follows.    

\begin{definition}[Dependency graph]
\label{def:dep-graph}
	Suppose $C= \varphi \wedge \psi$ is an $\strline[\replaceall]$ formula where the pattern parameters of the $\replaceall$ terms are regular expressions. 
Define the \emph{dependency graph} of $C$ as $G_C= (\vars(\varphi), E_C)$, such that for each $i \in [m]$, if $x_i = \replaceall(z, e_i, z')$, then $(x_i, (\rpleft, e_i), z) \in E_C$ and $(x_i, (\rpright, e_i), z') \in E_C$. A final (resp. initial) vertex in $G_C$ is a vertex in $G_C$ without successors (resp. predecessors). The edges labelled by $(\rpleft, e_i)$ and $(\rpright, e_i)$ are called the $\rpleft$-edges and $\rpright$-edges respectively. The \emph{depth} of $G_C$ is the maximum length of the paths in $G_C$. In particular, if $\varphi$ is empty, then the depth of $G_C$ is zero. 
\end{definition}
Note that $G_C$ is a DAG where the out-degree of each vertex is two or zero. 

\begin{definition}[Diamond index and $\rpleft$-length]
Let $C$  be an  $\strline[\replaceall]$ formula and $G_C=(\vars(\varphi), E_C)$ be its dependency graph. A \emph{diamond} $\Delta$ in $G_C$ is a pair of vertex-disjoint simple paths from $z$ to $z'$ for some $z,z' \in \vars(\varphi)$. The vertices $z$ and $z'$ are called the \emph{source} and \emph{destination} vertex of the diamond respectively. A diamond $\Delta_2$ with the source vertex $z_2$ and destination vertex $z'_2$ is said to be reachable from  another diamond $\Delta_1$ with the source vertex $z_1$ and destination vertex $z'_1$ if $z_2$ is reachable from $z'_1$ (possibly $z_2= z'_1$). The \emph{diamond index} of $G_C$, denoted by $\dmdidx(G_C)$, is defined as the maximum length of the diamond sequences $\Delta_1 \cdots \Delta_n$ in $G_C$ such that for each $i \in [n-1]$, $\Delta_{i+1}$ is reachable from $\Delta_i$. The \emph{$\rpleft$-length} of a path in $G_C$ is the number of $\rpleft$-edges in the path. The $\rpleft$-length of $G_C$, denoted by $\lftlen(G_C)$, is the maximum $\rpleft$-length of paths in $G_C$.
\end{definition}
For each dependency graph $G_C$, since each diamond uses at least one $\rpleft$-edge, we know that $\dmdidx(G_C) \le \lftlen(G_C)$.

\begin{proposition}\label{prop-di}
Let $C$  be an  $\strline[\replaceall]$ formula and $G_C=(\vars(\varphi), E_C)$ be its dependency graph. For each pair of distinct vertices $z,z'$ in $G_C$, there are at most $(|\vars(\varphi)||E_C|)^{O(\dmdidx(G_C))}$ different paths from $z$ to $z'$.
\end{proposition}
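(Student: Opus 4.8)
The plan is to bound the number of distinct paths from $z$ to $z'$ by induction on the diamond index $\dmdidx(G_C)$, exploiting the structure of a DAG in which every vertex has out-degree $0$ or $2$. First I would set $N = |\vars(\varphi)|$ and $M = |E_C|$ and aim to prove a statement of the shape: the number of distinct $z$-to-$z'$ paths is at most $(NM)^{c\cdot\dmdidx(G_C)}$ for a suitable absolute constant $c$. The base case is $\dmdidx(G_C)=0$: if there is no diamond at all, then between any two vertices there is at most one path. Indeed, if there were two distinct paths $\pi_1,\pi_2$ from $z$ to $z'$, take the first vertex after which they diverge and the first vertex at which they re-converge; the two sub-paths between those vertices form a pair of internally vertex-disjoint paths, i.e.\ a diamond, contradicting $\dmdidx(G_C)=0$. (One must be slightly careful that the divergence/convergence vertices are genuinely distinct and the internal vertices disjoint; this is a standard ``first difference'' argument.)

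For the inductive step, suppose $\dmdidx(G_C)=k\ge 1$. The idea is to ``peel off'' one layer of diamonds. Consider the set $D$ of vertices that are the source vertex of some diamond $\Delta$ that is \emph{not} reachable from any other diamond, i.e.\ the minimal (earliest) diamonds in the reachability order on diamonds. Any path from $z$ to $z'$ can be decomposed according to how it interacts with these earliest diamonds. More concretely, I would factor each path at the first vertex $w$ it meets in $D$ (if any): the prefix up to $w$ lies in a subgraph whose diamond index is $0$ by minimality of $w$, hence is unique given its endpoints; from $w$ the path goes through at most $N$ choices of ``exit vertex'' of the first diamond cluster; and the remaining suffix lives in a subgraph whose diamond index has dropped by at least one, since we have consumed one full layer of the diamond-reachability chain. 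Counting: at most $N$ choices for $w$, at most $1$ path for the prefix, at most $N$ choices for the exit vertex $u$, times the inductive bound $(NM)^{c(k-1)}$ for the suffix from $u$ to $z'$. Choosing $c$ large enough (e.g.\ $c\ge 3$) absorbs the polynomial factors $N\cdot N$, giving $(NM)^{ck}$.

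The main obstacle I expect is making the ``peel off one layer'' decomposition precise: a single path may weave through several of the earliest diamonds, and diamonds may share vertices, so ``the first diamond the path enters'' and ``the point where the path leaves the cluster of earliest diamonds'' need careful definitions to guarantee (a) the prefix really is diamond-free, and (b) the diamond index of the residual subgraph strictly decreases. The cleanest route is probably to define, for each vertex $v$, its \emph{diamond depth} $d(v)$ as the maximum length of a diamond chain $\Delta_1\cdots\Delta_n$ with $\Delta_1$'s source reachable from $v$ (so $d(z)\le k$), observe $d$ is non-increasing along edges, and split a path at the first edge along which $d$ strictly decreases; between two consecutive ``level sets'' of $d$ the relevant subgraph has diamond index $0$, so that segment is determined by its endpoints, and there are at most $N$ endpoints at each of the $\le k$ level boundaries. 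This yields at most $N^{k}$ choices for the sequence of boundary vertices and a unique path on each segment, hence at most $N^{k}\le (NM)^{O(\dmdidx(G_C))}$ paths; I would then reconcile this cleaner count with the stated bound $(NM)^{O(\dmdidx(G_C))}$, noting the extra factor $M$ gives ample slack for any bookkeeping (e.g.\ multi-edges or the $\rpleft/\rpright$ edge labels).
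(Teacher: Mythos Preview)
The paper does not actually supply a proof of this proposition; it is stated and immediately used, and the appendix contains only an empty proof placeholder for an earlier version of the statement. So there is nothing to compare against, and your job is simply to make your own argument airtight.

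Your ``diamond depth'' function $d(v)$ is the right idea, and the two facts you need are both true: $d$ is non-increasing along edges, and the induced subgraph on any single level $\{v:d(v)=j\}$ contains no diamond (else the source $s$ of that diamond would satisfy $d(s)\ge 1+d(t)=1+j$). What fails is the sentence ``that segment is determined by its endpoints'' when the segment runs from the \emph{first} vertex $u_s$ at level $j_s$ to the \emph{first} vertex $u_{s+1}$ at level $j_{s+1}$. That segment is not entirely inside level $j_s$: its last edge drops into level $j_{s+1}$, and a diamond from a level-$j_s$ vertex down to $u_{s+1}$ is perfectly compatible with $d$. Concretely, take $s\to p,\ s\to q,\ p\to u,\ p\to p',\ q\to u,\ q\to q',\ p'\to u,\ p'\to y,\ q'\to u,\ q'\to y'$, with $u,y,y'$ sinks. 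Every diamond ends at $u$, so $\dmdidx=1$; one checks $d(s)=d(p)=d(q)=1$ and $d(u)=d(p')=d(q')=0$. The paths $s\to p\to u$ and $s\to q\to u$ both have $u_0=s$ and $u_1=u$, yet they are distinct. So recording only the entry vertex at each level does not pin down the path.

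The fix is exactly what your own phrasing ``split at the first edge along which $d$ strictly decreases'' suggests: record the \emph{edge} at each drop, not just the landing vertex. Each such edge determines both its source $w_s$ (the last vertex at level $j_s$) and its target $u_{s+1}$; the portion from $u_s$ to $w_s$ lies entirely in level $j_s$ and is therefore unique by diamond-freeness within a level, and the final portion from $u_r$ to $z'$ is likewise unique. There are at most $k=\dmdidx(G_C)$ drops and at most $|E_C|$ choices for each dropping edge, giving at most $|E_C|^{k}\le(|\vars(\varphi)|\,|E_C|)^{O(k)}$ paths. Your remark that the factor $M$ ``gives ample slack'' was prescient; you just need to invoke it at this specific point rather than as an afterthought.
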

It follows from Proposition~\ref{prop-di} that for a class of $\strline[\replaceall]$ formulae $C$ such that $\dmdidx(G_C)$ is bounded by a constant $c$,  there are polynomially many different paths between each pair of distinct vertices in $G_C$.

\begin{example}
Let $G_C$ be the dependency graph illustrated in Figure~\ref{fig-dmdidx-exmp}. It is easy to see that $\dmdidx(G_C)$ is $3$. In addition, there are $2^3=8$ paths from $x_1$ to $y_1$. If we generalise $G_C$ in Figure~\ref{fig-dmdidx-exmp} to a dependency graph comprising $n$ diamonds from $x_1$ to $x_2$, $\cdots$, from $x_{n-1}$ to $x_n$, and from $x_n$ to $y_1$ respectively, then the diamond index of the resulting dependency graph is $n$ and there are $2^n$ paths from $x_1$ to $y_1$ in the graph.
\begin{figure}[htbp]
\begin{center}
\includegraphics[scale=0.7]{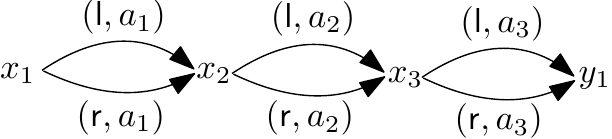}
\end{center}
\caption{The diamond index  and the number of paths in $G_C$}\label{fig-dmdidx-exmp}
\end{figure}
\end{example}

In Section~\ref{sec:replaceallsl}--\ref{sec:replaceallre}, we will apply a refined analysis of the complexity of the decision procedures for proving Theorem~\ref{thm-main} and get the following results.

\begin{corollary}\label{cor-pspace}
The satisfiability problem is PSPACE-complete for the following fragments of $\strline[\replaceall]$:
\begin{itemize}
\item the single-letter case, plus the condition that the diamond indices of the dependency graphs are bounded by a constant $c$, 
\item the constant-string case, plus the condition that the $\rpleft$-lengths of the dependency graphs are bounded by a constant $c$, 

\item the regular-expression case, plus the condition that the $\rpleft$-lengths of the dependency graphs are at most $1$.
\end{itemize}
\end{corollary}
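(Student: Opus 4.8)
The plan is to prove the two directions of \pspace-completeness separately. For the lower bound --- which should already hold for the weakest of the three fragments --- I would reduce from the \pspace-complete problem of deciding whether the intersection of a list of regular languages (each given by an NFA) is nonempty. Given NFAs $\cA_1,\dots,\cA_k$ over $\Sigma$ and regular expressions $e_1,\dots,e_k$ with $\Ll(e_i)=\Ll(\cA_i)$, the $\strline[\replaceall]$-constraint $C \equiv x \in e_1 \wedge \dots \wedge x \in e_k$ (with an empty relational part, which is permitted) is satisfiable iff $\bigcap_i \Ll(\cA_i) \neq \emptyset$; its dependency graph $G_C$ is edgeless, so $\dmdidx(G_C) = \lftlen(G_C) = 0$ and $C$ belongs to all three fragments at once. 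Hence each of them is \pspace-hard. (If one insists that the constant $c$ be strictly positive, a single harmless $\replaceall$-conjunct can be appended without affecting satisfiability or leaving the fragment.)

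For the upper bound I would carry out a sharpened complexity analysis of the elimination procedures developed in Sections~\ref{sec:replaceallsl}--\ref{sec:replaceallre}. In all three cases the procedure processes $G_C$ in topological order, repeatedly deleting a conjunct $x_i = \replaceall(-,-,-)$ and converting the regular constraints currently attached to $x_i$ into fresh regular constraints on the two argument variables --- together with auxiliary annotations (essentially pairs $(q,q')$ of states of the NFA for $x_i$) that record how the replacement string must be threaded through that NFA --- until every regular constraint is attached to a source variable; satisfiability then reduces to non-emptiness of the intersection of the resulting family of NFAs, of cardinality $N$ and individual size at most $S$. Two parameters must be controlled. First, by Proposition~\ref{prop-di} the number of distinct paths between vertices of $G_C$, and with it $N$, is polynomial in $|C|$ as soon as $\dmdidx(G_C)$ is bounded by a constant. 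Second, the per-$\rpleft$-edge blowup of the conversion is polynomial in the single-letter and constant-string cases but exponential in the regular-expression case (the leftmost-and-longest matching of a regular pattern forces a determinisation/complementation of the pattern automaton, afterwards captured by a product with a fixed polynomial-size gadget), and these blowups compound along a path: hence $S$ is polynomial in the single-letter case (a bounded $\dmdidx(G_C)$ keeps the number of parallel paths, and any product over them, polynomial); polynomial in the constant-string case when $\lftlen(G_C)$ is bounded, since $\dmdidx(G_C) \le \lftlen(G_C)$ and a polynomial factor applied a constant number of times stays polynomial; and at most singly exponential in the regular-expression case when $\lftlen(G_C) \le 1$, since one exponential step applied to a polynomial automaton yields an exponential one, whereas a second $\rpleft$-edge would be exponential in an already-exponential size, i.e.\ doubly exponential. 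Finally, non-emptiness of the intersection of $N$ NFAs of size at most $S$ can be decided nondeterministically in space $O(N \log S)$ by guessing an accepted word letter by letter while maintaining the $N$-tuple of current states; this is polynomial in $|C|$ in all three cases (in the third, $N = O(1)$ and $\log S$ is polynomial), so the problem lies in \pspace\ by Savitch's theorem.

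The main obstacle, I expect, is this size-and-number accounting: one must verify that the state-pair annotations introduced when removing an $\rpleft$-edge do not covertly multiply $N$ or $S$ beyond the stated bounds once they are propagated and composed along the (polynomially many, once $\dmdidx(G_C)$ is bounded) paths of $G_C$, and the three pattern types genuinely require separate treatment, because the per-edge blowup --- and hence the reason a bounded diamond index, a bounded $\rpleft$-length, or an $\rpleft$-length of exactly $1$ is the correct hypothesis --- is different in each. A good consistency check is that the very same analysis must reproduce the \expspace\ bound of Theorem~\ref{thm-main} in the unrestricted regular-expression case: exponentially many parallel paths, each carrying a tower of exponential blowups, yield a doubly exponential product and therefore an exponential-space emptiness test.
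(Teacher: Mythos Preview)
Your proposal is essentially the paper's own argument: the lower bound via intersection-nonemptiness of regular languages is verbatim what the paper does, and the upper bound via a refined size-and-number analysis of the elimination procedure, followed by on-the-fly product emptiness in space $O(N\log S)$, is exactly the paper's route.

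Two points worth noting. First, a small slip: in the regular-expression case with $\lftlen(G_C)\le 1$ you write ``$N=O(1)$'', but $N$ is in fact only polynomial (there are polynomially many paths since $\dmdidx(G_C)\le\lftlen(G_C)\le 1$, and each path contributes a bounded batch of constraints). The conclusion is unaffected because $N\cdot\log S$ is still polynomial when $N$ is polynomial and $S$ is singly exponential.

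Second, the obstacle you flag is indeed the crux in the regular-expression case, and the paper's resolution is a structural observation you do not quite state: whenever an $\rpleft$-edge is traversed (producing an exponential-size transition graph $\cT'$), its target $y$ \emph{must be a source variable}, because every non-source vertex has an outgoing $\rpleft$-edge and a second $\rpleft$-edge on the same path would violate $\lftlen(G_C)\le 1$. Hence the exponential $\cT'$ is never further propagated, and in particular is never used to generate a set $\cP\subseteq Q_{\cT'}\times Q_{\cT'}$ of state pairs for a later step; the $\cP$-component attached to $\cT'$ was fixed before the blowup and therefore has polynomial size. Without this observation one cannot rule out that a single $(\cT',\cP')$ with $|\cP'|$ exponential encodes exponentially many atomic constraints, which would break the $N$-bound.
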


Corollary~\ref{cor-pspace} partially justifies our choice to present the decision procedures for the single-letter, constant-string, and regular-expression case separately. Intuitively, when the pattern parameters  of the $\replaceall$ terms become less restrictive, the decision procedures  become more involved, and more constraints should be imposed on the dependency graphs in order to achieve the PSPACE upper-bound. The PSPACE lower-bound follows from the observation that  nonemptiness of the intersection of the regular expressions $e_1, \cdots, e_n$ over the alphabet $\{0,1\}$, which is a PSPACE-complete problem, can be reduced to the satisfiability of the formula $x \in e_1 \wedge \cdots \wedge x \in e_n$, 
%
which falls into all fragments of $\strline[\replaceall]$ specified in Corollary~\ref{cor-pspace}.\zhilin{Remark for the pspace lower bound, please check.}\tl{seems to be fine.I edited a little bit}
At last, we  remark that the restrictions in Corollary~\ref{cor-pspace} are partially inspired by the  benchmarks in practice. Diamond indices (intuitively, the ``nesting depth'' of $\replaceall(x,a,x)$) are likely to be small in practice because the constraints like $\replaceall(x,a,x)$ are rather artificial and rarely occur in practice. Moreover, the $l$-length reflects the nesting depth of replaceall w.r.t. the first parameter, which is also likely to be small. Finally, for string constraints with concatenation and $\replaceall$ where pattern/replacement parameters are constants, the diamond index is no greater than the ``dimension'' defined in \cite{LB16}, where it was shown that existing benchmarks mostly have ``dimensions" at most three for such string constraints. \zhilin{In the rebuttal, we stated that diamond index and dimension for constraints which use only concatenation. But I think the statement is wrong. Please check.}\tl{the current version is fine. I edited a little bit.}



\section{Outline of Decision Procedures}

We describe our decision procedure across three sections (Section~\ref{sec:replaceallsl}--Section~\ref{sec:replaceallre}).
This means the ideas can be introduced in a step-by-step fashion, which we hope helps the reader.
In addition, by presenting separate algorithms, we can give the fine-grained complexity analysis required to show Corollary~\ref{cor-pspace}.
We first outline the main ideas needed by our approach.

We will use automata-theoretic techniques.
That is, we make use of the fact that regular expressions can be represented as NFAs.
We can then consider a very simple string expression, which is a single regular constraint $x \in e$.
It is well-known that an NFA $\cA$ can be constructed that is equivalent to $e$.
We can also test in LOGSPACE whether there is some word $w$ accepted by $\cA$.
If this is the case, then this word can be assigned to $x$, giving a satisfying assignment to the constraint.
If this is not the case, then there is no satisfying assignment.

A more complex case is a conjunction of several constraints of the form $x \in e$.
If the constraints apply to different variables, they can be treated independently to find satisfying assignments.
If the constraints apply to the same variable, then they can be merged into a single NFA.\@
Intuitively, take $x \in e_1 \land x \in e_2$ and $\cA_1$ and $\cA_2$ equivalent to $e_1$ and $e_2$ respectively.
We can use the fact that NFA are closed under intersection a check if there is a word accepted by $\cA_1 \times \cA_2$.
If this is the case, we can construct a satisfying assignment to $x$ from an accepting run of $\cA_1 \times \cA_2$.

In the general case, however, variables are not independent, but may be related by a use of $\replaceall$.
In this case, we perform a kind of \emph{$\replaceall$ elimination}.
That is, we successively remove instances of $\replaceall$ from the constraint, building up an expanded set of regular constraints (represented as automata).
Once there are no more instances of $\replaceall$ we can solve the regular constraints as above.
Briefly, we identify some $x = \replaceall(y, e, z)$ where $x$ does not appear as an argument to any other use of $\replaceall$.
We then transform any regular constraints on $x$ into additional constraints on $y$ and $z$.
This allows us to remove the variable $x$ since the extended constraints on $y$ and $z$ are sufficient for determining satisfiability.
Moreover, from a satisfying assignment to $y$ and $z$ we can construct a satisfying assignment to $x$ as well.
This is the technical part of our decision procedure and is explained in detail in the following sections, for increasingly complex uses of $\replaceall$.


\section{Decision procedure for $\strline[\replaceall]$: The single-letter case} \label{sec:replaceallsl}

\hide{
\subsection{Encode Concatenation by ReplaceAll}

We will transform a given constraint $\varphi$ into a constraint $\varphi'$ which is concatenation free. As the first step, we extend the original alphabet with two fresh letters $a,b$.

For each $x=yz$, we introduce a new variable $x'$ and replace $x=yz$ by two new constraints
$x'=\replaceall(ab, a, x)$ and $x=\replaceall(x', b, z)$.

\begin{proposition}
	$\varphi$ and $\varphi'$ are equisatisfiable.
\end{proposition}
}


In this section, we consider the single-letter case, that is, for the $\strline[\replaceall]$ formula $C = \varphi \wedge \psi$, every term of the form $\replaceall(z, e, z')$ in $\varphi$ satisfies that $e=a$ for $a \in \Sigma$.
We begin by explaining the idea of the decision procedure in the case where there is a single use of a $\replaceall(-, - ,-)$ term.
Then we describe the decision procedure in full details.



\subsection{A Single Use of $\replaceall(-, -, -)$}
\label{sec:single-use}

Let us start with the simple case that
\[C \equiv x = \replaceall(y, a, z) \wedge x \in e_1 \wedge y \in e_2 \wedge z \in e_3,\]
where, for $i =1, 2, 3$, we suppose  $\cA_i = (Q_i, \delta_i, q_{0,i}, F_i)$
is the NFA corresponding to the regular expression $e_i$.

From the semantics, $C$ is satisfiable if and only if $x, y, z$ can be assigned with strings $u, v, w$ so that: (1) $u$ is obtained from $v$ by replacing all the occurrences of $a$ in $v$ with $w$, and (2) $u, v, w$ are accepted by $\cA_1, \cA_2, \cA_3$ respectively. Let $u,v,w$ be the strings satisfying these two constraints. As $u$ is accepted by $\cA_1$,  there must be an accepting run of $\cA_1$ on $u$. Let $v = v_1 a v_2 a \cdots a v_k$ such that for each $i \in [k]$, $v_i \in (\Sigma \setminus \{a\})^*$. Then $u = v_1 w v_2 w \cdots w v_k$ and there are states $q_1, q'_1, \cdots, q_{k-1}, q'_{k-1}, q_k$  such that
$$
q_{0,1} \xrightarrow[\cA_1]{v_1} q_1 \xrightarrow[\cA_1]{w} q'_1 \xrightarrow[\cA_1]{v_2} q_2 \xrightarrow[\cA_1]{w} q'_2 \cdots q_{k-1} \xrightarrow[\cA_1]{w} q'_{k-1} \xrightarrow[\cA_1]{v_k} q_k
$$
 and $q_k \in F_{1}$. Let $T_z$ denote $\left\{(q_i, q'_i) \mid i \in [k-1] \right\}$. Then $w \in \Ll(\cA_3)\ \cap\ \bigcap \limits_{(q, q') \in T_z} \Ll(\cA_1(q, q'))$. In addition, let  $\cB_{\cA_1,a,T_z}$ be the NFA obtained from $\cA_1$ by removing all the $a$-transitions first and then adding the $a$-transitions $(q, a, q')$ for $(q, q') \in T_z$. Then
$$
q_{0,1} \xrightarrow[\cB_{\cA_1,a,T_z}]{v_1} q_1 \xrightarrow[\cB_{\cA_1,a,T_z}]{a} q'_1 \xrightarrow[\cB_{\cA_1,a,T_z}]{v_2} q_2 \xrightarrow[\cB_{\cA_1,a,T_z}]{a} q'_2 \cdots q_{k-1} \xrightarrow[\cB_{\cA_1,a,T_z}]{a} q'_{k-1} \xrightarrow[\cB_{\cA_1,a,T_z}]{v_k} q_k.
$$
Therefore,
$v \in \Ll(\cA_2) \cap \Ll(\cB_{\cA_1,a,T_z})$. We deduce that there is $T_z \subseteq Q_1 \times Q_1$ such that $\Ll(\cA_3)\ \cap\ \bigcap \limits_{(q, q') \in T_z} \Ll(\cA_1(q, q')) \neq \emptyset$ and $ \Ll(\cA_2) \cap \Ll(\cB_{\cA_1,a,T_z}) \neq \emptyset$. In addition, it is not hard to see that this condition is also sufficient for the satisfiability of $C$. The arguments proceed as follows: Let $v \in  \Ll(\cA_2) \cap \Ll(\cB_{\cA_1,a,T_z})$ and $w \in \Ll(\cA_3)\ \cap\ \bigcap \limits_{(q, q') \in T_z} \Ll(\cA_1(q, q'))$. From $v \in \Ll(\cB_{\cA_1,a,T_z})$, we know that there is an accepting run of $\cB_{\cA_1,a,T_z}$ on $v$. Recall that $\cB_{\cA_1,a,T_z}$ is obtained from $\cA_1$ by first removing all the $a$-transitions, then adding all the transitions $(q,a,q')$ for $(q,q') \in T_z$.  Suppose $v = v_1 a v_2 \cdots a v_k$ such that $v_i \in (\Sigma \setminus \{a\})^*$ for each $i \in [k]$ and
$$
q_{0,1} \xrightarrow[\cB_{\cA_1,a,T_z}]{v_1} q_1 \xrightarrow[\cB_{\cA_1,a,T_z}]{a} q'_1 \xrightarrow[\cB_{\cA_1,a,T_z}]{v_2} q_2 \xrightarrow[\cB_{\cA_1,a,T_z}]{a} q'_2 \cdots q_{k-1} \xrightarrow[\cB_{\cA_1,a,T_z}]{a} q'_{k-1} \xrightarrow[\cB_{\cA_1,a,T_z}]{v_k} q_k
$$
is an accepting run of $\cB_{\cA_1,a,T_z}$ on $v$. Then $q_{0,1} \xrightarrow[\cA_1]{v_1} q_1$, and for each $i \in [k-1]$ we have $(q_i, q'_i) \in T_z$ and $q'_i \xrightarrow[\cA_1]{v_{i+1}} q_{i+1}$; moreover, $q_k \in F_1$.
Let $u = \replaceall(v, a, w)=v_1 w v_2 \cdots w v_k$. Since $w \in \bigcap \limits_{(q, q') \in T_z} \Ll(\cA_1(q, q'))$,  we infer that
$$
q_{0,1} \xrightarrow[\cA_1]{v_1} q_1 \xrightarrow[\cA_1]{w} q'_1 \xrightarrow[\cA_1]{v_2} q_2 \xrightarrow[\cA_1]{w} q'_2 \cdots q_{k-1} \xrightarrow[\cA_1]{w} q'_{k-1} \xrightarrow[\cA_1]{v_k} q_k
$$
is an accepting run of $\cA_1$ on $u$. Therefore, $u$ is accepted by $\cA_1$ and $C$ is satisfiable.

\begin{proposition}\label{prop-sat-sl-case}
We have $C \equiv x = \replaceall(y, a, z) \wedge x \in e_1 \wedge y \in e_2 \wedge z \in e_3$ is satisfiable iff there exists $T_{z} \subseteq Q_1 \times Q_1$ with $\Ll(\cA_3)\ \cap\ \bigcap \limits_{(q, q') \in T_z} \Ll(\cA_1(q, q')) \neq \emptyset$ and $ \Ll(\cA_2) \cap \Ll(\cB_{\cA_1,a,T_z}) \neq \emptyset$.
\end{proposition}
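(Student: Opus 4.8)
The statement is an ``if and only if'', so the plan is to prove the two directions separately, each by a ``slice and splice the accepting run of $\cA_1$'' argument. The useful simplification in the single-letter case is that $\replaceall(v,a,w)$ is unambiguous: if $v = v_1 a v_2 a \cdots a v_k$ is the unique factorisation of $v$ with each $v_i \in (\Sigma\setminus\{a\})^\ast$, then by Definition~\ref{def:replaceall} we simply have $\replaceall(v,a,w) = v_1 w v_2 w \cdots w v_k$ (no leftmost–longest subtlety arises since $a$ is a single letter and $\varepsilon\notin\Ll(a)$). I would record this observation first, as both directions rely on it.

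For the forward direction, suppose $C$ is satisfiable, witnessed by $x\mapsto u$, $y\mapsto v$, $z\mapsto w$. Factor $v$ as above, so $u = v_1 w v_2 w \cdots w v_k$, and fix an accepting run of $\cA_1$ on $u$ (it exists since $u\in\Ll(e_1)=\Ll(\cA_1)$). Cut this run at the $2(k-1)$ boundaries between the $v_i$-blocks and the $w$-blocks to obtain states $q_1,q'_1,\dots,q_{k-1},q'_{k-1},q_k$ with $q_{0,1}\xrightarrow[\cA_1]{v_1}q_1$, with $q_i\xrightarrow[\cA_1]{w}q'_i$ and $q'_i\xrightarrow[\cA_1]{v_{i+1}}q_{i+1}$ for $i\in[k-1]$, and $q_k\in F_1$. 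Put $T_z \eqdef \set{(q_i,q'_i)\mid i\in[k-1]}$. The $w$-blocks show $w\in\bigcap_{(q,q')\in T_z}\Ll(\cA_1(q,q'))$, and since $w\in\Ll(\cA_3)$ the first non-emptiness condition holds. For the second, replace each sub-run $q_i\xrightarrow[\cA_1]{w}q'_i$ by the single edge $q_i\xrightarrow{a}q'_i$, which is a transition of $\cB_{\cA_1,a,T_z}$ by construction; since the $v_i$ contain no $a$, all remaining edges are non-$a$-transitions of $\cA_1$, hence of $\cB_{\cA_1,a,T_z}$. This gives an accepting run of $\cB_{\cA_1,a,T_z}$ on $v$, so $v\in\Ll(\cA_2)\cap\Ll(\cB_{\cA_1,a,T_z})$.

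For the converse, pick $v\in\Ll(\cA_2)\cap\Ll(\cB_{\cA_1,a,T_z})$ and $w\in\Ll(\cA_3)\cap\bigcap_{(q,q')\in T_z}\Ll(\cA_1(q,q'))$, fix an accepting run of $\cB_{\cA_1,a,T_z}$ on $v$, and factor $v = v_1 a v_2\cdots a v_k$ as before. Because the only $a$-transitions of $\cB_{\cA_1,a,T_z}$ are those induced by $T_z$, the run is forced to use a $T_z$-edge at exactly the positions reading the separating $a$'s; reading off the intermediate states yields $q_1,q'_1,\dots,q_{k-1},q'_{k-1},q_k$ with $(q_i,q'_i)\in T_z$ for each $i$, and all other edges non-$a$-transitions of $\cA_1$. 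Now set $u \eqdef \replaceall(v,a,w) = v_1 w v_2 \cdots w v_k$. Using $w\in\Ll(\cA_1(q_i,q'_i))$ we have $q_i\xrightarrow[\cA_1]{w}q'_i$, and splicing these into the run produces an accepting run of $\cA_1$ on $u$; hence $u\in\Ll(\cA_1)$, and $x\mapsto u$, $y\mapsto v$, $z\mapsto w$ satisfies $C$.

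There is no deep obstruction here; the one point that needs care is the factorisation step in the converse direction, namely that an accepting run of $\cB_{\cA_1,a,T_z}$ on $v$ necessarily decomposes according to the occurrences of $a$ in $v$ and threads a $T_z$-pair through each such occurrence. This is immediate from the construction of $\cB_{\cA_1,a,T_z}$ (no $a$-transitions other than those from $T_z$), but it is exactly where the single-letter restriction is used, and it is the step that makes the reverse implication go through.
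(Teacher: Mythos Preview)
Your proof is correct and follows essentially the same approach as the paper: factor $v$ along its occurrences of $a$, slice the accepting run of $\cA_1$ on $u$ at the block boundaries to extract $T_z$ for the forward direction, and splice $w$-runs of $\cA_1$ into the $T_z$-edges of an accepting run of $\cB_{\cA_1,a,T_z}$ for the converse. The paper presents the argument in the paragraphs immediately preceding the proposition, with the same run-decomposition and the same use of the fact that the only $a$-transitions in $\cB_{\cA_1,a,T_z}$ are those coming from $T_z$.
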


From Proposition~\ref{prop-sat-sl-case}, we can decide the satisfiability of $C$ in polynomial space as follows:
\begin{description}
\item[Step I.] Nondeterministically choose a set $T_{z} \subseteq Q_1 \times Q_1$.
\item[Step II.] Nondeterministically choose an accepting run of the product automaton of $\cA_3$ and $\cA_1(q, q')$ for $(q,q') \in T_{z}$.
\item[Step III.] Nondeterministically choose an accepting run of the product automaton of $\cA_2$ and $\cB_{\cA_1, a,  T_{z}}$.
\end{description}
During Step II and III, it is sufficient to record $T_z$ and a state of the product automaton, which occupies only a polynomial space.

The above decision procedure can be easily generalised to the case that there are multiple atomic regular constraints for $x$. For instance, let $x \in e_{1,1} \wedge x \in e_{1,2}$ and for $j = 1, 2$, $\cA_{1,j} = (Q_{1,j}, \delta_{1, j}, q_{0,1, j}, F_{1,j})$ be
the NFA corresponding to $e_{1,j}$. Then in Step I, two sets $T_{1,z} \subseteq Q_{1,1} \times Q_{1,1}$ and $T_{2,z} \subseteq Q_{1,2} \times Q_{1,2}$ are nondeterministically chosen, moreover, Step II and III are adjusted accordingly.

\begin{example}\label{exmp-sl}
Let $C \equiv x= \replaceall(y, 0, z) \wedge x \in e_1 \wedge y \in e_2 \wedge z \in e_3$, where $e_1=(0+1)^* (00(0+1)^* + 11(0+1)^*)$,  $e_2= (01)^*$,  and $e_3 = (10)^*$. The NFA $\cA_{1}, \cA_{2}, \cA_{3}$ corresponding to $e_1, e_2, e_3$ respectively are illustrated in Figure~\ref{fig-sl-exmp}. Let $T_z = \{(q_0, q_0), (q_1, q_2)\}$.  Then
$$
\begin{array}{l c l}
 \Ll(\cA_3)\ \cap\ \bigcap \limits_{(q, q') \in T_z} \Ll(\cA_1(q, q'))  & = & \Ll(\cA_3)\ \cap \Ll(\cA_1(q_0, q_0)) \cap \Ll(\cA_1(q_1,q_2)) \\
& = & \Ll((10)^*) \cap \Ll((0+1)^*) \cap \Ll(1(0+1)^*) \\
& \neq & \emptyset.
\end{array}
$$
In addition, $\cB_{\cA_1, 0, T_z}$ (also illustrated in Figure~\ref{fig-sl-exmp}) is obtained from $\cA_1$ by removing all the $0$-transitions, then adding the transitions $(q_0, 0, q_0)$ and $(q_1, 0, q_2)$. Then
$$
 \Ll(\cA_2) \cap \Ll(\cB_{\cA_1,0,T_z})   =  \Ll((01)^*) \cap \Ll((0+1)^* 10 1^*) \neq \emptyset.
$$
We can choose $z$ to be a string from $\Ll(\cA_3) \cap \ \bigcap \limits_{(q, q') \in T_z} \Ll(\cA_1(q, q'))=\Ll((10)^*) \cap \Ll((0+1)^*) \cap \Ll(1(0+1)^*)$, say $10$, and $y$ to be a string from $ \Ll(\cA_2) \cap \Ll(\cB_{\cA_1,0,T_z})   =  \Ll((01)^*) \cap \Ll((0+1)^* 10 1^*)$, say $0101$,  then we set $x$ to $\replaceall(0101, 0, 10)=101101$, which is in $\Ll(\cA_1)$. Thus, $C$ is satisfiable.\qed
\begin{figure}[htbp]
\begin{center}
\includegraphics[scale=0.63]{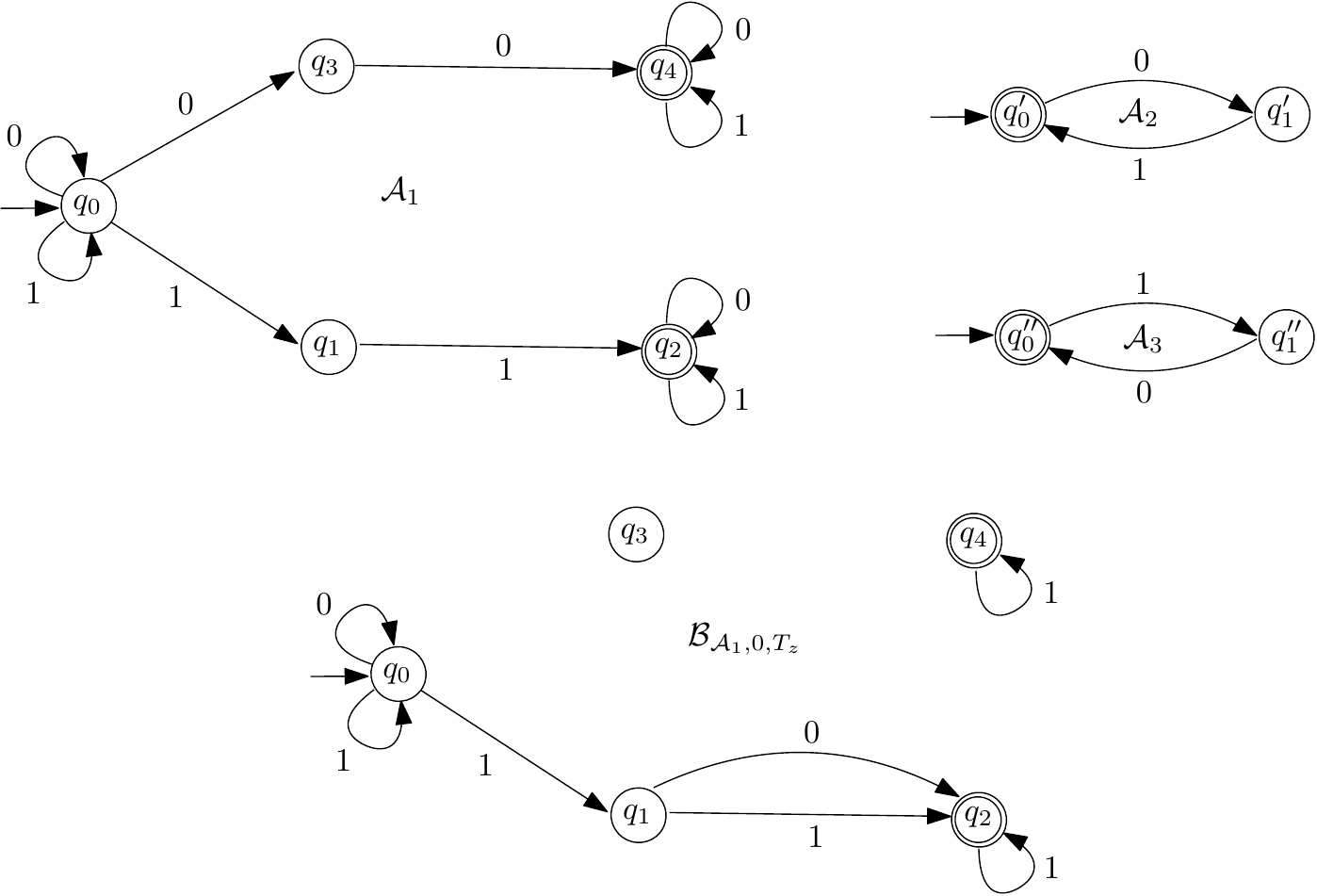}
\end{center}
\caption{An example for the single-letter case: One $\replaceall$}\label{fig-sl-exmp}
\end{figure}
\end{example}

\subsection{The General Case}
\label{sec:dp-sl-general}

Let us now consider the general case where $C$ contains multiple occurrences of $\replaceall(-, -, -)$ terms.
Then the satisfiability of $C$ is decided by the following two-step procedure.

\smallskip

\noindent{\bf Step I.} We utilise the dependency graph $C$ and compute nondeterministically a collection of atomic regular constraints $\cE(x)$ for each variable $x$, in a top-down manner.

Notice that $\cE(x)$ is represented succinctly as a set of pairs $(\cT, \cP)$, where $\cT=(Q, \delta)$ is a transition graph and $\cP \subseteq Q \times Q$. The intention of $(\cT, \cP)$ is to represent succinctly the collection of the atomic regular constraints containing $(Q, \delta, q, \{q'\})$ for each $(q,q') \in \cP$, where $q$ is the initial state and $\{q'\}$ is the set of final states.


Initially, let $G_0:= G_C$. In addition, for each variable $x$, we define $\cE_0(x)$ as follows: Let $x \in e_1 \wedge \cdots \wedge x \in e_n$ be the conjunction of all the atomic regular constraints related to $x$ in $C$. For each $i \in [n]$, let $\cA_i=(Q_i, \delta_i, q_{0,i}, F_i)$ be the NFA corresponding to $e_i$. We nondeterministically choose $q_i \in F_i$ and set $\cE_0(x):=  \left\{((Q_i, \delta_i), \{(q_{0,i}, q_i)\}) \mid i \in [n] \right\}$.

%

We begin with $i:= 0$ and repeat the following procedure until we reach some $i$ where $G_i$ is an empty graph, i.e. a graph without edges.
Note that $G_0$ was defined above.
\begin{enumerate}
\item Select a vertex $x$ of $G_i$ such that $x$ has no predecessors  and has two successors via edges $(x, (\rpleft, a), y)$ and $(x, (\rpright,a), z)$ in $G_i$.  Suppose $\cE_i(x)=\{(\cT_1,\cP_1), \cdots, (\cT_k, \cP_k)\}$, where for each $j \in [k]$, $\cT_j = (Q_j, \delta_j)$. Then $\cE_{i+1}(z)$ and  $\cE_{i+1}(y)$ and $G_{i+1}$ are computed as follows:
\begin{enumerate}
\item For each $j \in [k]$, nondeterministically choose a set $T_{j, z} \subseteq Q_j \times Q_j$.
\item If $y \neq z$, then let
\[
    \qquad\qquad
    \cE_{i+1}(z):= \cE_{i}(z) \cup \left\{(\cT_j, T_{j,z}) \mid j \in [k] \right\}
    \text{\ \ and\ \ }
    \cE_{i+1}(y): = \cE_{i}(y) \cup \left\{(\cT_{\cT_j, a, T_{j,z}}, \cP_j) \mid j \in [k] \right\}
\]
where $\cT_{\cT_j, a, T_{j,z}}$ is obtained from $\cT_j$ by first removing all the $a$-transitions, then adding all the transitions $(q, a, q')$ for $(q,q') \in T_{j,z}$.
Otherwise, let
$\cE_{i+1}(z):= \cE_{i}(z) \cup \left\{(\cT_j, T_{j,z}) \mid j \in [k] \right\} \cup \left\{(\cT_{\cT_j, a, T_{j,z}}, \cP_j) \mid j \in [k] \right\}$.
In addition, for each vertex $x'$ distinct from $y, z$, let $\cE_{i+1}(x') := \cE_i(x')$.
%
\item Let $G_{i+1}:= G_i \setminus \{(x, (\rpleft, a), y), (x, (\rpright,a), z)\}$.
\end{enumerate}
\item Let $i: = i+1$.
\end{enumerate}

For each variable $x$, let $\cE(x)$ denote the set $\cE_i(x)$ after exiting the above loop.

\smallskip

\noindent {\bf Step II.}
Output ``satisfiable'' if for each source variable $x$ there is an accepting run of the product of all the NFA in $\cE(x)$; otherwise, output ``unsatisfiable''.


\smallskip


It remains to argue the correctness and complexity of the above procedure and show how to obtain satisfying assignments to satisfiable constraints.
Correctness follows a similar argument to Proposition~\ref{prop-sat-sl-case} and
is presented in \shortlong{the full
version}{Appendix~\ref{sec:dp-sl-correctness}}.
Intuitively, Proposition~\ref{prop-sat-sl-case} shows our procedure correctly eliminates occurrences of $\replaceall$ until only regular constraints remain.

\mat{Describing model extraction below.}\zhilin{The description is nice. I think we should remove the reference to the appendix.}\tl{zhilin, which refernce?}
If, in the case that the equation \tl{constraint?} is satisfiable, one wishes to obtain a satisfying assignment to all variables, we can proceed as follows.
First, for each source variable $x$, nondeterministically choose an accepting run of the product of all the NFA in $\cE(x)$.
As argued in \shortlong{the full version}{Appendix~\ref{sec:dp-sl-correctness}}, the word labelling this run satisfies all regular constraints on $x$ since it is taken from a language that is guaranteed to be a subset of the set of words satisfying the original constraints.
For non-source variables, we derive an assignment as in Proposition~\ref{prop-sat-sl-case}, proceeding by induction from the source variables.
That is, select some variable $x$ such that $x$ is derived from variables $y$ and $z$ and assignments to both $y$ and $z$ have already been obtained.
The value for $x$ is immediately obtained by performing the $\replaceall$ operation using the assignments to $y$ and $z$.
That this value satisfies all regular constraints on $x$ follows the same argument as Proposition~\ref{prop-sat-sl-case}.
The procedure terminates when all variables have been assigned.

We now give an example before proceeding to the complexity analysis.

\begin{example}
Suppose $C \equiv x= \replaceall(y, 0, z) \wedge y = \replaceall(y', 1, z') \wedge x \in e_1 \wedge y \in e_2 \wedge z \in e_3 \wedge y' \in e_4  \wedge z' \in e_5$, where $e_1, e_2, e_3$ are as in Example~\ref{exmp-sl}, $e_4=0^* 1^* 0^* 1^*$, and $e_5=0^*1^*$. Let $\cA_4,\cA_5$ be the NFA corresponding to $e_4$ and $e_5$ respectively (see Figure~\ref{fig-sl-exmp-nested}). The dependency graph $G_C$ of $C$ is illustrated in Figure~\ref{fig-sl-exmp-nested}. Let $\cT_1,\cdots, \cT_5$ be the transition graph of $\cA_1,\cdots, \cA_5$ respectively. Then the collection of regular constraints $\cE(\cdot)$ are computed as follows.
\begin{itemize}
\item Let $G_0=G_C$. Pick the sets $\cE_0(x) = \{(\cT_1, \{(q_0, q_2)\})\}$, $\cE_0(y) = \{(\cT_2, \{(q^\prime_0, q^\prime_0)\})\}$, $\cE_0(z) = \{(\cT_3, \{(q^{\prime\prime}_0, q^{\prime\prime}_0)\})\}$, $\cE_0(y^\prime) = \{(\cT_4, \{(p_0, p_1)\})\}$, and $\cE_0(z^\prime) = \{(\cT_5, \{(p^{\prime}_0, p^{\prime}_1)\})\}$ nondeterministically.
\item Select the vertex $x$ in $G_0$, construct $\cE_1(y)$ and $\cE_1(z)$ as in Example~\ref{exmp-sl}, that is, nondeterministically choose $T_z =\{(q_0, q_0), (q_1,q_2)\}$, let 
$$\cE_1(z) = \{(\cT_3, \{(q^{\prime\prime}_0, q^{\prime\prime}_0)\}), (\cT_1, \{(q_0,q_0), (q_1, q_2)\})\} \mbox{ and } \cE_1(y)=\{(\cT_2, \{(q^\prime_0, q^\prime_0)\}), (\cT_{\cT_1, 0, T_z}, \{(q_0,q_2)\})\},$$ 
where $\cT_{\cT_1, 0, T_z}$ 
is the transition graph of $\cB_{\cA_1, 0, T_z}$ illustrated in Figure~\ref{fig-sl-exmp}. In addition, $\cE_1(x)=\cE_0(x)$, $\cE_1(y')=\cE_0(y')$ and $\cE_1(z')=\cE_0(z')$. Finally, we get $G_1$ from $G_0$ by removing the two edges from $x$.
\item Select the vertex $y$ in $G_1$, construct $\cE_2(y')$ and $\cE_2(z')$ as follows: Nondeterministically choose $T_{1,z'} = \{(q^\prime_0, q^\prime_0)\}$  for $\cT_2$ and $T_{2,z'}=\{(q_0,q_1),(q_1,q_2)\}$ for $\cT_{\cT_1, 0, T_z}$, let
$$\cE_2(z')=\left\{(\cT_5, \{(p^{\prime}_0, p^{\prime}_1)\}), (\cT_2, \{(q^\prime_0, q^\prime_0)\}), (\cT_{\cT_1, 0, T_z}, \{(q_0,q_1),(q_1,q_2)\}) \right\}, \text{ and}$$
$$\cE_2(y')=\left\{ (\cT_4, \{(p_0, p_1)\}), (\cT_{\cT_2, 1, T_{1,z'}}, \{(q^\prime_0, q^\prime_0)\}), (\cT_{\cT_{\cT_1, 0, T_z}, 1, T_{2,z'}}, \{(q_0,q_2)\}) \right\},$$
where $\cT_{\cT_2, 1, T_{1,z'}}$ and $\cT_{\cT_{\cT_1, 0, T_z}, 1, T_{2,z'}}$ are shown in Figure~\ref{fig-sl-exmp-nested-2}. In addition, $\cE_2(x)=\cE_1(x)$, $\cE_2(y)=\cE_1(y)$, and $\cE_2(z)=\cE_1(z)$. Finally, we get $G_2$ from $G_1$ by removing the two edges from $y$.
\end{itemize}
Since $G_2$ contains no edges, we have $\cE(x)=\cE_2(x)$, similarly for $\cE(y)$, $\cE(z)$, $\cE(y')$, and $\cE(z')$.
For the three source variables $y', z', z$, it is not hard to check that $01$ belongs to the intersection of the regular constraints in $\cE(z')$, $11$ belongs to the intersection of the regular constraints in $\cE(y')$, and $10$ belongs to the intersection of the regular constraints in $\cE(z)$. Then $y$ takes the value $\replaceall(11, 1, 01)=0101 \in \Ll(e_2)$, and $x$ takes the value $\replaceall(0101, 0, 10)=101101 \in \Ll(e_1)$. Therefore, $C$ is satisfiable. \qed
\begin{figure}[htbp]
\begin{center}
\includegraphics[scale=0.65]{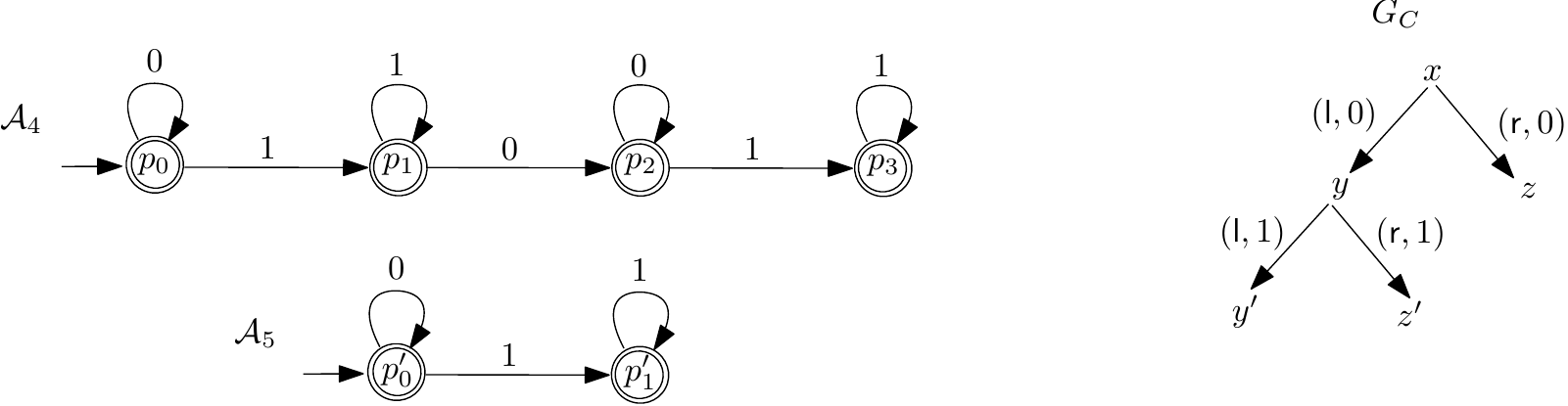}
\end{center}
\caption{An example for the single-letter case: Multiple $\replaceall$}\label{fig-sl-exmp-nested}
\end{figure}

\begin{figure}[htbp]
\begin{center}
\includegraphics[scale=0.65]{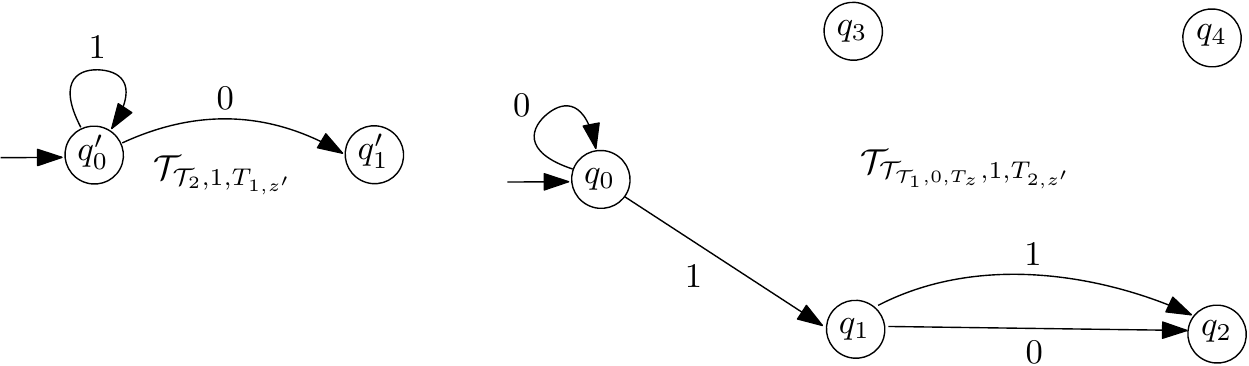}
\end{center}
\caption{$\cT_{\cT_2, 1, T_{1,z'}}$ and $\cT_{\cT_{\cT_1, 0, T_z}, 1, T_{2,z'}}$}\label{fig-sl-exmp-nested-2}
\end{figure}
\end{example}

\subsubsection{Complexity}

To show our decision procedure works in exponential space, it is sufficient to show that the cardinalities of the sets $\cE(x)$ are exponential w.r.t. the size of $C$.

\begin{proposition}\label{prop-sl-comp}
The cardinalities of $\cE(x)$ for the variables $x$ in $G_C$ are at most exponential in $\dmdidx(G_C)$, the diamond index of $G_C$.
\end{proposition}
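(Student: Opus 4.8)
The plan is to track how the size of each $\cE_i(x)$ grows as the loop in Step~I processes the dependency graph $G_C$ from the top down, and to show that the only source of super-polynomial blow-up is the nesting of diamonds, so that the final size is bounded by an exponential in $\dmdidx(G_C)$. First I would fix notation: write $N = |C|$ for the size of the input, and note that each transition graph appearing in any $\cE_i(x)$ has at most polynomially many states in $N$, since $\cT_{\cT_j,a,T_{j,z}}$ has exactly the same state set as $\cT_j$, and the initial graphs come from NFA equivalent to the regular expressions in $\psi$. Thus the only quantity that can grow is the \emph{number} of pairs $(\cT,\cP)$ in $\cE(x)$, and it suffices to bound this cardinality.

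The key structural observation is that when the loop processes a vertex $x$ with successors $y$ (via an $\rpleft$-edge) and $z$ (via an $\rpright$-edge), every element of $\cE_i(x)$ contributes exactly one element to $\cE_{i+1}(z)$ (the pair $(\cT_j, T_{j,z})$) and exactly one element to $\cE_{i+1}(y)$ (the pair $(\cT_{\cT_j,a,T_{j,z}},\cP_j)$), plus the old contents of $\cE_i(y)$ and $\cE_i(z)$ are retained. Hence $|\cE_{i+1}(z)| \le |\cE_i(z)| + |\cE_i(x)|$ and similarly for $y$. So the size of the constraint set at a vertex is bounded by the sum, over all paths from an initial vertex down to that vertex, of the initial sizes $|\cE_0(\cdot)|$ at the tops of those paths. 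Since each $|\cE_0(\cdot)| \le N$, we get $|\cE(x)| \le N \cdot (\text{number of paths in } G_C \text{ ending at } x)$, roughly speaking; more precisely one should phrase it as: $|\cE(x)|$ is bounded by $N$ times the number of paths from initial vertices of $G_C$ to $x$.

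Now I would invoke Proposition~\ref{prop-di}: for each pair of distinct vertices $z,z'$ there are at most $(|\vars(\varphi)||E_C|)^{O(\dmdidx(G_C))}$ distinct paths from $z$ to $z'$, and summing over the (polynomially many) initial vertices this gives a bound of the form $N^{O(\dmdidx(G_C))}$ on the number of relevant paths. Combining with the previous paragraph, $|\cE(x)| \le N \cdot N^{O(\dmdidx(G_C))} = N^{O(\dmdidx(G_C))}$, which is exponential in $\dmdidx(G_C)$ (and more precisely $2^{O(\dmdidx(G_C)\log N)}$), as claimed. Formally the path-counting bound should be proved by an induction on the processing order of the loop (equivalently, on a topological order of $G_C$), with the base case being the initial vertices, where $\cE(x)=\cE_0(x)$ has size at most $N$.

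The main obstacle I anticipate is bookkeeping the $y=z$ case cleanly: when the two successors coincide, a single element of $\cE_i(x)$ spawns \emph{two} new elements of $\cE_{i+1}(z)$ rather than one in each of two sets, so the recurrence becomes $|\cE_{i+1}(z)| \le |\cE_i(z)| + 2|\cE_i(x)|$. This only changes the constants, not the asymptotics, but it needs to be handled to make the induction go through — the safest route is to absorb the factor of $2$ into the base of the exponential, or to observe that each processing step at worst doubles the contribution of a path, and there are at most $|\vars(\varphi)|$ steps along any path, contributing only an extra $2^{|\vars(\varphi)|}$ factor which is again subsumed in $N^{O(\dmdidx(G_C))}$ once one notes that $\dmdidx(G_C) \ge 1$ whenever there is any diamond and the no-diamond case gives only polynomially many paths anyway. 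A second, minor point to check is that the claimed space bound of the overall algorithm follows: storing one $\cE(x)$ explicitly takes space $N^{O(\dmdidx)} \cdot \mathrm{poly}(N)$, i.e.\ exponential space in general, matching the EXPSPACE bound of Theorem~\ref{thm-main}, and when $\dmdidx(G_C)$ is bounded by a constant this collapses to polynomial space, which is exactly what is needed for the single-letter item of Corollary~\ref{cor-pspace}.
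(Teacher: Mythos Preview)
Your proposal is correct and follows essentially the same approach as the paper: bound $|\cE(x)|$ by tracing each element back to some $(\cT,\cP)\in\cE_0(x')$ at an ancestor $x'$, observe that one such element can spawn at most as many elements in $\cE(x)$ as there are paths from $x'$ to $x$, and then invoke Proposition~\ref{prop-di} for the path count.

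One remark on your handling of the $y=z$ case: your detour through an extra $2^{|\vars(\varphi)|}$ factor is unnecessary. When $y=z$, the two edges $(x,(\rpleft,a),y)$ and $(x,(\rpright,a),z)$ are distinct labelled edges, and in the paper paths are sequences of edges, so there are already two distinct paths of length~$1$ from $x$ to $y$. Hence the path-counting bound covers the two spawned elements automatically, with no extra bookkeeping needed. (Indeed, whenever $y=z$ those two edges already form a diamond, so $\dmdidx(G_C)\ge 1$ in that situation anyway.)
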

Therefore, according to Proposition~\ref{prop-sl-comp}, if the diamond index of $G_C$ is bounded by a constant $c$, then the cardinalities of $\cE(x)$ become \emph{polynomial} in the size of $C$ and we obtain a polynomial space decision procedure. In this case, we conclude that the satisfiability problem is PSPACE-complete.

\begin{proof}[Proof of Proposition~\ref{prop-sl-comp}]
Let $K$ be the maximum of $|\cE_0(x)|$ for $x \in \vars(\varphi)$.
For each variable $x$ in $G_C$, all the regular constraints in $\cE(x)$ are either from $\cE_0(x)$, or are generated from some regular constraints from $\cE_0(x')$ for the ancestors $x'$ of $x$. Let $x'$ be an ancestor of $x$. Then for each $(\cT, \cP) \in \cE_0(x')$, according to Step I in the decision procedure, by an induction on the maximum length of the paths in from $x'$ to $x$, we can show that the number of elements in $\cE(x)$ that are generated from $(\cT, \cP)$ is at most the number of different paths from $x'$ to $x$.
From Proposition~\ref{prop-di}, we know that there are at most $(|\vars(\varphi)|\cdot |E_C|)^{O(\dmdidx(G_C))}$ different paths from $x'$ to $x$. Since there are at most $|\vars(\varphi)|$ ancestors of $x$, we deduce that $|\cE(x)| \le K \cdot |\vars(\varphi)| \cdot (|\vars(\varphi)||E_C|)^{O(\dmdidx(G_C))}$.
\end{proof}

%
\hide{
\begin{proposition}
The cardinalities of $\cE(x)$ for the variables $x$ in $G_C$ are at most exponential  in the size of $C$ and become polynomial for $\strline_{\sf ss}[\replaceall]$ formulae.
\end{proposition}

\begin{proof}
%
Let $K$ denote the maximum cardinality of $\cE_0(x)$ for vertices $x$ in $G_C$.
For each $i$ and each vertex $x$ in $G_C$, let  $\#^{\sf anc}_i(x)$ denote the number of ancestors of $x$ in $G_i$.




We first prove the following claim.

\smallskip

\noindent {\bf Claim}. For each $i$ and each vertex $x$ in $G_C$,
$|\cE_i(x)| \le 3^i K$. In addition, if $C \in \strline_{\sf ss}[\replaceall]$, then for each non-source variable $x$ in $G_C$, $|\cE_i(x)| \le (\#^{\sf anc}_0(x)-\#^{\sf anc}_i(x)+1) K$.

\smallskip

We prove the claim by an induction on $i$.

\smallskip

\noindent {\it Induction base}: $i=0$. Evidently $|\cE_0(x)| \le K = 3^0 K$. Moreover, if $C \in \strline_{\sf ss}[\replaceall]$, then for each non-source variable $x$, $|\cE_0(x)| \le K = (\#^{\sf anc}_0(x)-\#^{\sf anc}_0(x)+1) K$.

\smallskip

\noindent {\it Induction step}:
Suppose $i > 0$.

Let $x$ be the vertex without predecessors and with successors in $G_{i-1}$ that is used to construct $G_{i}$. In addition, let $(x, (\rpleft, a), y)$ and $(x, (\rpright, a), z)$ be the two edges out of $x$ in $G_{i-1}$.

Let us first assume $y \neq z$.
Then $|\cE_{i}(z)| \le |\cE_{i-1}(z)| + |\cE_{i-1}(x)|$ and $|\cE_{i}(y)| \le |\cE_{i-1}(y)| + |\cE_{i-1}(x)|$.  By the induction hypothesis, $|\cE_{i-1}(x)| \le  3^{i-1} K$, $|\cE_{i-1}(y)| \le 3^{i-1} K$, and $|\cE_{i-1}(z)| \le 3^{i-1} K$. Therefore, $|\cE_{i}(z)|  \le |\cE_{i-1}(z)| + |\cE_{i-1}(x)| \le 3^{i-1} K + 3^{i-1} K \le 3^i K$. Similarly, $|\cE_{i}(y)| \le 3^i K$.

Next, we assume $y = z$. Then $|\cE_{i}(z)| \le |\cE_{i-1}(z)| + |\cE_{i-1}(x)| + |\cE_{i-1}(x)| \le 3* 3^{i-1} K = 3^i K$.

Let us assume that $C \in \strline_{\sf ss}[\replaceall]$, moreover, either $y$ or $z$ is not a source variable. Then $y \neq z$. Because otherwise, the in-degree of $y=z$ in $G_C$ is more than one, from the fact that $G_C$ is source-sharing, we deduce that $y=z$ has to be a source variable, a contradiction. Let us first assume that $z$ is not a source variable. Then the in-degree of $z$ is one, that is, the edge from $x$ to $z$ is the only incoming edge of $z$ in $G_C$. From this, we deduce that  $\cE_{i-1}(z) = \cE_0(z)$. Therefore, $|\cE_{i}(z)| \le |\cE_{i-1}(z)| + |\cE_{i-1}(x)| \le K + |\cE_{i-1}(x)|$. By the induction hypothesis, $|\cE_{i-1}(x)| \le (\#^{\sf anc}_{0}(x) - \#^{\sf anc}_{i-1}(x)+1)K$. We deduce that $|\cE_{i}(z)| \le K+ (\#^{\sf anc}_{0}(x) - \#^{\sf anc}_{i-1}(x)+1)K$. Moreover, since $\#^{\sf anc}_{0}(z) = \#^{\sf anc}_{0}(x)+1$ and $\#^{\sf anc}_{i-1}(x)=\#^{\sf anc}_{i}(x)= \#^{\sf anc}_{i}(z)=0$, we have $|\cE_{i}(z)| \le K+ (\#^{\sf anc}_{0}(z)-1 - \#^{\sf anc}_{i}(z)+1)K = (\#^{\sf anc}_{0}(z) - \#^{\sf anc}_{i}(z)+1)K$.
Similarly, if $y$ is not a source variable, we have $|\cE_{i}(y)| \le |\cE_{i-1}(y)| + |\cE_{i-1}(x)| = |\cE_{0}(y)| + |\cE_{i-1}(x)| \le K+|\cE_{i-1}(x)| \le K+ (\#^{\sf anc}_{0}(x) - \#^{\sf anc}_{i-1}(x)+1)K \le K+ (\#^{\sf anc}_{0}(y)-1 - \#^{\sf anc}_{i}(y)+1)K = (\#^{\sf anc}_{0}(y) - \#^{\sf anc}_{i}(y)+1)K$.

The proof of the claim is complete.

To complete the proof of the proposition, let $H$ be the maximum length of the paths in $G_C$. From the claim, we deduce that for each non-source variable $x$, $|\cE(x)| \le (H-1)K$. In addition, for each source variable $y$ in $G_C$, suppose that the in-degree of $y$ in $G_C$ is $m$, then $|\cE(y)| \le K + \sum \limits_{x: \mbox{ \small predecessor of } y} |\cE(x)| \le  K+m(H-1)K=(mH-m+1) K$.
Therefore, we conclude that if $C \in \strline_{\sf ss}[\replaceall]$, then the cardinalities of $\cE(x)$ become polynomial in the size of $C$.
\end{proof}
}





\section{Decision procedure for $\strline[\replaceall]$: The constant-string case}\label{sec:replaceallcs}

In this section, we consider the constant-string special case, that is, for an $\strline[\replaceall]$ formula $C = \varphi \wedge \psi$, every term of the form $\replaceall(z, e, z')$ in $\varphi$ satisfies that $e=u$ for $u \in \Sigma^+$. Note that the case when $u=\epsilon$ will be dealt with in Section~\ref{sec:replaceallre}. 

Again, let us start with the simple situation that
$C \equiv x = \replaceall(y, u, z) \wedge x \in e_1 \wedge y \in e_2 \wedge z \in e_3$,
where $|u| \ge 2$. For $i=1,2,3$, let $\cA_i = (Q_i, \delta_i, q_{0, i}, F_i)$
be the NFA corresponding to $e_i$. In addition, let $k = |u|$ and $u = a_1 \cdots a_k$ with $a_i \in \Sigma$ for each $i \in [k]$.

From the semantics, $C$ is satisfiable iff $x, y, z$ can be assigned with  strings $v, w, w'$ such that: (1) $v = \replaceall(w, u, w')$, and (2) $v,w, w'$ are accepted by $\cA_1, \cA_2, \cA_3$ respectively. Let $v, w, w'$ be the strings satisfying these two constraints. Since $v = \replaceall(w, u, w')$, we know that there are strings $w_1, w_2, \cdots, w_n$ such that $w= w_1 u w_2 \cdots u w_n$ and $v = w_1 w' w_2 \cdots w' w_n$. As $v$ is accepted by $\cA_1$, there is an accepting run of $\cA_1$ on $v$, say
$$
q_{0,1} \xrightarrow[\cA_1]{w_1} q_1 \xrightarrow[\cA_1]{w'} q'_1 \xrightarrow[\cA_1]{w_2} q_2 \xrightarrow[\cA_1]{w'} q'_2 \cdots q_{n-1} \xrightarrow[\cA_1]{w'} q'_{n-1} \xrightarrow[\cA_1]{w_n} q_n.
$$
Let $T_z = \{(q_i, q'_i) \mid i \in [n]\}$. Then $w' \in \Ll(\cA_3) \cap\ \bigcap \limits_{(q,q') \in T_z} \Ll(\cA_1(q, q'))$. Therefore, $\Ll(\cA_3) \cap\ \bigcap \limits_{(q,q') \in T_z} \Ll(\cA_1(q, q')) \neq \emptyset$. Similar to the single-letter case, we construct an NFA $\cB_{\cA_1, u, T_z}$ to characterise the satisfiability of $C$.  More precisely, $C$ is satisfiable iff there is $T_z \subseteq Q_1 \times Q_1$ such that $\Ll(\cA_3) \cap\ \bigcap \limits_{(q,q') \in T_z} \Ll(\cA_1(q, q')) \neq \emptyset$ and
$\Ll(\cA_2) \cap \Ll(\cB_{\cA_1, u, T_z}) \neq \emptyset$. Intuitively, when reading the string $w$, $\cB_{\cA_1, u, T_z}$ simulates the generation of $v$ from $w$ and $w'$ (that is, the replacement of  every occurrence of $u$ in $w$ with $w'$) and verifies that $v$ is accepted by $\cA_1$, by using $T_z$.
To build $\cB_{\cA_1, u, T_z}$, we utilise the concepts of window profiles and parsing automata defined below.
Intuitively, a window profile keeps track of which positions in the preceding characters could form the beginning of a match of $u$.

\begin{definition}[window profiles w.r.t. $u$]\zhilin{I changed the notation from $k$-window profile to window profile, and $\wprof_{u,k}$ to $\wprof_u$. Please check whether there are still occurrences of the old notation below.}
Let $v$ be a nonempty string with $k=|v|$, and $i \in [k]$. Then the \emph{window profile of the position $i$ in $v$ w.r.t. $u$} is $\overrightarrow{W}  \in \{\bot,\top\}^{k-1}$ defined as follows:
\begin{itemize}
\item If $i \ge k-1$, then for each $j \in [k-1]$, $\overrightarrow{W}[j] = \top$ iff $v[i-j+1] \cdots v[i]=u[1] \cdots u[j]$.
\item If $i < k-1$, then for each $j \in [i]$, $\overrightarrow{W}[j] = \top$ iff $v[i-j+1] \cdots v[i]=u[1] \cdots u[j]$, and for each $j: i < j \le k-1$, $\overrightarrow{W}[j] = \bot$.
\end{itemize}
Let $\wprof_{u}$ denote the set of window profiles of the positions in nonempty strings w.r.t. $u$.
\end{definition}


\begin{proposition}
$|\wprof_{u}| \le |u|$.
\end{proposition}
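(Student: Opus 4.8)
The plan is to show that a window profile w.r.t.\ $u$ carries, in essence, only one bit of genuine information: the length of the longest proper prefix of $u$ that it certifies as a suffix. Since that length lies in $\{0,1,\dots,|u|-1\}$, a set of size $|u|$, the bound $|\wprof_u|\le|u|$ will drop out immediately.

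Concretely, write $n=|u|$, so window profiles live in $\{\bot,\top\}^{n-1}$. Given a nonempty string $v$, a position $i$, and the corresponding window profile $\overrightarrow{W}$, I would set $m=\max\big(\{0\}\cup\{\,j\in[n-1]\mid\overrightarrow{W}[j]=\top\,\}\big)$; from the definition a $\top$ can occur only at indices $j\le i$, so $m\le i$ always. The heart of the argument is the claim that $\overrightarrow{W}$ is recoverable from $m$ and $u$ alone, i.e.\ that for every $j\in[n-1]$,
\[
\overrightarrow{W}[j]=\top \iff j\le m \ \text{ and }\ u[1]\cdots u[j]=u[m-j+1]\cdots u[m].
\]
For $m=0$ this is immediate (both sides false for all $j$). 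For $m\ge 1$: by maximality $\overrightarrow{W}[j]=\bot$ whenever $m<j\le n-1$, matching the right-hand side, and moreover $v[i-m+1]\cdots v[i]=u[1]\cdots u[m]$. For $1\le j\le m$, since $j\le m\le i$ the factor $v[i-j+1]\cdots v[i]$ is exactly the length-$j$ suffix of $v[i-m+1]\cdots v[i]=u[1]\cdots u[m]$, hence equals $u[m-j+1]\cdots u[m]$; so the defining condition $v[i-j+1]\cdots v[i]=u[1]\cdots u[j]$ becomes precisely $u[m-j+1]\cdots u[m]=u[1]\cdots u[j]$, which proves the claim. Consequently the map $\overrightarrow{W}\mapsto m$ is an injection from $\wprof_u$ into $\{0,\dots,n-1\}$, giving $|\wprof_u|\le n=|u|$; the degenerate case $|u|=1$ (empty profiles, $m\equiv 0$) is covered as well.

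I expect the only delicate point to be the bookkeeping in the claim: reconciling the two branches of the definition ($i\ge n-1$ versus $i<n-1$, the latter forcing the padding entries $\overrightarrow{W}[j]=\bot$ for $i<j\le n-1$) with the single uniform formula above, and making sure the inequality chain $j\le m\le i$ is always available so that the ``suffix of a prefix'' rewriting $v[i-j+1]\cdots v[i]=u[m-j+1]\cdots u[m]$ is legitimate. Everything else is elementary counting.
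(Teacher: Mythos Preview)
Your proof is correct and follows essentially the same approach as the paper: both define the quantity $m={\sf idx}_{\overrightarrow{W}}$ (the largest index with a $\top$, or $0$), observe that entries above $m$ are $\bot$ and that entries at or below $m$ are determined by $u[1]\cdots u[m]$ alone (since $v[i-m+1]\cdots v[i]=u[1]\cdots u[m]$), and conclude that $\overrightarrow{W}\mapsto m$ is an injection into $\{0,\dots,|u|-1\}$. Your version is slightly more explicit in spelling out the recovered formula $\overrightarrow{W}[j]=\top\iff u[1]\cdots u[j]=u[m-j+1]\cdots u[m]$, but the argument is the same.
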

\begin{proof}
Let $k=|u|$. For each profile $\overrightarrow{W}$, let $v$ be a nonempty string and $i$ be a position of $v$ such that for each $j \in [k-1]$, $\overrightarrow{W}[j] = \top$ iff $v[i-j+1] \dots v[i] = u[1] \dots u[j]$. Define ${\sf idx}_{\overrightarrow{W}}$ as follows: If there is $j \in [k-1]$ such that $\overrightarrow{W}[j]=\top$, then ${\sf idx}_{\overrightarrow{W}}$ is the maximum of such indices $j \in [k-1]$, otherwise, ${\sf idx}_{\overrightarrow{W}} =0$. The following fact holds for $\overrightarrow{W}$ and ${\sf idx}_{\overrightarrow{W}}$: 
\begin{itemize}
	\item for each $j': {\sf idx}_{\overrightarrow{W}} < j' \le k-1$, $\overrightarrow{W}[j']=\bot$,
	\item in addition, since $v[i-{\sf idx}_{\overrightarrow{W}}+1] \cdots v[i] = u[1] \cdots u[{\sf idx}_{\overrightarrow{W}}]$, the values of $\overrightarrow{W}[1],\cdots, \overrightarrow{W}[{\sf idx}_{\overrightarrow{W}}]$ are completely determined by $u[1] \cdots u[{\sf idx}_{\overrightarrow{W}}]$.
\end{itemize}
Let $\eta: \wprof_{u} \rightarrow \{0\} \cup [k-1]$ be a function such that for each $\overrightarrow{W} \in \wprof_{u}$, $\eta(\overrightarrow{W})={\sf idx}_{\overrightarrow{W}}$. Then $\eta$ is an injective function, since for every $\overrightarrow{W}, \overrightarrow{W'} \in \wprof_{u}$, ${\sf idx}_{\overrightarrow{W}}  = {\sf idx}_{\overrightarrow{W'}}$ iff $\overrightarrow{W} = \overrightarrow{W'}$. Therefore, we conclude that  $ |\wprof_{u}| \le k$.
\end{proof}

\begin{example}\label{wprof-exmp}
Let $\Sigma = \{0,1\}$, $u = 010$. Then $\wprof_{u}=\{\bot\bot, \top\bot, \bot\top\}$.
\begin{itemize}
\item Consider the string $v=1$ and the position $i=1$ in $v$. Since $v[1]=1 \neq u[1]=0$, the window profile of $i$ in $v$ w.r.t. $u$ is $\bot \bot$.
\item Consider the string $v=00$ and the position $i=2$ in $v$. Since $v[2]=u[1]$ and $v[1]v[2] \neq u[1]u[2]$, the window profile of $i$ in $v$ w.r.t. $u$ is $\top\bot$.
\item Consider the string $v=01$ and the position $i=2$ in $v$. Since $v[2] \neq u[1]$ and $v[1]v[2] = u[1]u[2]$, the window profile of $i$ in $v$ w.r.t. $u$ is $\bot\top$.
\end{itemize}
Note that $\top\top \not \in \wprof_{u}$, since for every string $v$ and the position $i$ in $v$, if $v[i-1]v[i]=u[1]u[2]=01$, then $v[i]=1 \neq 0= u[1]$.
\end{example}


We will construct a parsing automaton $\cA_u$ from $u$, which parses a string $v$ containing at least one occurrence of $u$ (i.e.\ $v \in \Sigma^\ast u \Sigma^\ast$) into $v_1 u v_2 u \dots v_l u v_{l+1}$ such that $v_j u[1] \dots u[k-1] \not \in \Sigma^\ast u \Sigma^\ast$ for each $1 \le j \le l$.
This ensures that the only occurrence of $u$ in each $v_j u$ is a suffix.
Finally, we also require $v_{l+1} \not \in \Sigma^\ast u \Sigma^\ast$.
The window profiles w.r.t. $u$ will be used to ensure that $v$ is correctly parsed, namely, the first, second, $\cdots$, occurrences of $u$ are correctly identified.

\begin{definition}[Parsing automata]
Given a string $u$ we define the \emph{parsing automaton} $\cA_u$ to be the NFA $(Q_u, \delta_u, q_{0,u}, F_u)$ where $q_{0,u}=q_0$ and the remaining components are given below.
\begin{itemize}
	\item  $Q_u =\left\{q_0 \right\} \cup \left\{ \left(\search, \overrightarrow{W} \right)\ \big\vert\ \overrightarrow{W} \in \wprof_{u} \right\} \cup  \left\{ \left(\verify, j, \overrightarrow{W} \right) \ \big\vert\ j \in [k-1], \overrightarrow{W} \in \wprof_{u} \right\}$, where $q_0$ is a distinguished state whose purpose will become clear later on,  and the tags ``$\search$" and ``$\verify$" are used to denote whether $\cA_u$ is in the ``search'' mode to search for the next occurrence of $u$, or in the ``verify'' mode to verify that the current position is a part of an occurrence of $u$.

	\item $\delta_{u}$ is defined as follows.
	\begin{itemize}
		\item The transition $\left(q_0, a, \left(\search, \overrightarrow{W}\right)\right) \in \delta_u$, where $\overrightarrow{W}[1]=\top$ iff $a = u[1]$, and for each $i: 2 \le i \le k-1$, $\overrightarrow{W}[i] = \bot$.
		\item The transition $\left(q_0, u[1], \left(\verify, 1, \overrightarrow{W}\right)\right) \in \delta_u$, where $\overrightarrow{W}[1]=\top$ and for each $i: 2 \le i \le k-1$, $\overrightarrow{W}[i] = \bot$.
		\item For each state $\left(\search, \overrightarrow{W} \right)$ and $a \in \Sigma$ such that $\overrightarrow{W}[k-1] = \bot$ or $a \neq u[k]$,
		\begin{itemize}
			\item the transition $\left(\left(\search, \overrightarrow{W} \right), a, \left(\search, \overrightarrow{W'} \right)\right) \in \delta_u$, where $\overrightarrow{W'}[1] = \top$ iff $a = u[1]$, and for each $i: 2 \le i \le k-1$, $\overrightarrow{W'}[i] =\top$ iff ($\overrightarrow{W}[{i-1}] = \top$ and $a = u[i]$),
			\item if $a = u[1]$, then the transition $\left(\left(\search, \overrightarrow{W} \right), a, \left(\verify, 1, \overrightarrow{W'} \right)\right) \in \delta_u$, where $\overrightarrow{W'}[1]=\top$,  and for each $i: 2 \le i \le k-1$, $\overrightarrow{W'}[i] =\top$ iff ($\overrightarrow{W}[{i-1}] = \top$ and $a = u[i]$).
		\end{itemize}
		\item For each state $\left(\verify, i-1, \overrightarrow{W} \right)$ and $a \in \Sigma$ such that
		\begin{itemize}
			\item $2 \le i \le k-1$,
			\item $\overrightarrow{W}[i-1]=\top$, $a = u[i]$, and
			\item either $\overrightarrow{W}[k-1]=\bot$ or $a \neq u[k]$,
		\end{itemize}
		we have $\left(\left(\verify, i-1, \overrightarrow{W} \right), a, \left(\verify, i, \overrightarrow{W'} \right)\right) \in \delta_u$, where for each $j: 2 \le j \le k-1$, $\overrightarrow{W'}[j] = \top$ iff $\overrightarrow{W}[j-1]=\top$ and $a = u[j]$.
		\item For each state $\left(\verify, k-1, \overrightarrow{W} \right)$ and $a \in \Sigma$ such that $\overrightarrow{W}[k-1]=\top$ and $a  = u[k]$, we have $\left(\left(\verify, k-1, \overrightarrow{W} \right), a, q_0\right) \in \delta_u$.
		%
	\end{itemize}
Note that the constraint $\overrightarrow{W}[k-1] = \bot$ or $a \neq u[k]$ is used to guarantee that each occurrence of the state $q_0$, except the first one, witnesses the \emph{first} occurrence of $u$ from the beginning or after its previous occurrence. In other words, the constraint $\overrightarrow{W}[k-1] = \bot$ or $a \neq u[k]$ is used to guarantee that after an occurrence of $q_0$, if $q_0$ has not been reached again,  then $u$ is forbidden to occur.

	%
	\item $F_u= \left\{q_0 \right\} \cup \left\{\left(\search, \overrightarrow{W} \right)\ \big\vert\ \overrightarrow{W} \in \wprof_{u} \right\} $. \\
	Note that the states $\left(\verify, j, \overrightarrow{W} \right)$ are not final states, since, when in these states, the verification of the current occurrence of $u$ has not been complete yet.
\end{itemize}
\end{definition}

Let $Q_{\search}  = \left\{ \left(\search, \overrightarrow{W} \right) \ \big\vert\ \overrightarrow{W} \in \wprof_{u} \right\}$,  and $Q_{\verify, i} = \left\{ \left(\verify, i, \overrightarrow{W} \right) \ \big\vert\ \overrightarrow{W} \in \wprof_{u} \right\}$ for each $i \in [k-1]$. In addition, let $Q_{\verify} = \bigcup \limits_{i \in [k-1]} Q_{\verify,i}$.
Suppose $v = v_1 u v_2 u \cdots v_l u v_{l+1}$ such that $v_j u[1] \dots u[k-1] \not \in \Sigma^\ast u \Sigma^\ast$ for each $1 \le j \le l$, in addition, $v_{l+1} \not \in \Sigma^\ast u \Sigma^\ast$. Then there exists a \emph{unique} accepting run $r$ of $\cA_u$ on $v$ such that the state sequence in $r$ is of the form
$q_0\ r_1\ q_0\ r_2\ q_0\ \cdots\ r_l\ q_0\ r_{l+1}$, where for each $j \in [l]$, $r_j \in  \Ll((Q_{\search})^+ \concat Q_{\verify, 1} \concat \cdots  \concat Q_{\verify, k-1})$, and $r_{l+1} \in \Ll((Q_{\search})^*)$. 


\begin{example}
Consider $u=010$ in Example~\ref{wprof-exmp}. The parsing automaton  $\cA_u$ is illustrated in Figure~\ref{fig-pa-exmp}. Note that there are no $0$-transitions out of $(\search, \bot\top)$, since this would imply an occurrence of $u = 010$, which should be verified by the states from $Q_{\verify}$, more precisely, by the state sequence $q_0 (\verify, 1, \top\bot) (\verify, 2, \bot\top)q_0$.
\begin{figure}[htbp]
\begin{center}
\includegraphics[scale=0.55]{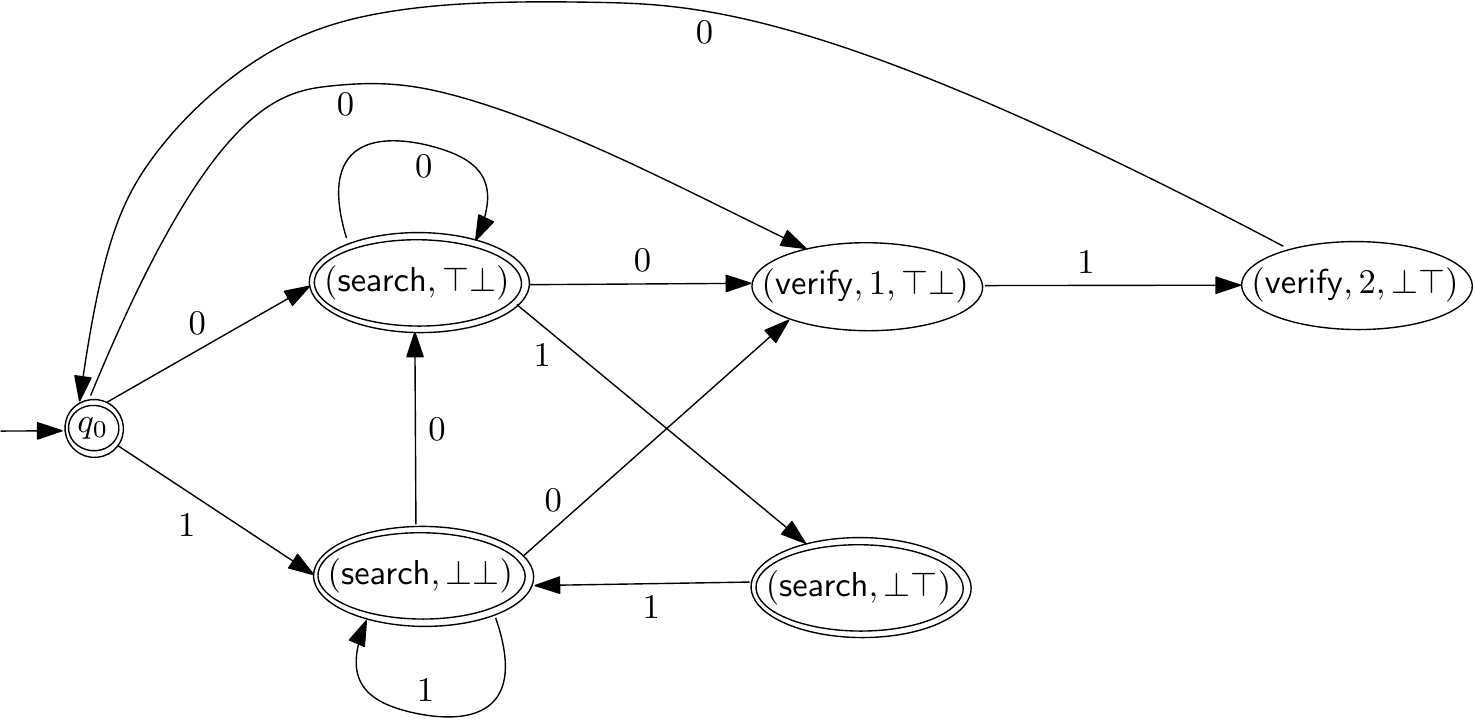}
\end{center}
\caption{The parsing automaton $\cA_u$ for $u = 010$}\label{fig-pa-exmp}
\end{figure}
\end{example}

We are ready to present the construction of $\cB_{\cA_1, u,  T_{z}}$. The NFA $\cB_{\cA_1, u, T_{z}}$ is constructed by the following three-step procedure.
\begin{enumerate}
\item Construct the product automaton $\cA_1 \times \cA_u$. Note that the initial state of $\cA_1 \times \cA_u$ is $(q_{0},q_0)$ and the set of final states of $\cA_1 \times \cA_u$ is $F_1 \times F_u$.

\item Remove from $\cA_1 \times \cA_u$ all the (incoming or outgoing) transitions associated with the states from $Q_1 \times Q_{\verify}$.

\item For each pair $(q,q') \in T_{z}$ and each sequence of transitions in $\cA_u$ of the form
$$
\left(p, u[1], \left(\verify, 1, \overrightarrow{W'_1} \right) \right), \left( \left(\verify, 1, \overrightarrow{W'_1} \right), u[2],
 \left(\verify, 2, \overrightarrow{W'_2}\right) \right), 
 \cdots, \left(\left(\verify, k-1, \overrightarrow{W'_{k-1}} \right), u[k], q_0\right),
$$
where  $p=q_0$ or $p = \left(\search, \overrightarrow{W} \right)$,
add the following transitions
$$
\begin{array}{c}
\left((q,p), u[1], \left(q, \left(\verify, 1, \overrightarrow{W'_1} \right) \right) \right), 
\left( \left(q, \left(\verify, 1, \overrightarrow{W'_1} \right) \right), u[2], \left(q, \left(\verify, 2, \overrightarrow{W'_2}\right)\right)\right),  
\cdots, \\
\left(\left(q, \left(\verify, k-2, \overrightarrow{W'_{k-2}} \right) \right), u[k-1], \left (q, \left (\verify, k-1, \overrightarrow{W'_{k-1}} \right) \right) \right),
\left( \left(q, \left (\verify, k-1, \overrightarrow{W'_{k-1}} \right) \right), u[k], \left(q', q_0 \right)\right).
\end{array}
$$
Note that the number of aforementioned sequences of transitions in $\cA_u$ is at most $|Q_{\search}|+1$, since  $ \overrightarrow{W'_1},\dots,  \overrightarrow{W'_{k-1}}$ are completely determined by $\overrightarrow{W} $ and $u$.
Intuitively, when $\cA_u$ identifies an occurrence of $u$, if the current state of $\cA_1$ is $q$, then after reading the occurrence of $u$, $\cB_{\cA_1, u, T_z}$ jumps from $q$ to some state $q'$ such that $(q,q') \in T_z$.
\end{enumerate}

\begin{example}\label{exmp-cs-case}
Consider $C \equiv x = \replaceall(y, u, z) \wedge x \in e_1 \wedge y \in e_2 \wedge z \in e_3$, where $u = 010$, and $e_1,e_2,e_3$ are as in Example~\ref{exmp-sl} (cf. Figure~\ref{fig-sl-exmp}). 
Let $T_z = \{(q_0,q_0),(q_1,q_2)\}$. The NFA $\cB_{\cA_1, u, T_z}$ is obtained
    from the product automaton $\cA_1 \times \cA_u$ (which we give in the
    \shortlong{full version}{appendix} for reference) by first removing all the transitions  associated with the states from $Q_1 \times Q_{\verify}$, then adding the transitions according to $T_z$ as aforementioned (see Figure~\ref{fig-cs-exmp-2}, where thick edges indicate added transitions).  It is routine to check that $01010101$ is accepted by $\cB_{\cA_1, u, T_z}$ and $\cA_2$. Moreover, $10 \in \Ll(\cA_3) \cap \Ll(\cA_1(q_0,q_0)) \cap \Ll(\cA_1(q_1,q_2))$. Let $y$ be $01010101$ and $z$ be $10$. Then $x$ takes the value $\replaceall(01010101, 010, 10)=101101$, which is accepted by $\cA_1$. Therefore, $C$ is satisfiable.
%
\begin{figure}[htbp]
\begin{center}
\includegraphics[scale=0.68]{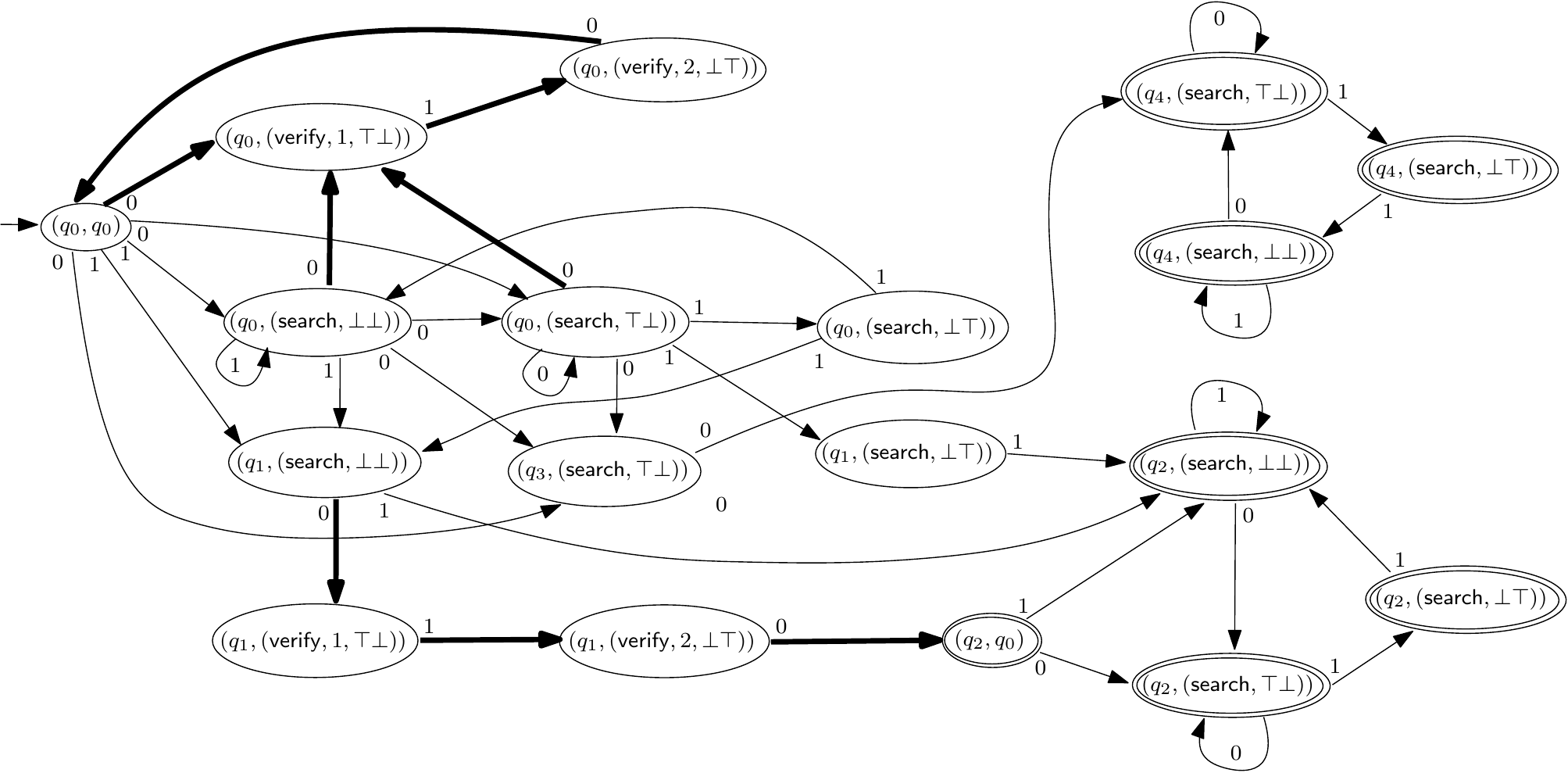}
\end{center}
\caption{The NFA $\cB_{\cA_1, u, T_z}$ for $u = 010$ and $T_z= \{(q_0,q_0),(q_1,q_2)\}$}\label{fig-cs-exmp-2}
\end{figure}
\end{example}

For the more general case that the $\strline[\replaceall]$ formula $C$ contains more than one occurrence of $\replaceall(-, -, -)$ terms, similar to the single-letter case in Section~\ref{sec:replaceallsl}, we can nondeterministically remove the edges in the dependency graph $G_C$ in a top-down manner and reduce the satisfiability of $C$ to the satisfiability of a collection of regular constraints for source variables.

\paragraph*{Complexity}
When constructing $G_{i+1}$ from $G_i$, suppose the two edges from $x$ to $y$
and $z$ respectively are currently removed, let the labels of the two edges be
$({\sf l}, u)$ and $({\sf r}, u)$ respectively. Then each element $(\cT, \cP)$
of $\cE_i(x)$ may be transformed into an element $(\cT', \cP')$ of
$\cE_{i+1}(y)$ such that $|\cT'| = O(|u||\cT|)$, meanwhile, it may also be
transformed into an element $(\cT'', \cP'')$ of $\cE_{i+1}(z)$ such that $\cT''$
has the same state space as $\cT$. In each step of the decision procedure, the
state space of the regular constraints may be multiplied by a factor $|u|$. The
state space of these regular constraints is at most exponential in the end, so
that we can still solve the nonemptiness problem of the intersection of all
these regular constraints in exponential space. In addition, if the
$\rpleft$-length of $G_C$ is bounded by a constant $c$, then for each source
variable, we get polynomially many regular constraints, where each of them has a
state space of polynomial size. Therefore, we can get a polynomial space
algorithm. See \shortlong{the full
version}{Appendix~\ref{sec:cs-complexity-full}} for a detailed analysis.




\section{Decision procedure for $\strline[\replaceall]$: The regular-expression case} \label{sec:replaceallre}

We consider the case that the second parameter of the $\replaceall$ function is a regular expression.  The decision procedure presented below is a generalisation of those in Section~\ref{sec:replaceallsl} and Section~\ref{sec:replaceallcs}.

As in the previous sections, we will again start with the simple situation that $C \equiv x = \replaceall(y, e_0, z) \wedge x \in e_1 \wedge y \in e_2 \wedge z \in e_3$. For $0\leq i\leq 3$, let $\cA_i = (Q_i, \delta_i, q_{0,i}, F_i)$ be the NFA corresponding to $e_i$.

Let us first consider the special case $\Ll(e_0)= \{\varepsilon\}$. Then according to the semantics, for each string $u = a_1 \cdots a_n$, $\replaceall(u, e_0, v) = v a_1 v \cdots v a_n v$. We  can solve the satisfiability of $C$ as follows:
\begin{enumerate}
\item Guess a set $T_z \subseteq Q_1 \times Q_1$.
\item Construct $\cB_{\cA_1, \varepsilon, T_z}$ from $\cA_1$ and $T_z$ as follows: For each $(q,q') \in T_z$, add to $\cA_1$ a transition $(q, \varepsilon, q')$. Then transform the resulting NFA into one without $\varepsilon$-transitions (which can be done in polynomial time).
\item  Decide the nonemptiness of $\Ll(\cA_2) \cap \Ll(\cB_{\cA_1, \varepsilon, T_z})$ and $\Ll(\cA_3) \cap \bigcap \limits_{(q,q') \in T_z} \Ll(\cA_1(q,q'))$.
\end{enumerate}

Next, let us assume that $\Ll(e_0) \neq \{\varepsilon\}$. For simplicity of presentation,
we assume $\varepsilon \not \in \Ll(e_0)$. The case that $\varepsilon \in \Ll(e_0)$ can be dealt with in a slightly more technical albeit similar way.

Since $\varepsilon \not \in \Ll(e_0)$, we have $q_{0,0} \not \in F_0$. In addition, without loss of generality, we assume that there are no incoming transitions for $q_{0,0}$ in $\cA_0$.

To check the satisfiability of $C$, similar to the constant-string case, we construct a parsing automaton $\cA_{e_0}$ that parses a string $v \in \Sigma^\ast e_0 \Sigma^\ast$ into $v_1 u_1 v_2 u_2 \dots v_l u_l v_{l+1}$ such that
\begin{itemize}
	\item for each $j \in [l]$, $u_j$ is the leftmost and longest matching of $e_0$ in $(v_1 u_1 \dots v_{j-1} u_{j-1})^{-1} v$,
	%
	\item $v_{l+1} \not \in \Sigma^\ast e_0 \Sigma^\ast$.
\end{itemize}

We will first give an intuitive description of the behaviour of the automaton $\cA_{e_0}$.
We start with an automaton that can have an infinite number of states and describe the automaton as starting new ``threads'', i.e., run multiple copies of $\cA_0$ on the input word (similar to alternating automata).
We also show how this automaton can be implemented using only a finite number of states.
Intuitively, in order to search for the leftmost and longest matching of $e_0$, $\cA_{e_0}$ behaves as follows.
\begin{itemize}
\item $\cA_{e_0}$ has two modes, ``$\searchleft$'' and ``$\searchlong$'', which intuitively means searching  for the first and last position of the leftmost and longest matching of $e_0$ respectively.
	\item When in the ``$\searchleft$'' mode, $\cA_{e_0}$ starts a new thread of $\cA_0$ in each position and records \emph{the set of states} of the threads into a vector.
    In addition, it nondeterministically makes a ``leftmost'' guessing, that is, guesses that the current position is the first position of the leftmost and longest matching.
    If it makes such a guessing, it enters the ``$\searchlong$'' mode, runs the thread started in the current position and searches for the last position of the leftmost and longest matching.
    Moreover, it stores in a set $S$ the union of the sets of states of all the threads that were started before the current position and continues running these threads to make sure that, in these threads, the final states will not be reached (thus, the current position is indeed the first position of the leftmost and longest matching).
	\item When in the ``$\searchlong$'' mode, $\cA_{e_0}$ runs a thread of $\cA_{0}$ to search for the last position of the leftmost and longest matching.
    If the set of states of the thread contains a final state, then $\cA_{e_0}$ nondeterministically guesses that the current position is the last position of the leftmost and longest matching.
    If it makes such a guessing, then it resets the set of states of the thread and starts a new round of searching for the leftmost and longest matching.
    In addition, it stores the original set of states of the thread into a set $S$ and continues running the thread to make sure that in this thread, the final states  will not be reached (thus, the current position is indeed the last position of the leftmost and longest matching).
	%
	%
%
%
	%
	\item Since the length of the vectors of the sets of states of the threads may become unbounded, in order to obtain a finite state automaton, the following trick is applied.
    Suppose that the vector is $S_1 S_2 \cdots S_n$.
    For each pair of indices $i, j: i < j$ and each $q \in S_i \cap S_j$, remove $q$ from $S_j$.
    The application of this trick is justified by the following arguments: Since $q$ occurs in both $S_i$ and $S_j$ and the thread $i$ was started before the thread $j$, even if from $q$  a final state can be reached in the future, the position where the thread $j$ was started \emph{cannot} be the first position of the leftmost and longest matching, since the state $q$ is also a state of the thread $i$ and the position where the thread $i$ was started is before the position where the thread $i$ was started.
\end{itemize}

Before presenting the construction of $\cA_{e_0}$ in detail, let us introduce some additional notation.

For $S \subseteq Q_0$ and $a \in \Sigma$, let $\delta_0(S,a)$ denote $\{q' \in Q_0 \mid \exists q \in S.\ (q,a,q') \in \delta_0 \}$. For $a \in \Sigma$ and a vector $\rho = S_1 \cdots S_n$ such that $S_i \subseteq Q_0$ for each $i \in [n]$, let $\delta_0(\rho,a)=\delta_0(S_1,a) \cdots \delta_0(S_n, a)$.

For a vector $S_1 \cdots S_n$ such that $S_i \subseteq Q_0$ for each $i \in [n]$, we define $\red(S_1 \cdots S_n)$ inductively:
\begin{itemize}
\item If $n = 1$, then $\red(S_1)=S_1$ if $S_1 \neq \emptyset$, and $\red(S_1)=\varepsilon$ otherwise.
\item If $n > 1$, then
\[
\red(S_1 \cdots S_n)=
\begin{cases}
\red(S_1 \cdots S_{n-1}) & \mbox{ if } S_n \subseteq \bigcup \limits_{i \in [n-1]} S_i,\\
\red(S_1 \cdots S_{n-1}) (S_n \setminus \bigcup \limits_{i \in [n-1]} S_i) &  \mbox{o/w}
\end{cases}
\]
%
\end{itemize}
For instance,
$\red(\emptyset\{q\})=\{q\}$ and
$$\red(\{q_1, q_2\} \{q_1, q_3\} \{q_2, q_4\})=\red(\{q_1,q_2\}\{q_1,q_3\})\{q_4\}= \red(\{q_1,q_2\}) \{q_3\} \{q_4\} = \{q_1,q_2\} \{q_3\}\{q_4\}.$$

We give the formal description of $\cA_{e_0}=(Q_{e_0}, \delta_{e_0}, q_{0,e_0}, F_{e_0})$ below.
The automaton will contain states of the form $(\rho, m, S)$ where $\rho$ is the vector $S_1\cdots S_n$ recording the set of states of the threads of $\cA_0$. The second component $m$ is either $\searchleft$ or $\searchlong$ indicating the mode.
Finally $S$ is the set of states representing all threads for which final states must not be reached.
\begin{itemize}
	\item $Q_{e_0}$ comprises
	\begin{itemize}
		\item the tuples $(\{q_{0,0}\}, \searchleft, S)$ such that $S \subseteq Q_0$,
		\item the tuples $(\rho \{q_{0,0}\}, \searchleft, S)$ such that  $\rho = S_1 \cdots S_n$ with $n \ge 1$ satisfying that for each $i \in [n]$, $S_i \subseteq Q_0 \setminus \{q_{0,0}\}$, and for each pair of indices $i, j: i < j$, $S_i \cap S_j = \emptyset$, moreover, $S \subseteq Q_0 \setminus F_0$,
		%
		%
		\item the tuples $(S_1, \searchlong, S)$ such that $S_1 \subseteq Q_0$, $S \subseteq Q_0 \setminus F_0$ and $S_1 \not \subseteq S$;
	\end{itemize}
	\item $q_{0,e_0}= (\{q_{0,0}\}, \searchleft, \emptyset)$,
	\item $F_{e_0}$ comprises the states of the form $(-, \searchleft, -) \in Q_{e_0}$,
	\item $\delta_{e_0}$ is defined as follows:
	\begin{itemize}
		%
		\item (continue $\searchleft$) suppose $(\rho \{q_{0,0}\}, \searchleft, S) \in Q_{e_0}$ such that $\rho = S_1 \cdots S_n$ with $n \ge 0$  ($n = 0$ means that $\rho$ is empty), $a \in \Sigma$, $\big(\bigcup \limits_{j \in [n]} \delta_0(S_j, a) \cup \delta_0(\{q_{0,0}\},a)\big) \cap F_0 = \emptyset$, and $\delta_0(S,a) \cap F_0 = \emptyset$, then

		$$\left((\rho  \{q_{0,0}\}, \searchleft, S), a, \left(\red(\delta_0(\rho\{q_{0,0}\}, a)) \{q_{0,0}\}, \searchleft, \delta_0(S,a) \right) \right) \in \delta_{e_0},$$

		\medskip

		Intuitively, in a state $(\rho, \searchleft, S)$, if $\big(\bigcup \limits_{j \in [n]} \delta_0(S_j, a) \cup \delta_0(\{q_{0,0}\},a)\big) \cap F_0 = \emptyset$ and $\delta_0(S,a) \cap F_0 = \emptyset$, then $\cA_{e_0}$ can choose to stay in the ``$\searchleft$'' mode.
		Moreover, no states occur more than once in $\red(\delta_0(\rho \{q_{0,0}\}, a)) \{q_{0,0}\}$, since $q_{0,0}$ does not occur in $\red(\delta_0(\rho\{q_{0,0}\}, a))$, (from the assumption that there are no incoming transitions for $q_{0,0}$ in $\cA_0$),
		\item (start $\searchlong$) suppose $(\rho \{q_{0,0}\}, \searchleft, S) \in Q_{e_0}$ such that $\rho = S_1 \cdots S_n$ with $n \ge 0$, $a \in \Sigma$, $\delta_0(S,a) \cap F_0 = \emptyset$, $\big(\bigcup \limits_{j \in [n]} \delta_0(S_j, a) \big) \cap F_0 = \emptyset$,  and $\delta_0(\{q_{0,0}\}, a) \not \subseteq \delta_0(S, a) \cup \bigcup \limits_{j \in [n]} \delta_0(S_j, a)$, then
		$$\left((\rho \{q_{0,0}\}, \searchleft, S), a, \left(\delta_0(\{q_{0,0}\}, a), \searchlong, \delta_0(S, a) \cup \bigcup \limits_{j \in [n]} \delta_0(S_j, a) \right) \right) \in \delta_{e_0}.$$

		Intuitively, from a state $(\rho \{q_{0,0}\}, \searchleft, S)$ with $\rho = S_1 \cdots S_n$, when reading a letter $a$, if $\big(\bigcup \limits_{j \in [n]} \delta_0(S_j, a) \big) \cap F_0 = \emptyset$, $\delta_0(S,a) \cap F_0 = \emptyset$, and $\delta_0(\{q_{0,0}\}, a) \not \subseteq \delta_0(S, a) \cup \bigcup \limits_{j \in [n]} \delta_0(S_j, a)$, then $\cA_{e_0}$ guesses that the current position is the first position of the leftmost and longest matching, it goes to the ``$\searchlong$'' mode, in addition, it keeps in the first component of the control state only the set of states of the thread started in the current position, and puts the union of the sets of the states of all the threads that have been started before, namely, $\bigcup \limits_{j \in [n]} \delta_0(S_j, a)$, into the third component to guarantee that none of these threads will reach a final state in the future (thus the guessing that the current position is the first position of the leftmost and longest matching is correct),

		\item (continue $\searchlong$) suppose $(S_1, \searchlong, S) \in Q_{e_0}$, $\delta_0(S,a) \cap F_0 = \emptyset$, and $\delta_0(S_1,a) \not \subseteq \delta_0(S,a)$, then
		$$((S_1, \searchlong, S), a, (\delta_0(S_1,a), \searchlong, \delta_0(S,a))) \in \delta_{e_0},$$
		intuitively, $\cA_{e_0}$ guesses that the current position is not the last position of the leftmost and longest matching and continues the ``$\searchlong$'' mode,
		\item (end $\searchlong$) suppose $(S_1, \searchlong, S) \in Q_{e_0}$, $\delta_0(S_1,a) \cap F_0 \neq \emptyset$, and $\delta_0(S,a) \cap F_0 = \emptyset$, then
		$$((S_1, \searchlong, S), a, (\{q_{0,0}\}, \searchleft, \delta_0(S,a) \cup \delta_0(S_1,a))) \in \delta_{e_0},$$
		intuitively, when $\delta_0(S_1,a) \cap F_0 \neq \emptyset$ and $\delta_0(S,a) \cap F_0 = \emptyset$, $\cA_{e_0}$ guesses that the current position is the last position of the leftmost and longest matching, resets the first component to $\{q_{0,0}\}$, goes to the ``$\searchleft$'' mode, and puts $\delta_0(S_1, a)$ to the third component to guarantee that the current thread will not reach a final state in the future (thus the guessing that the current position is the last position of the leftmost and longest matching is correct).
		\item ($a$ matches $e_0$) suppose $(\rho \{q_{0,0}\}, \searchleft, S) \in Q_{e_0}$ such that $\rho = S_1 \cdots S_n$ with $n \ge 0$,  $a \in \Sigma$, $\big(\bigcup \limits_{j \in [n]} \delta_0(S_j, a) \big) \cap F_0 = \emptyset$, $\delta_0(\{q_{0,0}\}, a) \cap F_0 \neq \emptyset$, and $\delta_0(S,a) \cap F_0 = \emptyset$, then
%
		$$\left(
		(\rho \{q_{0,0}\}, \searchleft, S), a,
		\left(\{q_{0,0}\}, \searchleft, \delta_0(S,a) \cup \bigcup \limits_{j \in [n]} \delta_0(S_j, a) \cup \delta_0(\{q_{0,0}\}, a) \right)
		\right) \in \delta_{e_0},$$
%
		intuitively, from a state $(\rho \{q_{0,0}\}, \searchleft, S)$ with $\rho = S_1 \cdots S_n$, when reading a letter $a$, if $\big(\bigcup \limits_{j \in [n]} \delta_0(S_j, a) \big) \cap F_0 = \emptyset$, $\delta_0(\{q_{0,0}\}, a) \cap F_0 \neq \emptyset$, and $\delta_0(S,a) \cap F_0 = \emptyset$, then $\cA_{e_0}$  guesses that $a$ is simply the leftmost and longest matching of $e_0$ (e.g. when $e_0= a$), then it directly goes to the ``$\searchleft$'' mode (without going to the ``$\searchlong$'' mode), resets the first component of the control state to $\{q_{0,0}\}$, and puts the union of the sets of the states of all the threads that have been started, including the one started in the current position, namely, $\bigcup \limits_{j \in [n]} \delta_0(S_j, a) \cup \delta_0(\{q_{0,0}\}, a)$, into the third component to  guarantee that none of these threads will reach a final state in the future (where $\bigcup \limits_{j \in [n]} \delta_0(S_j, a)$ is used to validate the leftmost guessing and $\delta_0(\{q_{0,0}\}, a)$ is used to validate the longest guessing).
	\end{itemize}
\end{itemize}

Let $Q_{\searchleft}  = \{ (-, \searchleft, -) \in Q_{e_0} \}$,  $Q_{\searchlong} = \{ (-, \searchlong, -)  \in Q_{e_0}\}$, and $v = v_1 u_1 v_2 u_2 \cdots v_l u_l v_{l+1}$ such that $u_j$ is the leftmost and longest matching of $e_0$ in $(v_1 u_1 \cdots v_{j-1} u_{j-1})^{-1} v$ for each $j \in [l]$, in addition, $v_{l+1} \not \in \Sigma^\ast e \Sigma^\ast$. Then there exists a \emph{unique} accepting run $r$ of $\cA_{e_0}$ on $v$ such that the state sequence in $r$ is of the form
%
$$
(\{q_{0,0}\}, \searchleft, \emptyset)\ r_1\ ( \{q_{0,0}\}, \searchleft, -) \ r_2\ ( \{q_{0,0}\}, \searchleft, -)
\cdots r_l\ ( \{q_{0,0}\}, \searchleft, -)\ r_{l+1},
$$
%
where for each $j \in [l]$, $r_j \in \Ll((Q_{\searchleft})^* \concat (Q_{\searchlong})^*)$, and $r_{l+1} \in \Ll((Q_{\searchleft})^*)$. Intuitively, each occurrence of the state subsequence from $\Ll((Q_{\searchlong})^* \concat (\{q_{0,0}\}, \searchleft,-))$, except the first one, witnesses the \emph{leftmost and longest} matching of $e_0$ in $v$ from the beginning or after the previous such a matching.

Since in the first component $\rho q_{0,0}$ of each state of $\cA_{e_0}$, no states from $\cA_0$ occur more than once,  it is not hard to see that $|\cA_{e_0}|$ is $2^{O(p(|\cA_0|))}$ for some polynomial $p$.



Given $T_z \subseteq Q_1 \times Q_1$, we construct $\cB_{\cA_1, e_0,  T_{z}}$ by  the following three-step procedure.
\begin{enumerate}
\item Construct the product of $\cA_1$ and $\cA_{e_0}$.

\item Remove all transitions associated with states from $Q_1 \times Q_{\searchlong}$, in addition, remove all transitions of the form $((q, (\rho\{q_{0,0}\}, \searchleft, S)), a, (q', (\{q_{0,0}\}, \searchleft,S')))$ such that $\delta_0(q_{0,0},a) \cap F_0 \neq \emptyset$.

\item For each pair $(q,q') \in T_{z}$, do the following,
\begin{itemize}
\item for each transition
%
%
$$\left((\rho \{q_{0,0}\}, \searchleft, S), a, \left(\delta_0(\{q_{0,0}\}, a), \searchlong, \delta_0(S, a) \cup \bigcup \limits_{j \in [n]} \delta_0(S_j, a) \right) \right) \in \delta_{e_0},$$
%
%
add a transition
%
%
$\left( \left(q, (\rho \{q_{0,0}\}, \searchleft, S) \right), a, \left(q, \left(\delta_0(\{q_{0,0}\}, a), \searchlong, \delta_0(S, a) \cup \bigcup \limits_{j \in [n]} \delta_0(S_j, a) \right) \right) \right),$
%
%
%
\item for each transition
		$$((S_1, \searchlong, S), a, (\delta_0(S_1,a), \searchlong, \delta_0(S,a))) \in \delta_{e_0},$$
add a transition
$\left((q, (S_1, \searchlong, S)), a, (q, (\delta_0(S_1,a), \searchlong, \delta_0(S,a))) \right),$
\item for each transition
		$$((S_1, \searchlong, S), a, (\{q_{0,0}\}, \searchleft, \delta_0(S,a) \cup \delta_0(S_1,a))) \in \delta_{e_0},$$
add a transition
		$((q, (S_1, \searchlong, S)), a, (q', (\{q_{0,0}\}, \searchleft, \delta_0(S,a) \cup \delta_0(S_1,a)))),$
\item for each
%
%
		$\left(
		(\rho \{q_{0,0}\}, \searchleft, S), a,
		\left(\{q_{0,0}\}, \searchleft, \delta_0(S,a) \cup \bigcup \limits_{j \in [n]} \delta_0(S_j, a) \cup \delta_0(\{q_{0,0}\}, a) \right)
		\right) \in \delta_{e_0},$
%
%
add a transition
%
		$$\left(
		(q, (\rho \{q_{0,0}\}, \searchleft, S)), a,
		\left(q', \left(\{q_{0,0}\}, \searchleft, \delta_0(S,a) \cup \bigcup \limits_{j \in [n]} \delta_0(S_j, a) \cup \delta_0(\{q_{0,0}\}, a) \right)\right)
		\right).$$
%
\end{itemize}
\end{enumerate}
Since $|\cA_{e_0}|$ is $2^{O(p(|\cA_0|))}$, it follows that $|\cB_{\cA_1, e_0, T_z}|$ is $|\cA_1| \cdot 2^{O(p(|\cA_0|))}$. In addition, since $|\cA_0|=O(|e_0|)$, we deduce that $|\cB_{\cA_1, e_0, T_z}|$ is $|\cA_1| \cdot 2^{O(p(|e_0|))}$.


For the more general case that the $\strline[\replaceall]$ formula $C$ contains more than one occurrence of $\replaceall(-, -, -)$ terms, we still nondeterministically remove the edges in the dependency graph $G_C$ in a top-down manner and reduce the satisfiability of $C$ to the satisfiability of a collection of regular constraints for source variables.

\paragraph*{Complexity}
In each step of the reduction, suppose the two edges out of $x$ are currently
removed, let the two edges be from $x$ to $y$ and $z$ and labeled by $({\sf l},
e)$ and $({\sf r}, e)$ respectively, then each element of $(\cT, \cP)$ of
$\cE_i(x)$ may be transformed into an element $(\cT',\cP')$ of $\cE_{i+1}(y)$
such that $|\cT'| = |\cT| \cdot 2^{O(p(|e|))}$, meanwhile, it may also be
transformed into an element $(\cT'',\cP'')$ of $\cE_{i+1}(y)$ such that $\cT''$
has the same state space as $\cT$. Thus, after the reduction, for each source
variable $x$, $\cE(x)$ may contain exponentially many elements, and each of them
may have a state space of exponential size. To solve the nonemptiness problem of
the intersection of all these regular constraints, the exponential space is
sufficient. In addition, if the $\rpleft$-length of $G_C$ is at most one, we can
show that for each source variable $x$,  $\cE(x)$ corresponds to the
intersection of polynomially many regular constraints, where each of them has a
state space at most exponential size. To solve the nonemptiness of the
intersection of these regular constraints, a polynomial space is sufficient. See
\shortlong{the full version}{Appendix~\ref{sec:re-complexity-full}} for a 
detailed analysis.



\section{Undecidable extensions}\label{sec-ext}

In this section, we consider the language $\strline[\replaceall]$ extended with either integer constraints, character constraints, or $\indexof$ constraints, and show that each of such extensions leads to undecidability. 
We will use variables of, in additional to the type $\str$, the Integer data type $\intnum$. The type $\str$ consists of the string variables as in the previous sections. A variable of type $\intnum$, usually referred to as an \emph{integer variable}, ranges over the set $\Nat$ of natural numbers. Recall that, in previous sections, we have used $x, y, z, \ldots$ to denote the variables of $\str$ type.  Hereafter we typically use $\mathfrak{l}, \mathfrak{m}, \mathfrak{n}, \ldots$ to denote the variables of $\intnum$. The
choice of omitting negative integers is for simplicity. Our
results can be easily extended to the case where $\intnum$ includes negative integers.

We begin by defining the kinds of constraints we will use to extend $\strline[\replaceall]$.
First, we describe integer constraints, which express constraints on the length or number of occurrences of symbols in words.

\begin{definition}[Integer constraints] \label{def:intconst} 
	An atomic integer constraint over $\Sigma$ is an expression of the form
	$a_1t_1+\cdots+a_nt_n\leq d$
where $a_1, \cdots, a_n,d\in \mathbb{Z}$ are constant integers (represented in binary), and each \emph{term} $t_i$ is either 
	\begin{enumerate}
		\item an integer variable $\mathfrak{n}$;
		\item $|x|$ where $x$ is a  string variable; or 
		\item $|x|_a$ where $x$ is string variable and $a\in \Sigma$ is a constant letter.
	\end{enumerate}
Here, $|x|$ and $|x|_a$ denote the length of $x$ and the number of occurrences of $a$ in $x$, respectively. 

An \emph{integer constraint} over $\Sigma$ is a Boolean combination of atomic integer constraints over $\Sigma$.
\end{definition}

Character constraints, on the other hand, allow to compare symbols from different strings. The formal definitions are given as follows. 

\begin{definition}[Character constraints]
	An \emph{atomic character constraint} over $\Sigma$ is an equation of the form $x[t_1]=y[t_2]$ where 
	\begin{itemize}
		\item $x$ and $y$ are either a string variable or a constant string in $\Sigma^*$, and 
		\item $t_1$ and $t_2$ are either integer variables or constant positive integers.
	\end{itemize} 
Here, the interpretation of $x[t_1]$ is the $t_1$-th letter of $x$. In case that $x$ does not have the $t_1$-th letter \emph{or} $y$ does not have the $t_2$-th letter, the constraint $x[t_1] = y[t_2]$ is false by convention.  
	
A \emph{character constraint} over $\Sigma$ is a Boolean combination of atomic character constraints over $\Sigma$. 
\end{definition}

We also consider the constraints involving the $\indexof$ function.

\begin{definition}[$\indexof$ Constraints]
An atomic $\indexof$ constraint over $\Sigma$ is a formula of the form $t\ \mathfrak{o}\ \indexof(s_1, s_2)$, where 
\begin{itemize}
\item $t$ is an integer variable, or a positive integer (recall that here we assume that the first position of a string is $1$), or the value $0$ (denoting that there is no occurrence of $s_1$ in $s_2$), 
\item $\mathfrak{o} \in \{\ge, \le\}$, and
\item  $s_1,s_2$ are either string variables or constant strings. 
\end{itemize}
We consider the \emph{first-occurrence} semantics of $\indexof$.  More specifically, $t \ge \indexof(s_1, s_2)$ holds if $t$ is no less than the first position in $s_2$ where $s_1$ occurs, similarly for $t \le \indexof(s_1, s_2)$.

An $\indexof$ constraint over $\Sigma$ is a Boolean combination of atomic $\indexof$ constraints over $\Sigma$.
\end{definition}

%

%
 
 \medskip

We will show that the extension of $\strline[\replaceall]$ with integer constraints entails undecidability, by a reduction from (a variant of) the Hilbert's 10th problem, which is well-known to be undecidable \cite{Mat93}. 
For space reasons, all proofs appear in \shortlong{the full
version}{Appendix~\ref{sec:ext-undec-proofs}}.
Intuitively, we want to find a solution to $f(x_1, \cdots, x_n)=g(x_1, \cdots, x_n)$ in the natural numbers, where $f$ and $g$ are polynomials with positive coefficients.
We can use the length of string variables over a unary alphabet $\{a\}$ to represent integer variables, addition can be performed with concatenation, and multiplication of $x$ and $y$ with $\replaceall(x, a, y)$.
The integer constraint $|x| = |y|$ asserts the equality of $f$ and $g$. Note that the use of concatenation can be further dispensed since, by Proposition~\ref{prop-concat},   concatenation is expressible by $\replaceall$ at the price of a slightly extended alphabet.  



\begin{theorem}\label{thm-ext-int}
	For the extension of $\strline[\replaceall]$ with \emph{integer constraints}, the satisfiability problem is undecidable, even if only a single integer constraint of the form $|x| = |y|$ or $|x|_a = |y|_a$ is used.
\end{theorem}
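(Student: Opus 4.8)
The plan is to reduce from a variant of Hilbert's tenth problem. By Matiyasevich's theorem \cite{Mat93}, it is undecidable whether a Diophantine equation $P(x_1,\ldots,x_n)=0$ has a solution in $\Nat$. Collecting the monomials of $P$ with positive coefficients on one side and the (sign-flipped) monomials with negative coefficients on the other, this is equivalent to asking whether $f(x_1,\ldots,x_n)=g(x_1,\ldots,x_n)$ has a solution in $\Nat$, where $f$ and $g$ are polynomials with non-negative integer coefficients. This variant is therefore undecidable as well, and it is what I would reduce from.

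Fix the unary alphabet $\{a\}$ and encode a natural number $m$ by the string $a^m$; an unknown $x_i$ is represented by a string variable $X_i$ subject to the regular constraint $X_i\in a^\ast$, so that $|X_i|$ is exactly the value of $x_i$. Arithmetic is simulated as follows: (i) a constant $c$ is the constant string $a^c$; (ii) addition is concatenation, since for $X=a^m$ and $Y=a^n$ the constraint $Z=X\concat Y$ forces $Z=a^{m+n}$; and (iii) multiplication is $\replaceall$, since $\replaceall(a^m,a,a^n)=a^{mn}$ — each of the $m$ occurrences of $a$ in $a^m$ is the leftmost–longest match of the pattern $a$ and is replaced by $a^n$, and the boundary cases $m=0$ or $n=0$ correctly yield $\varepsilon=a^0$. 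Crucially, the pattern used is always the single letter $a$, as required by the statement.

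Given $f$ and $g$ in sum-of-monomials form, I would build, in static single assignment style, string variables $U_f$ and $U_g$ with $|U_f|=f(|X_1|,\ldots,|X_n|)$ and $|U_g|=g(|X_1|,\ldots,|X_n|)$: each monomial $c\,x_{i_1}^{e_1}\cdots x_{i_k}^{e_k}$ is obtained from the constant $a^c$ by a chain of $\replaceall(-,a,X_{i_j})$ steps, and the monomials are then accumulated by a chain of concatenations; using fresh variables throughout, every left-hand side is distinct and every right-hand side mentions only source variables $X_1,\ldots,X_n$ or previously defined variables, so the relational part is straight-line. The only regular constraints needed are $X_i\in a^\ast$ for the source variables (unary-ness of all derived variables is then automatic). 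Finally add the single integer constraint $|U_f|=|U_g|$ — or, equivalently here, $|U_f|_a=|U_g|_a$, since all these strings lie in $a^\ast$. By construction, the resulting $\strline[\replaceall]$-plus-integer constraint is satisfiable iff $f=g$ has a solution in $\Nat$. If one wishes to avoid concatenation altogether, Proposition~\ref{prop-concat} rewrites each $X\concat Y$ using $\replaceall$ with single-letter patterns over a two-letter extension of the alphabet, which leaves the $U_f$, $U_g$ length bookkeeping intact and keeps all patterns single letters.

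The routine parts are the SSA bookkeeping and the fact that $a^\ast$ is closed under the operations used. The one place requiring care is the exact semantics of $\replaceall$ on the boundary cases (empty subject or empty replacement) and the verification that, over a unary alphabet, leftmost–longest matching of the letter pattern $a$ really does produce $a^{mn}$ at every step of the recursion. The only genuine conceptual point — and it is standard — is choosing the right undecidable arithmetic problem: one must work with the non-negative-coefficient, $\Nat$-solutions variant of Hilbert's tenth problem so that no subtraction is needed, since strings encode only non-negative quantities.
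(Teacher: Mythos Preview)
Your proposal is correct and follows essentially the same approach as the paper: reduce from the positive-coefficient, $\Nat$-solution variant of Hilbert's tenth problem, encode naturals over a unary alphabet, simulate addition by concatenation and multiplication by $\replaceall(-,a,-)$, build the two polynomial values in SSA form, and equate them with a single length constraint; the paper likewise invokes Proposition~\ref{prop-concat} to eliminate $\concat$. Your treatment is in fact slightly more careful than the paper's sketch, as you explicitly note the boundary cases and the alphabet-extension issue when removing concatenation.
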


%

Notice that the extension of $\strline[\replaceall]$ with only one integer constraint of the form $|x| = |y|$ entails undecidability. \zhilin{a remark for the Buchi and Senger's result about undecidability, please check} \tl{rephrased slightly}
We remark that the undecidability result here does \emph{not} follow from the undecidability result for the extension of word equations with the letter-counting modalities in \cite{buchi}, 
since the formula by \cite{buchi} is not straight-line. 

By utilising a further result on Diophantine equations, we show that for the extension of $\strline[\replaceall]$ with integer constraints, even if the $\strline[\replaceall]$ formulae are simple (in the sense that their dependency graphs are of depth at most one), the satisfiability problem is still undecidable (note that no restrictions are put on the integer constraints in this case).

%


\begin{theorem}\label{thm-ext-int-strong}
	For the extension of $\strline[\replaceall]$ with integer constraints, even if $\strline[\replaceall]$ formulae are restricted to those whose dependency graphs are of depth at most one, the satisfiability problem is still undecidable.
\end{theorem}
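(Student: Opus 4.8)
The plan is to reduce from Hilbert's Tenth Problem over the naturals: given a polynomial equation $P(\mathfrak{n}_1,\ldots,\mathfrak{n}_k)=0$, decide whether it has a solution in $\Nat^k$. As recalled for Theorem~\ref{thm-ext-int}, this stays undecidable even when $P$ is presented as $f-g$ with $f,g$ polynomials with nonnegative integer coefficients \cite{Mat93}. The naive encoding evaluates $P$ by nested applications $x\mapsto\replaceall(x,a,y)$, which produces a dependency graph of unbounded depth; to beat this we invoke the classical fact that every Diophantine equation is equisatisfiable over $\Nat$ with a \emph{system} of equations, each of which has one of the forms $\mathfrak{n}_k=\mathfrak{n}_i+\mathfrak{n}_j$, $\mathfrak{n}_k=\mathfrak{n}_i\cdot\mathfrak{n}_j$, or $\mathfrak{n}_k=c$, plus one final linear equation $\mathfrak{n}_f=\mathfrak{n}_g$ identifying the two output gates. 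One obtains this normal form by introducing a fresh variable naming every intermediate value in the arithmetic circuit computing $f$ and $g$. The point is that in this normal form each product equation is ``flat'': it mentions only previously named integer variables, so the result of one product never has to be threaded, as a string, into the next product.

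Given such a system I would build an $\strline[\replaceall]$ formula $\varphi$ and an integer constraint $\vartheta$ as follows. Fix a letter $a\in\Sigma$. Keep the integer variables $\mathfrak{n}_i$ of the system, ranging over $\Nat$. For each product equation $\mathfrak{n}_k=\mathfrak{n}_i\cdot\mathfrak{n}_j$ introduce \emph{fresh} string variables $y^{(k)}_i,y^{(k)}_j,x^{(k)}$; add to $\varphi$ the single conjunct $x^{(k)}=\replaceall(y^{(k)}_i,a,y^{(k)}_j)$, add the regular constraints $y^{(k)}_i\in a^*$ and $y^{(k)}_j\in a^*$, and add to $\vartheta$ the atomic equalities $|y^{(k)}_i|=\mathfrak{n}_i$, $|y^{(k)}_j|=\mathfrak{n}_j$, $|x^{(k)}|=\mathfrak{n}_k$. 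For $y^{(k)}_i=a^{\mathfrak{n}_i}$, $y^{(k)}_j=a^{\mathfrak{n}_j}$, Definition~\ref{def:replaceall} with a single-letter pattern (which does not match $\varepsilon$) gives $x^{(k)}=(a^{\mathfrak{n}_j})^{\mathfrak{n}_i}$, hence $|x^{(k)}|=\mathfrak{n}_i\cdot\mathfrak{n}_j$, and this reads correctly also when $\mathfrak{n}_i=0$ or $\mathfrak{n}_j=0$ (the result is $\varepsilon$). Each addition equation, each constant equation, and the final equation $\mathfrak{n}_f=\mathfrak{n}_g$ are linear and go directly into $\vartheta$; by Proposition~\ref{prop-concat} no concatenation is needed, but in any case we have avoided it entirely.

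Equisatisfiability is then routine: from a solution of $P=0$ take the obvious assignment (each $\mathfrak{n}_i$ its value, each $y^{(k)}_\bullet$ and $x^{(k)}$ the corresponding power of $a$), and conversely any model of $\varphi\wedge\vartheta$ restricts, via the integer variables, to a solution of the arithmetic system and hence of $P=0$. For the depth bound: every $x^{(k)}$ occurs as the left-hand side of exactly one $\replaceall$ conjunct and as an argument of none, while each $y^{(k)}_\bullet$ is a source variable; so $G_\varphi$ is a disjoint union of forks (one node with two outgoing edges, both to source nodes) and has depth exactly one. The straight-line condition holds trivially since each right-hand side mentions only source variables, and $\vartheta$ is a legitimate (purely conjunctive) integer constraint over terms $|x|$ and integer variables, on which we impose no restriction, as claimed.

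The main obstacle is the one conceptual step: recognising that the degree-$\le 2$ flattening of Diophantine equations, combined with the device of carrying each value through its own integer variable and using a \emph{fresh} string copy (of the right length) in every product gadget rather than reusing previously computed result strings, is exactly what makes the encoding depth-one. Once this is in place, the remaining verification --- the behaviour of $\replaceall(a^m,a,w)$, straight-lineness, and depth --- is immediate and mechanical.
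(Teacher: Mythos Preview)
Your proof is correct and follows essentially the same strategy as the paper: flatten the Diophantine problem so that each multiplication is isolated, encode each product $\mathfrak{n}_i\cdot\mathfrak{n}_j$ by a single depth-one conjunct $\replaceall(a^{\mathfrak{n}_i},a,a^{\mathfrak{n}_j})$, and push all remaining (linear) arithmetic into the integer constraint. The only difference is in the flattening step. The paper invokes a specific result (their Theorem~\ref{thm-quad-eq}, from \cite{ID04}) giving undecidability already for systems of the shape $A_i=B_i$, $y_iF_i=G_i$, $y_iH_i=I_i$ with $A_i,B_i,F_i,G_i,H_i,I_i$ linear; this lets them reuse the \emph{same} string variables $y_i,x_j$ across all product gadgets. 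You instead do the generic arithmetic-circuit flattening, name every intermediate value by an integer variable, and take a \emph{fresh} string copy in each gadget, gluing copies together via the length constraints $|y^{(k)}_i|=\mathfrak{n}_i$. Your route is a bit more self-contained (no external citation needed beyond Hilbert's Tenth Problem itself) and makes the depth-one shape of $G_\varphi$ completely transparent; the paper's route is slightly more economical in the number of string variables. Both are equally valid.
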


By essentially encoding $|x|=|y|$ with \emph{character} or \emph{$\indexof$ constraints}, we show:

\begin{proposition}\label{prop-ext-ch-index}
	For the extension of $\strline[\replaceall]$ with either the character constraints or the $\indexof$ constraints, the satisfiability problem is undecidable. 
\end{proposition}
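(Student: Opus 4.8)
The plan is to reduce from Theorem~\ref{thm-ext-int}: satisfiability of $\strline[\replaceall]$ together with a \emph{single} integer constraint of the form $|x| = |y|$ is already undecidable, so it suffices to simulate one such constraint using only character constraints (respectively, only $\indexof$ constraints), while keeping the formula straight-line.

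Fix an instance $C \wedge (|x| = |y|)$ with $C \in \strline[\replaceall]$ over an alphabet $\Gamma$; we may assume every variable of $C$ carries a regular constraint confining it to $\Gamma^*$, which is vacuous over $\Gamma$ and hence changes nothing. Pick a fresh letter $b \notin \Gamma$, work over $\Gamma \cup \{b\}$, and add two fresh string variables together with the relational constraints $x_1 = x \concat b$ and $y_1 = y \concat b$ (which may instead be encoded with $\replaceall$ via Proposition~\ref{prop-concat}) and the regular constraints $x \in \Gamma^*$, $y \in \Gamma^*$. Because $b \notin \Gamma$ and $x,y$ are pinned to $\Gamma^*$, the value of $x_1$ (resp.\ $y_1$) contains exactly one occurrence of $b$, at position $|x|+1$ (resp.\ $|y|+1$). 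These two equations preserve straight-line-ness, since $x_1,y_1$ are new, appear on no right-hand side, and are defined from the already-introduced $x,y$.

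It remains to force the two occurrences of $b$ into the same position. Introduce a fresh integer variable $\mathfrak{n}$. For the \emph{character-constraint} extension, add $x_1[\mathfrak{n}] = b[1] \wedge y_1[\mathfrak{n}] = b[1]$ (with $b$ read as the constant string of length one): this forces the $\mathfrak{n}$-th letter of both $x_1$ and $y_1$ to be $b$ (in particular $\mathfrak{n}\neq 0$, as there is no $0$-th letter), hence $\mathfrak{n} = |x|+1 = |y|+1$, i.e.\ $|x|=|y|$. For the \emph{$\indexof$} extension, add
\[
\mathfrak{n} \le \indexof(b,x_1) \wedge \mathfrak{n} \ge \indexof(b,x_1) \wedge \mathfrak{n} \le \indexof(b,y_1) \wedge \mathfrak{n} \ge \indexof(b,y_1),
\]
which forces $\mathfrak{n} = \indexof(b,x_1) = |x|+1$ and $\mathfrak{n} = \indexof(b,y_1) = |y|+1$, again yielding $|x|=|y|$. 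In both cases the new conjuncts are exactly constraints permitted by the extension, and the underlying $\strline[\replaceall]$ part is still straight-line.

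Correctness is then immediate: from a model of $C \wedge (|x|=|y|)$, set $x_1 := (\text{$x$-value})\,b$, $y_1 := (\text{$y$-value})\,b$, and $\mathfrak{n}:=|x|+1$ to obtain a model of the new formula (including the edge case where $x$ or $y$ is $\varepsilon$); conversely any model of the new formula satisfies $C$, which it entails, and also $|x|=|y|$ by the gadget, and the confinement of all variables to $\Gamma^*$ makes this restricted model legitimate over the original alphabet. Hence satisfiability is preserved and undecidability transfers. The reduction is otherwise routine; the one point that needs care is the alphabet bookkeeping — $b$ must be genuinely fresh and $x,y$ must be kept inside $\Gamma^*$, since a stray $b$ occurring inside $x$ would falsify the intended reading of both $x_1[\mathfrak{n}] = b[1]$ and $\indexof(b,x_1)$ — together with the (easy) observation that inserting the two extra equations for $x_1,y_1$ does not disturb the straight-line ordering.
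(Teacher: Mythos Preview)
Your proof is correct. For the character-constraint half it is essentially identical to the paper's argument: append a fresh end-marker, then pin the marker's position in both strings using a shared integer variable.

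For the $\indexof$ half you take a genuinely different route. You reuse the same marker gadget and read off the marker position via $\indexof(b,x_1)$ and $\indexof(b,y_1)$. The paper instead exploits that the underlying reduction from Hilbert's tenth problem already lives over a \emph{unary} alphabet $\{a\}$: over such an alphabet, $1=\indexof(y_f,y_g)$ holds precisely when $y_f$ is a prefix of $y_g$, so $|y_f|=|y_g|$ is captured directly by $1=\indexof(y_f,y_g)\wedge 1=\indexof(y_g,y_f)$, with no fresh letter, no new string variables, and no integer variable. Your version has the advantage of being uniform across the two extensions and of not depending on the unary-alphabet detail of Theorem~\ref{thm-ext-int}; the paper's version is shorter and avoids extending the alphabet or the straight-line part at all.
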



\section{Related work}\label{sec-rel}

We now discuss some related work. We split our discussion into two categories: (1) theoretical results in terms of decidability and complexity; (2) practical (but generally incomplete) approaches used in string solvers.  We emphasise work on $\replaceall$ functions as they are our focus. 

\paragraph{Theoretical Results}
We have discussed in Section \ref{sec:intro} works on string constraints with 
the theory of strings with concatenation. This research programme builds on
the question of solving satisfiability of \emph{word equations}, i.e., a
string equation $\alpha = \beta$ containing concatenation of string constants
and variables. Makanin showed decidability \cite{Makanin}, whose upper bound
was improved to PSPACE in \cite{P04} using a word compression technique. 
A simpler algorithm was in recent years proposed in \cite{J17} using
the recompression technique. The best lower bound for this problem is still NP,
and closing this complexity gap is a long-standing open problem. Decidability
(in fact, the PSPACE upper bound) can be retained in the presence of regular
constraints (e.g. see \cite{Schulz}). This can be extended to existential theory
of concatenation with regular constraints using the technique of \cite{buchi}.
The replace-all operator cannot be expressed by the concatenation operator alone. For 
this reason, our decidability of the fragment of $\strline[\replaceall]$ cannot
be derived from the results from the theory of concatenation alone.


\OMIT{
Essentially, the constraint language  $\strline[\replaceall]$ studied in this paper is word equations where the $\replaceall$ functions, which subsume the concatenation operation, are used. However, the additional straight-line restrictions are imposed.  
}

Regarding the extension with length constraints, it is still a long-standing
open problem whether word equations with length constraints is decidable, though
it is known that letter-counting (e.g. counting the number of occurrences of 0s
and 1s separately) yields undecidability \cite{buchi}. It was shown in
\cite{LB16} that the length constraints (in fact, letter-counting) can be
added to the subclass of $\strline[\replaceall]$ where the pattern/replacement 
are constants, while preserving decidability. In contrast, if we allow 
variables on the replacement parameters of formulas in $\strline[\replaceall]$,
we can easily encode the Hilbert's 10th problem with length (integer) 
constraints. 

\OMIT{
 Le \cite{L16} considered the satisfiability problem for string logic with word equations, regular membership and Presburger constraints over length functions. 
He showed that the satisfiability problem in a fragment where no string variable occurs more than twice in an equation is decidable. 
This work is largely distant from ours, as $\replaceall$ was not addressed. However, we mention that the fragment considered by Le allows Presburger constraints over length functions.

}

The $\replaceall$ function can be seen as a special, yet expressive, string transformation function, aka string transducer. From this viewpoint, the closest work is~\cite{LB16}, which we discuss extensively in the introduction. Here, we discuss two further 
recent transducer models: streaming string transducers \cite{AC10} and symbolic transducers \cite{symbolic-transducer}. 

A streaming string transducer is a finite state machine where  a finite set of string variables are used to store the intermediate results for output. The $\replaceall(x, e, y)$ term can be modelled by an extension of streaming string transducers \emph{with parameters}, that is, a streaming string transducer which reads an input string (interpreted as the value of $x$), uses $y$ as a free string variable which is presumed to be read-only, and updates a string variable $z$, which stores the computation result, by a string term which may involve $y$. Nevertheless, to the best of our knowledge, this extension of streaming string transducers has not been investigated so far. 

Symbolic transducers are an extension of Mealy machine to infinite alphabets by using a variable $cur$ to represent the symbol in the current position, and replacing the input and output letters in transitions with unary predicates $\varphi(cur)$ and terms involving $cur$ respectively. Symbolic transducers can model $\replaceall$ functions \emph{when the third parameter is a constant}. Inspired by symbolic transducers, it is perhaps an interesting future work to consider an extension of the $\replaceall$ function by allowing predicates as patterns. 
For instance, one may consider the term $\replaceall(x, cur \equiv 0 \bmod 2, y)$ which replaces every even number in $x$ with $y$. 

Finally, the $\replaceall$ function is related to Array Folds Logic introduced by Daca et al \cite{DHK16}. The authors considered an extension of the quantifier-free theory of integer arrays with counting. The main feature of the logic is the \emph{fold} terms, borrowed from the folding concept in functional programming languages. Intuitively, a fold term applies a function to every element of the array to compute an output. If strings are treated as arrays over a finite domain (the alphabet), the $\replaceall$ function can be seen as a fold term. Nevertheless, the $\replaceall$ function goes beyond the fold terms considered in \cite{DHK16}, since it outputs a string (an array), instead of an integer. Therefore, the results in \cite{DHK16} cannot be applied to our setting.

\paragraph{Practical Solvers}
A large amount of recent work develops practical string solvers
including  Kaluza~\cite{Berkeley-JavaScript}, Hampi~\cite{HAMPI}, 
Z3-str~\cite{Z3-str}, CVC4~\cite{cvc4}, Stranger~\cite{YABI14}, Norn~\cite{Abdulla14}, S3 and S3P~\cite{S3,TCJ16}, and FAT~\cite{Abdulla17}.
Among them, only Stranger, S3, and S3P support $\replaceall$.  




In the Stranger tool, 
an automata-based approach was provided for symbolic analysis of PHP programs, where two different semantics $\replaceall$ were considered, namely, the leftmost and longest matching as well as the leftmost and shortest matching. Nevertheless, they focused on the abstract-interpretation based analysis of PHP programs and provided an \emph{over-approximation} of all the possible values of the string variables at each program point. Therefore, their string constraint solving algorithm is \emph{not} an exact decision procedure. In contrast, we provided a decision procedure for the straight-line fragment with the rather general $\replaceall$ function, where the pattern parameter can be arbitrary regular expressions and the replacement parameter can be variables. In the latter case,  we consider the leftmost and longest semantics mainly for simplicity, and the decision procedure can be adapted to the leftmost and shortest semantics easily.

\hide{
 For the string operations, the authors focus on two common ones: concatenation and replacement. The latter is close to---but not the same as---the $\replaceall$ function considered in this paper. However, in this paper, Yu et al provided   an \emph{over-approximation} of more 
commonly used semantics, i.e., the longest match and first match semantics. 
%
%
They use deterministic finite automata (DFAs) to represent possible values of string variables. Using forward reachability analysis we compute an over-approximation of all possible values that string variables can take at each program point. They also implemented Stranger, an automata-based string analysis tool, with experiments. In general, this is essentially an abstract interpretation based approach.  In comparison, our algorithm is also automata-based, but we work on a semantics of $\replaceall$, but not its approximation. \tl{more need to be said here}
}


The S3 and S3P tools also support the $\replaceall$ function, where some
progressive searching strategies were provided to deal with the non-termination
problem caused by the recursively defined string operations (of which 
$\replaceall$ is a special case). Nevertheless, the solvers are 
incomplete as reasoning about unbounded strings defined recursively is in 
general an undecidable problem.



%

\hide{
\cite{TCJ16} 
Trinh et al considered 
recursively defined string functions, a very expressive way to define functions manipulating strings. This includes a recursive definition of the replace-all function considered in this paper\footnote{\cite{TCJ16} used the notation \textbf{replace}}. The authors argue that ``the difficulty comes from ``recursively defined" functions such as replace, making state-of-the-art algorithms non-terminating." They proposed a progressive search algorithm, 
implemented within the state-of-the-art Z3 framework, with experimental evaluations. The algorithm is genetic and  applicable to all recursively defined string functions, but it is doomed to be incomplete as reasoning about unbounded strings defined recursively is in general an undecidable problem.   

The focus of our work is on the fundamental issue of decidability, and this is complementary to the work. Our result may be considered a completeness guarantee for existing string solver. 
}

\section{Conclusion}

\OMIT{
Since any theory of strings containing the string-replace function (even the 
most restricted version where pattern/replacement strings are both constant 
strings) is undecidable unless we impose some kind of straight-line restriction 
on the formulas, 
}
We have initiated a systematic investigation of the decidability of 
the satisfiability problem for the straight-line fragments of string 
constraints involving the $\replaceall$ function and regular constraints.
The straight-line restriction is known to be appropriate for applications in symbolic execution 
of 
string-manipulating programs \cite{LB16}. Our main result is a decision 
procedure for a large fragment of the logic, wherein the pattern parameters are
regular expressions (which covers a large proportion of the usage of the 
$\replaceall$ function in practice). Concatenation is obtained for free since concatenation can be easily expressed in this fragment.
We have shown that the decidability
of this fragment cannot be substantially extended. This is achieved by showing
that if either (1) the pattern parameters are allowed to be variables, or (2) the length
constraints are incorporated in the fragment, then we get the undecidability.
Our work clarified important fundamental issues surrounding the $\replaceall$
functions in string constraint solving and provided a novel decision procedure
which paved a way to a string solver that is able to fully support the
$\replaceall$ function. This would be the most immediate future work.

\begin{acks}
T. Chen is supported by the
\grantsponsor{}{Australian Research Council }{} under Grant No.~\grantnum{}{DP160101652}
and the
\grantsponsor{}{Engineering and Physical Sciences Research Council}{} under Grant No.~\grantnum{}{EP/P00430X/1}.
    M. Hague is supported by the
    \grantsponsor{GS501100000266}{Engineering and Physical Sciences Research Council}
                 {http://dx.doi.org/10.13039/501100000266}
    under Grant No.~\grantnum{GS501100000266}{EP/K009907/1}.
    A.~Lin is supported by the European Research Council (ERC) under the European
    Union's Horizon 2020 research and innovation programme (grant agreement no
    759969).
    Z. Wu is supported by the
    \grantsponsor{}{National Natural Science Foundation of China}{}
    under Grant No.~\grantnum{}{61472474} and Grant No. ~\grantnum{}{61572478},
    \grantsponsor{}{the INRIA-CAS joint research project ``Verification, Interaction, and Proofs''}{},
    \mat{Add your sponsors here!}\zhilin{I suggest ordering the grants according to the order of the authors}
\end{acks}

\newpage


\shortlong{}{

\newpage


\appendix

\begin{center}
{\huge Appendix} \\
{\large ``What Is Decidable about String Constraints with the ReplaceAll Function''}
\end{center}

\bigskip

We provide below proofs and examples that were omitted from the main text due to space constraints.

\hide{
\noindent {\it Proposition~\ref{prop-num-path}}.
{\it Let $G=(V,E)$ be a DAG such that the out-degree of each vertex is at most two. Then there are $n^{O(\dmdidx(G))}$ different paths  in $G$.
}

\begin{proof}
\end{proof}
}
\def\refpropundpat{\ref{prop-und-pat-var}}

\section{Proof of Proposition~\protect\refpropundpat}
\label{sec:prop-und-pat-var-proof}

We recall Proposition~\ref{prop-und-pat-var} and then give its proof.

\medskip

\noindent \textsc{Proposition}~\ref{prop-und-pat-var}
{\em The satisfiability problem of $\strline[\replaceall]$ is undecidable, if the second parameters of the $\replaceall$ terms are allowed to be variables.
}

\begin{proof}
	We reduce from the Post Correspondence Problem (PCP). Recall that the input of the problem consists of two finite lists $\alpha_{1},\ldots ,\alpha_{N}$ and $\beta_1,\ldots ,\beta_N$ of nonempty strings over $\Sigma$. A solution to this problem is a sequence of indices $(i_{k})_{1\leq k\leq K}$ with $ K\geq 1$ and $ 1\leq i_{k}\leq N$ for all $k$, such that
	$	\alpha _{{i_{1}}}\ldots \alpha _{{i_{K}}}=\beta _{{i_{1}}}\ldots \beta _{{i_{K}}}.
	$
	The PCP problem is to decide whether such a solution exists or not.

	Without loss of generality, suppose $\Sigma \cap [N] = \emptyset$ and $\$ \not \in \Sigma \cup [N]$. Let $\Sigma' = \Sigma \cup [N] \cup \{\$\}$. We will construct an $\strline[\replaceall]$ formula $C$ over $\Sigma'$ such that the PCP instance has a solution iff $C$ is satisfiable. To this end, the formula $C$ utilises the capability that the second parameter of the $\replaceall$ terms may be variables.

	Let $x_1, \cdots, x_N, y_1, \cdots, y_N, z$ be mutually distinct string variables. Then the formula $C = \varphi \wedge \psi$, where
	$$
	\begin{array}{l c l}
	\varphi & = & \bigwedge \limits_{i \in [N]} (x_i = \replaceall(x_{i-1}, i, \alpha_i) \wedge y_i = \replaceall(y_{i-1}, i, \beta_i)) \wedge  z = \replaceall(x_N, y_N, \$), \\
	\psi & = & x_0 \in (1 + \cdots + N)^+ \wedge z \in \$.
	\end{array}
	$$

	It is not hard to see that $\varphi$ is a straight-line relational constraint, thus $C$ is an $\strline[\replaceall]$ formula. Note that in $\replaceall(x_N, y_N, \$)$, the second parameter is a variable. We show that $C$ is satisfiable iff the PCP instance has a solution: $C$ is satisfiable iff there is a string $i_1 \cdots i_K \in \Ll((1 + \cdots + N)^+)$ such that when $x_0$ is assigned with $i_1 \cdots i_K$, the value of $z$ is $\$$.
	Since $z = \replaceall(x_N, y_N, \$)$ and $x_N, y_N \in \Sigma^+$, we know that $z$ is $\$$ iff the values of $x_N$ and $y_N$ are the same. Therefore, $C$ is satisfiable iff there is a string $i_1 \cdots i_K \in \Ll((1 + \cdots + N)^+)$ such that when $x_0$ is assigned with $i_1 \cdots i_K$, the values of $x_N$ and $y_N$ are the same. Therefore, $C$ is satisfiable iff there is a sequence of indices $i_1 \cdots i_K$ such that $\alpha_{i_1} \cdots \alpha_{i_K} = \beta_{i_1} \cdots \beta_{i_K}$, that is, the PCP instance has a solution.
	%
	%
	%
	%
	%
	%
	%
	%
	%
	%
\end{proof}

\def\refsecreplaceallsl{\ref{sec:replaceallsl}}

\section{Section~\protect\refsecreplaceallsl: The Correctness of the decision procedure}
\label{sec:dp-sl-correctness}

We argue that the procedure in Section~\ref{sec:dp-sl-general} is correct.
Note that Proposition~\ref{prop-sat-sl-case} removed a single $\replaceall(-,-,-)$ to obtain only regular constraints.
Each step of our decision procedure effectively eliminates a $\replaceall(-,-,-)$.
Similar to Proposition~\ref{prop-sat-sl-case}, each step maintains the satisfiability from the preceding step.

In more detail, from each $G_i$ we can define a constraint $C_i$. This constraint is a conjunction of the following atomic constraints.
\begin{itemize}
\item For each variable $x$ such that $(x, (\rpleft, a), y)$ and $(x, (\rpright,a), z)$ are the edges in $G_i$, we assert in $C_i$ that $x = \replaceall(y, a, z)$.
\item In addition, for each variable $x$ such that $\cE_i(x)$ is not empty, moreover, \emph{either $x$ is a source variable in $G_C$ (not $G_i$) or there are (incoming or outgoing) edges connected to $x$ in $G_i$}, let $e_i(x)$ be the regular expression equivalent to the conjunction of all constraints in $\cE_i(x)$ (Note that the conjunction of multiple regular expressions still defines a regular language). We assert in $C_i$ that $x \in e_i(x)$. Note that if $x$ is not a source variable in $G_C$ and there are no edges connected to $x$ in $G_i$, then the regular constraints in $\cE_i(x)$ are not included into $C_i$.
\end{itemize}


It is immediate that $C_0$ is equivalent to $C$.
We require the following proposition, which gives us the correctness of the decision procedure by induction.
Note that the final $C_i$ when exiting the loop will be a conjunction of regular constraints on the source variables.

\begin{proposition}
    For each $i$,  let the $\rpleft$-edge and the $\rpright$-edge from $x$ to $y$ and $z$ respectively be the two edges removed from $G_i$ to construct $G_{i+1}$. Then $C_i$ is satisfiable iff there are sets $T_{j, z}$ such that $C_{i+1}$ is satisfiable.
\end{proposition}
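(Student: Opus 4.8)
This is the single-step soundness-and-completeness lemma behind the induction driving the decision procedure, so the plan is to reduce it to a multi-constraint version of Proposition~\ref{prop-sat-sl-case}. First I would unfold the two formulas. Let $x$ be the vertex processed, with $\cE_i(x)=\{(\cT_1,\cP_1),\dots,(\cT_k,\cP_k)\}$ and edges $(x,(\rpleft,a),y)$ and $(x,(\rpright,a),z)$ removed to build $G_{i+1}$. By the definitions of $C_i$ and of $\cE_i(\cdot)$, the regular membership coded by $\cE_i(x)$ is exactly ``$x\in\bigcap_{j\in[k]}\bigcap_{(q,q')\in\cP_j}\Ll(\cT_j(q,q'))$'', and likewise for $y,z$. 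Since $x$ has no predecessors and exactly the two successors $y,z$ in $G_i$, once these edges are deleted $x$ has no incident edges in $G_{i+1}$ and, not being a source variable of $G_C$, contributes no atom to $C_{i+1}$; moreover $\cE_{i+1}$ agrees with $\cE_i$ off $\{y,z\}$ and only grows on $y$ and $z$. Hence $C_i$ and $C_{i+1}$ differ precisely in that $C_i$ additionally asserts $x=\replaceall(y,a,z)$ and the $\cE_i(x)$-membership, while $C_{i+1}$ replaces the $\cE_i(y),\cE_i(z)$-memberships by the stronger $\cE_{i+1}(y),\cE_{i+1}(z)$-ones; note also that $x$ occurs in no relational atom of $C_i$ (it has no predecessors), so forgetting its value never breaks an atom of $C_{i+1}$.

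For the forward direction I would take $\nu\models C_i$, write $\nu(y)=v_1 a v_2 a\cdots a v_n$ with each $v_t\in(\Sigma\setminus\{a\})^\ast$, so $\nu(x)=v_1 w v_2 w\cdots w v_n$ with $w:=\nu(z)$. For each $j\in[k]$ and each $(q,q')\in\cP_j$, fix an accepting run of $\cT_j$ from $q$ to $q'$ on $\nu(x)$, and let $T^{(q,q')}_j$ be the set of state pairs $(p,p')$ the run traverses across the successive copies of $w$; set $T_{j,z}:=\bigcup_{(q,q')\in\cP_j}T^{(q,q')}_j$. Then $\nu$ (with its value on $x$ discarded) satisfies $C_{i+1}$: the new $z$-constraint $(\cT_j,T_{j,z})$ holds because each $w$-segment of the chosen run witnesses a run of $\cT_j$ between the corresponding pair of $T_{j,z}$; and the new $y$-constraint $(\cT_{\cT_j,a,T_{j,z}},\cP_j)$ holds because in $\cT_{\cT_j,a,T_{j,z}}$ the $v_t$-segments (being $a$-free) survive the deletion of $a$-transitions, and each $w$-segment is replaced by the added transition $(p,a,p')$, yielding an accepting run on $\nu(y)$ from $q$ to $q'$. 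This is the ``only if'' half of the argument preceding Proposition~\ref{prop-sat-sl-case}, applied uniformly over all $j$ and all pairs of $\cP_j$ — the uniformity over pairs being exactly why $T_{j,z}$ is taken as a union.

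For the converse I would assume $T_{j,z}$ ($j\in[k]$) are given with some $\mu\models C_{i+1}$, and set $\mu(x):=\replaceall(\mu(y),a,\mu(z))$. Fixing $j$ and $(q,q')\in\cP_j$, take an accepting run of $\cB:=\cT_{\cT_j,a,T_{j,z}}$ from $q$ to $q'$ on $\mu(y)$. Since the only $a$-transitions of $\cB$ are the ones added from $T_{j,z}$, this run decomposes as $q\xrightarrow{v_1}r_1\xrightarrow{a}r'_1\xrightarrow{v_2}r_2\cdots\xrightarrow{a}r'_{m-1}\xrightarrow{v_m}r_m=q'$ with $\mu(y)=v_1 a v_2\cdots a v_m$, each $(r_t,r'_t)\in T_{j,z}$, and each $v_t$-segment using only original ($a$-free) transitions of $\cT_j$. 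As $\mu$ satisfies $(\cT_j,T_{j,z})$ on $z$, we have $\mu(z)\in\Ll(\cT_j(r_t,r'_t))$ for each $t$, so each $a$-step can be replaced by an accepting $\cT_j$-run on $\mu(z)$; splicing gives an accepting run of $\cT_j$ from $q$ to $q'$ on $v_1\mu(z)v_2\cdots\mu(z)v_m=\mu(x)$. Hence $\mu(x)$ lies in the $\cE_i(x)$-language, the atom $x=\replaceall(y,a,z)$ holds by construction, and every other atom of $C_i$ already appears in $C_{i+1}$, so $\mu\models C_i$. Finally I would dispatch the case $y=z$: the procedure then dumps all of $\{(\cT_j,T_{j,z})\}_j$ and $\{(\cT_{\cT_j,a,T_{j,z}},\cP_j)\}_j$ into $\cE_{i+1}(z)=\cE_{i+1}(y)$, and both directions go through verbatim, using the single string $\nu(y)=\nu(z)$ (resp.\ $\mu(y)=\mu(z)$) in both roles. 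The main obstacle is precisely this run-to-run bookkeeping — forcing one $T_{j,z}$ to serve all pairs of $\cP_j$ simultaneously, and checking that the step genuinely eliminates $x$ while leaving $y,z$ with their old constraints intact; everything else is a routine lift of the single-occurrence analysis of Proposition~\ref{prop-sat-sl-case}.
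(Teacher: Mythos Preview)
Your proposal is correct and follows essentially the same approach as the paper: decompose $C_i$ and $C_{i+1}$ into the atoms touching $x,y,z$ plus an unchanged remainder $C'$, then appeal to the single-occurrence analysis of Proposition~\ref{prop-sat-sl-case} in both directions, handling $y=z$ separately. You spell out more of the run-level bookkeeping (in particular the union $T_{j,z}:=\bigcup_{(q,q')\in\cP_j}T^{(q,q')}_j$ needed to serve all pairs of $\cP_j$ at once) than the paper does, but the underlying argument is the same.
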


We can see the above proposition by observing that, in each step, $C_i$ is of the form
\[
    x = \replaceall(y, a, z) \wedge x \in e_i(x) \wedge y \in e_i(y) \wedge z \in e_i(z) \wedge C'
\]
where $C'$ does not contain $x$, and $C_{i+1}$ is of the form
\[
    y \in e_{i+1}(y) \wedge z \in e_{i+1}(z) \wedge C' \ .
\]
Note that $C'$ remains unchanged since only the two edges leaving $x$ are removed from $G_i$ and $\cE_{i+1}(x') = \cE_i(x')$ for all $x'$ distinct from $x$, $y$, and $z$.
First assume $y \neq z$.
Supposing $C_i$ is satisfiable, an argument similar to that of Proposition~\ref{prop-sat-sl-case} shows that there are sets $T_{j,z}$ such that the same values of $y$ and $z$ also satisfy $e_{i+1}(y)$ and $e_{i+1}(z)$.
Since $C'$ is unchanged, all $x'$ distinct from $x$, $y$, and $z$ can also keep the same value.
Thus, $C_{i+1}$ is also satisfiable.
In the other direction, suppose that there are sets $T_{j, z}$ such that $C_{i+1}$ is satisfiable. Take a satisfying assignment to $C_{i+1}$.
From the assignment to $y$ and $z$ we obtain as in Proposition~\ref{prop-sat-sl-case} an assignment to $x$ that satisfies $\replaceall(y, a, z) \wedge x \in e_i(x)$.
Furthermore, the assignments for $y$ and $z$ also satisfy $e_i(y)$ and $e_i(z)$ since $\cE_i(y)$ and $\cE_i(z)$ are subsets of $\cE_{i+1}(y)$ and $\cE_{i+1}(z)$.
Finally, since $C'$ is unchanged, the assignments to all other variables also transfer, giving us a satisfying assignment to $C_i$ as required.
In the case where $y = z$, the arguments proceed analogously to the case $y \neq z$.

\def\prodauttitle{$\cA_1 \times \cA_u$}
\def\defutitle{$u = 010$}
\section{The product automaton \protect\prodauttitle for \protect\defutitle}

In Figure~\ref{fig-cs-exmp} we give the product automaton $\cA_1 \times \cA_u$ for $u = 010$.
This is a straightforward product construction, but may be useful for reference when understanding Figure~\ref{fig-cs-exmp-2} which shows the automaton $\cB_{\cA_1, u, T_z}$ which is derived from the product.

\begin{figure}[htbp]
\begin{center}
\includegraphics[scale=0.65]{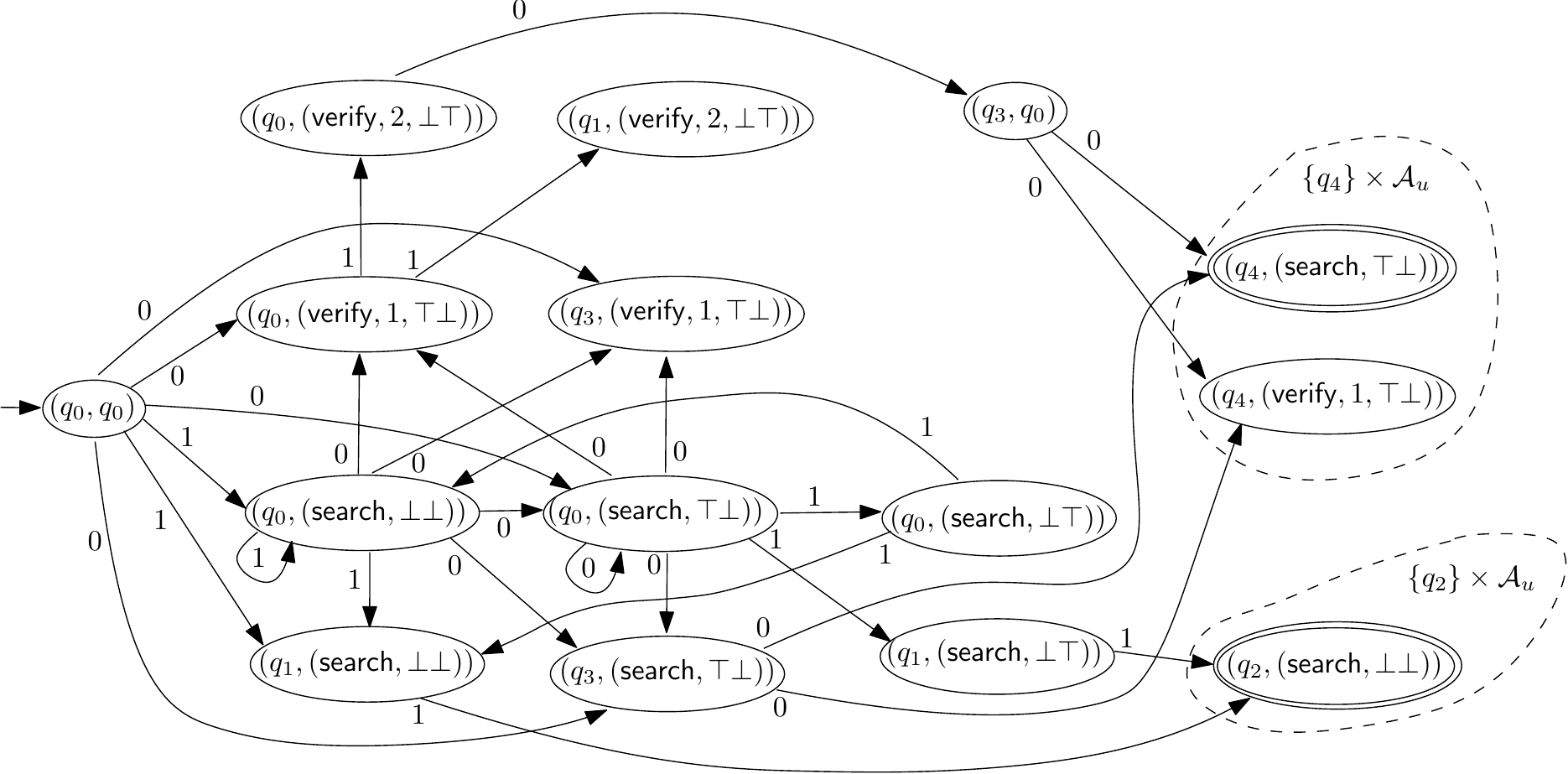}
\end{center}
\caption{The NFA $\cA_1 \times \cA_u$ for $u = 010$}\label{fig-cs-exmp}
\end{figure}

\def\refsecreplaceallcs{\ref{sec:replaceallcs}}
\section{Complexity analysis in Section~\protect\refsecreplaceallcs}
\label{sec:cs-complexity-full}

We provide a more detailed analysis of the complexity of the algorithm for the constant string case, described in Section~\ref{sec:replaceallcs}.
A summary of this argument already appears in Section~\ref{sec:replaceallcs}.

When constructing $G_{i+1}$ from $G_i$, suppose the two edges from $x$ to $y$ and $z$ respectively are currently removed, let the labels of the two edges be $({\sf l}, u)$ and $({\sf r}, u)$ respectively, then each element $(\cT, \cP)$ of $\cE_i(x)$ may be transformed into an element $(\cT', \cP')$ of $\cE_{i+1}(y)$ such that $|\cT'| = O(|u||\cT|)$, meanwhile, it may also be transformed into an element $(\cT'', \cP'')$ of $\cE_{i+1}(z)$ such that $\cT''$ has the same state space as $\cT$. Thus, for each source variable $x$, $\cE(x)$ contains at most exponentially many elements, and each of them may have a state space of at most exponential size. For instance, for a path from $x'$ to $x$ where the constant strings $u_1,\cdots, u_n$ occur in the labels of edges, an element $(\cT,\cP) \in \cE_0(x')$ may induce an element $(\cT', \cP')$ of $\cE(x)$ such that $|\cT'| \le |\cT| |u_1| \cdots |u_n|$, which is exponential in the worst case.
To solve the nonemptiness problem of the intersection of all these regular constraints, the exponential space is sufficient. Consequently, in this case, we still obtain an EXPSPACE upper-bound.

Let us now consider the special situation that the $\rpleft$-length of $G_C$ is bounded by a constant $c$.
Since $\dmdidx(G_C) \le \lftlen(G_C)$, we know that $\dmdidx(G_C)$ is also bounded by $c$. Therefore, according to Proposition~\ref{prop-di}, there are at most polynomially different paths in $G_C$, we deduce that for each source variable $x$, $\cE(x)$ contains at most polynomially many elements. In addition, since the number of $\rpleft$-edges in each path is bounded by $c$, during the execution of the decision procedure, the number of times when $(\cT, \cP)$ of $\cE_i(x)$ may be transformed into an element $(\cT', \cP')$ of $\cE_{i+1}(y)$ such that $|\cT'| = O(|u||\cT|)$ is bounded by $c$.
Therefore, for each source variable $x$ and each element $(\cT'', \cP'')$ in $\cE(x)$,  $|\cT''|$ is at most polynomial in the size of $C$. We then conclude that for each source variable $x$, $\cE(x)$ corresponds to the intersection of polynomially many regular constraints such that each of them has a state space of polynomial size. Therefore, the nonemptiness of the intersection of all the regular constraints in $\cE(x)$ can be solved in polynomial space. In this situation, we obtain a PSPACE upper-bound.

\def\refsecreplaceallre{\ref{sec:replaceallre}}
\section{Complexity analysis in Section~\protect\refsecreplaceallre}
\label{sec:re-complexity-full}

We provide a more detailed analysis of the complexity of the algorithm for the regular-expression case, described in Section~\ref{sec:replaceallre}.
A summary of this argument already appears in Section~\ref{sec:replaceallre}.

In each step of the reduction, suppose the two edges out of $x$ are currently removed, let the two edges be from $x$ to $y$ and $z$ and labeled by $({\sf l}, e)$ and $({\sf r}, e)$ respectively, then each element of $(\cT, \cP)$ of $\cE_i(x)$ may be transformed into an element $(\cT',\cP')$ of $\cE_{i+1}(y)$ such that $|\cT'| = |\cT| \cdot 2^{O(p(|e|))}$, meanwhile, it may also be transformed into an element $(\cT'',\cP'')$ of $\cE_{i+1}(y)$ such that $\cT''$ has the same state space as $\cT$. Thus, after the reduction, for each source variable $x$, $\cE(x)$ may contain exponentially many elements, and each of them may have a state space of exponential size, more precisely, if we start from a vertex $x$ without predecessors, with an element $(\cT,\cP)$ in $\cE_0(x)$, and go to a source variable $y$ through a path where $k$ edges have been traversed and removed, let $e_1,\cdots, e_k$ be the regular expressions occurring in the labels of these edges, then the resulting element in $\cE(y)$ has a state space of size $|\cT| \cdot 2^{O(p(|e_1|))} \cdot 2^{O(p(|e_2|))} \cdot \cdots \cdot 2^{O(p(|e_k|))}$ in the worst case. To solve the nonemptiness problem of the intersection of all these regular constraints, the exponential space is sufficient. Consequently, for the most general case of regular expressions, we still obtain an EXPSPACE upper-bound.

On the other hand, for the situation that the $\rpleft$-length of $G_C$ is at most one, we wan to show that the algorithm runs in polynomial space. Suppose the $\rpleft$-length of $G_C$ is at most one. Then the diamond index of $G_C$ is at most one as well. According to Proposition~\ref{prop-di}, there are only polynomially many paths in $G_C$. Nevertheless, for each source variable $x$, $\cE(x)$ may contain an element $(\cT,\cP)$ such that $|\cT|$ is exponential. Since $|\cP|$ may be exponential, $(\cT,\cP)$ may correspond to the intersection of exponentially many regular constraints. However, we can show that $|\cP|$ is at most polynomial, as a result of the fact that the $\rpleft$-length of $G_C$ is at most one. The arguments proceed as follows: Suppose two edges from $x$ to $y, z$ respectively are removed, and an element $(\cT', \cP')$ of $\cE_{i+1}(y)$ such that $|\cT'|$ is exponential and $|\cP'|$ is polynomial, is generated from an element of $(\cT, \cP)$ of $\cE_i(x)$. Then $y$ must be a source variable in $G_C$. Otherwise, there is an $\rpleft$-edge out of $y$ and the $\rpleft$-length of $G_C$ is at least two, a contradiction. Therefore, $y$ is a source variable in $G_C$, $(\cT', \cP')$  will not be used to generate the regular constraints for the other variables. In other words, $y$ is a source variable in $G_C$, and $(\cT', \cP') \in \cE(y)$ with $|\cP'|$ polynomial. We then conclude that for each source variable $x$, $|\cE(x)|$  is at most polynomial in the size of $C$ and for each element $(\cT, \cP) \in \cE(x)$, $|\cP|$ is polynomial in the size of $C$. Therefore, for each source variable $x$,  $\cE(x)$ corresponds to the intersection of polynomially many regular constraints, where each of them has a state space at most exponential size. To solve the nonemptiness of the intersection of these regular constraints, the polynomial space is sufficient. We obtain a PSPACE upper-bound for the situation that the $\rpleft$-length of $G_C$ is at most one.

\def\refsecreplaceallre{\ref{sec:replaceallre}}
\section{Examples in Section~\protect\refsecreplaceallre}

Due to space constraints, we did not provide examples of the decision procedure for the regular-expression case.
We provide some examples here.

\begin{example}\label{exmp-pa-re}
	Let $e_0 = 0^*0 1(1^* + 0^*)$. Then $\cA_{0}$ and $\cA_{e_0}$ are illustrated in Figure~\ref{fig-pa-re}, where ${\sf sleft}$ and ${\sf slong}$ are the abbreviations of $\searchleft$ and $\searchlong$ respectively. Let us use the state $(\{q_{0,1}\}\{q_{0,0}\}, {\sf sleft}, \emptyset)$ to illustrate the construction. Since $\big(\delta_0(\{q_{0,1}\}, 0) \cup \delta_0(\{q_{0,0}\}, 0)\big) \cap F_0 = \{q_{0,1}\} \cap F_0 = \emptyset$, $\delta_0(\emptyset, 0) \cap F_0 = \emptyset$, and $\red(\delta_0(\{q_{0,1}\}, 0) \delta_0(\{q_{0,0}\}, 0))=\{q_{0,1}\}$, we deduce that the transition
\[
    ((\{q_{0,1}\}\{q_{0,0}\}, {\sf sleft}, \emptyset), 0, (\{q_{0,1}\} \{q_{0,0}\}, {\sf sleft}, \emptyset)) \in \delta_{e_0} \ .
\]
On the other hand, it is impossible to go from the state $(\{q_{0,1}\}\{q_{0,0}\}, {\sf sleft}, \emptyset)$ to the ``$\searchlong$'' mode. This is due to the fact that $\delta_0(\{q_{0,0}\}, 0)=\{q_{0,1}\} \subseteq \delta_0(\{q_{0,1}\},0)=\{q_{0,1}\}$. In addition, there are no $1$-transitions out of $(\{q_{0,1}\}\{q_{0,0}\}, {\sf sleft}, \emptyset)$. This is due to the fact that $\delta_0(\{q_{0,1}\}, 1) \cap F_0 = \{q_{0,2}, q_{0,3}\} \cap F_0 \neq \emptyset$.
	\begin{figure}[htbp]
		\begin{center}
			\includegraphics[scale=0.7]{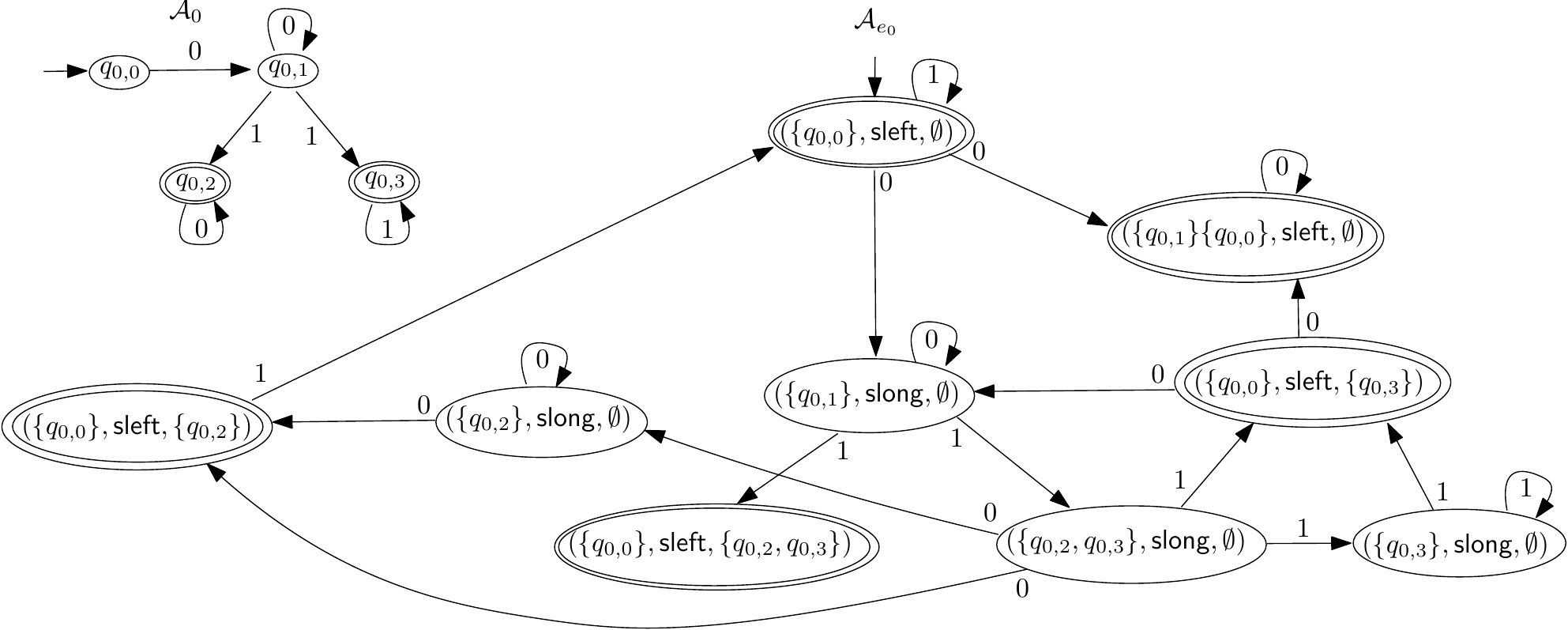}
		\end{center}
		\caption{The NFA $\cA_0$ and $\cA_{e_0}$ for $e_0 = 0^*0 1(1^* + 0^*)$}\label{fig-pa-re}
	\end{figure}
\end{example}

\begin{example}
	Let $C \equiv x = \replaceall(y, e_0, z) \wedge x \in e_1 \wedge y \in e_2 \wedge z \in e_3$, where $e_1,e_2,e_3$ are as in Example~\ref{exmp-sl} (cf. Figure~\ref{fig-sl-exmp}) and $e_0$ is as in Example~\ref{exmp-pa-re} (cf. Figure~\ref{fig-pa-re}). Suppose $T_z = \{(q_0, q_0), (q_1, q_2)\}$. Then the NFA $\cB_{\cA_1, e_0, T_z}$ is as illustrated in Figure~\ref{fig-re-exmp}, where the thick edges denote the added transitions. Let us use the state $(q_1, (\{q_{0,0}\}, \searchleft, \emptyset))$ to exemplify the construction. The transition $((q_1, (\{q_{0,0}\}, \searchleft, \emptyset)), 1, (q_2, (\{q_{0,0}\}, \searchleft, \emptyset)))$ is  in $\cA_1 \times \cA_{e_0}$. Since $\delta_0(q_{0,0}, 1) \cap F_0 = \emptyset$, this transition is not removed and is thus in $\cB_{\cA_1, e_0, T_z}$. On the other hand, since there are no $0$-transitions out of $q_1$ in $\cA_1$, there are no $0$-transitions from $(q_1, (\{q_{0,0}\}, \searchleft, \emptyset))$ to some state from $Q_{\searchleft}$ in $\cB_{\cA_1, e_0, T_z}$.
	Moreover, because $((\{q_{0,0}\}, \searchleft, \emptyset), 0, (\{q_{0,1}\}, \searchlong, \emptyset)) \in \delta_{e_0}$ and $(q_1, q_2) \in T_z$, the transition $((q_1, (\{q_{0,0}\}, \searchleft, \emptyset)), 0, (q_1, (\{q_{0,1}\}, \searchlong, \emptyset)))$ is added.
	One may also note that there are no 0-transitions from $(q_2, (\{q_{0,0}\}, \searchleft, \emptyset))$ to the state $(q_2, (\{q_{0,1}\}, \searchlong, \emptyset))$, because there are no pairs $(q2,-) \in T_z$.
	It is not hard to see that $010101 \in \Ll(\cA_2) \cap \Ll(\cB_{\cA_1, e_0, T_z})$. In addition, $10 \in \Ll(\cA_3) \cap \Ll(\cA_1(q_0,q_0)) \cap \Ll(\cA_1(q_1,q_2))$. Let $y$ be $010101$ and $z$ be $10$. Then $x$ takes the value $\replaceall(010101, e_0, 10)=10 \cdot \replaceall(101, e_0, 10)=10110$, which is accepted by $\cA_1$. Therefore, $C$ is satisfiable.
	\begin{figure}[htbp]
		\begin{center}
			\includegraphics[scale=0.68]{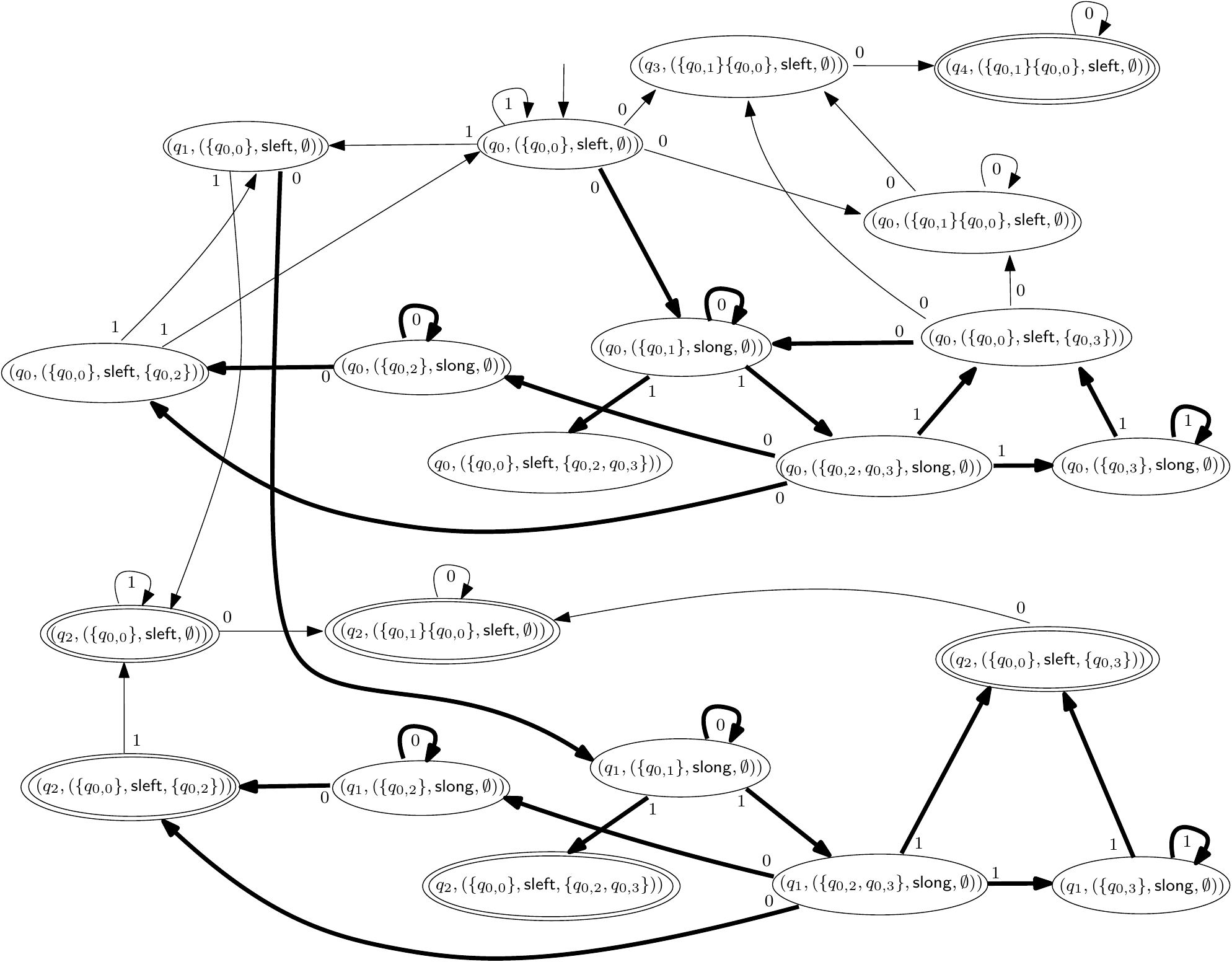}
		\end{center}
		\caption{The NFA $\cB_{\cA_1, e_0, T_z}$}\label{fig-re-exmp}
	\end{figure}
\end{example}

\def\refsecext{\ref{sec-ext}}
\section{Undecidability Proofs for Section~\protect\refsecext}
\label{sec:ext-undec-proofs}

We provide the proofs of the theorems and propositions in Section~\ref{sec-ext} which show the undecidability of various extensions of our string constraints.

\subsection{Proof of Theorem~\ref{thm-ext-int}}

We begin with the first Theorem, which is recalled below.

\medskip

\noindent \textsc{Proposition}~\ref{thm-ext-int}
{\em
    For the extension of $\strline[\replaceall]$ with \emph{integer constraints}, the satisfiability problem is undecidable, even if only a single integer constraint $|x| = |y|$ is used.
}

\begin{proof}
	The basic idea of the reduction is to simulate the two polynomials $f(x_1,\cdots, x_n)$ and $g(x_1,\cdots, x_n)$, where $x_1,\cdots,x_n$ range over the set of natural numbers, with two $\strline[\concat,\replaceall]$ formulae $C_f, C_g$ over a unary alphabet $\{a\}$, with the output string variables $y_f, y_g$ respectively, and simulate the equality $f(x_1,\cdots, x_n) = g(x_1,\cdots, x_n)$ with the integer constraint $|y_f|=|y_g|$ (which is equivalent to $y_f = y_g$, since $y_f, y_g$ represent strings over the unary alphabet $\{a\}$).

	A polynomial $f(x_1,\cdots, x_n)$ or $g(x_1,\cdots, x_n)$ where $x_1, \cdots, x_n$ range over the set of natural numbers, can be simulated by an $\strline[\concat,\replaceall]$ formula over an unary alphabet $\{a\}$ as follows: The natural numbers are represented by the strings over the alphabet $\{a\}$. A string variable is introduced for each subexpression of $f(x_1,\cdots, x_n)$. The numerical addition operator $+$ is simulated by the string operation $\concat$
	and the multiplication operator $*$ is simulated by $\replaceall$. Since it is easy to figure out how the simulation proceeds, we will only use an example to illustrate it and omit the details here. Let us consider $f(x_1,x_2) = x_1^2 + 2 x_1 x_2 + 5$. By abusing the notation, we also use $x_1,x_2$ as string variables in the simulation. We will introduce a string variable for each subexpression in $f(x_1,x_2)$, namely the variables $y_{x_1^2}, y_{x_1x_2}, y_{2x_1x_2}, y_{x_1^2+2x_1x_2}, y_{f(x_1,x_2)}$. Then $f(x_1,x_2)$ is simulated by the $\strline[\concat,\replaceall]$ formula
	\[
	\begin{array} {l c l }
	C_f & \equiv & y_{x_1^2} = \replaceall(x_1,a, x_1)\ \wedge y_{x_1x_2} = \replaceall(x_1, a, x_2)\ \wedge \\
	& & y_{2x_1x_2} = \replaceall(aa, a, y_{x_1x_2})\ \wedge y_{x_1^2+2x_1x_2} = y_{x_1^2} \concat y_{2x_1x_2}\ \wedge  \\
	& & y_{f(x_1,x_2)}=y_{x_1^2+2x_1x_2} \concat a a a a a\ \wedge x_1 \in a^*\ \wedge x_2 \in a^*.
	\end{array}
	\]
	Then according to Proposition~\ref{prop-concat}, $C_f, C_g$ can be turned into equivalent $\strline[\replaceall]$ formula $C'_f, C'_g$ by introducing fresh letters.

	Since $C'_f$ and $C'_g$ share only source variables $x_1,\cdots, x_n$, we know that $C'_f \wedge C'_g$ is still an $\strline[\replaceall]$ formula.
	From the construction of $C'_f, C'_g$, it is evident that for every pair of polynomials $f(x_1,\cdots, x_n)$ and $g(x_1,\cdots, x_n)$, $f(x_1,\cdots, x_n) = g(x_1,\cdots, x_n)$ has a solution in natural numbers iff $C'_f \wedge C'_g \wedge |y_f| = |y_g|$ is satisfiable. The proof is complete.
	%
	\hide{
		We shall reduce from the aforementioned version of the Hilbert tenth problem. For any polynomial with positive integral  $f(x_1, \cdots, x_n)$ where each coefficient is a positive, we can construct a (division-free) arithmetic circuit (AC) is a directed  acyclic graph with nodes labelled with constants from $\mathbb{Z}$, or with some indeterminates $X_1, \cdots, X_m$, or with the operators $+, -, *$. The nodes labelled with constants are called constant nodes, while those labelled with indeterminates are called input nodes. Both constant and input nodes do not have incoming edges. Internal nodes are those labelled with $+,-,*$. Output node is the one which does not have out-going edges. Without loss of generality we assume that each internal node has in-degree 2, and there is only one output node. Each node in the circuit represents a multivariate polynomial $\mathbb{Z}[X_1, \cdots, X_m]$. Vice verse, each polynomial $f\in \mathbb{Z}[X_1, \cdots, X_m]$ can be represented as an AC, and, if the polynomial has only positive (integral) coefficients, the corresponding AC does not contain nodes labelled by $-$ or negative constants.

		We observe that, given an AC, one can construct an SL[$\concat, \replaceall$] formula over the alphabet $\Sigma=\{a\}$ as follows. Each node $n$ of the AC is associated with a string variable $x_n$. As a result, each input node of the AC labelled by $X_i$ (i.e., the indeterminate) corresponds to a  source variable.
		\begin{itemize}
			\item For each internal node $n$ labelled by $+$, suppose that $n$ has two children nodes $n_l$ and $n_r$, we introduce a string constraint $x_n= x_{n_l}\concat x_{n_l}$.

			\item For each internal node $n$ labelled by $*$, suppose that $n$ has two children nodes $n_l$ and $n_r$, we introduce a string constraint $x_n= \replaceall(x_{n_l}, a, x_{n_l})$.
		\end{itemize}
		Furthermore, we introduce, for each node $n$ labelled by a constant $c$, a regular constraint $x_n=a^c$.

		It is straightforward to verify, according to the semantics of SL[$\concat, \replaceall$], that:
		\begin{itemize}
			\item for relational constraint $x_n= x_{n_l}\concat x_{n_l}$, $|x_n|= |x_{n_l}|+|x_{n_l}|$;
			\item for relational constraint $x_n= \replaceall(x_{n_l}, a, x_{n_l})$,  $|x_n|= |x_{n_l}|\cdot |x_{n_l}|$; and
			\item for regular $x_n=a^c$, $|x_n|=c$.
		\end{itemize}

		It follows that for each polynomial $f(x_1, \cdots, x_m)$ with positive integral coefficients, we can construct a straight-line string constraint $\varphi_{f}\wedge\psi_g$ over $\Sigma=\{a\}$ with $y_f$ as the output variant and $y_1, \cdots, y_n$ as source variables such that
		$f(c_1, \cdots, c_m)=|y|$ and, for each $1\leq i\leq m$, $|y_i|= c_i$ (i.e., $y_i=a^{c_i}$).

		Consequently, when given two polynomials $f(x_1, \cdots, x_m)$ and $g(x_1, \cdots, x_m)$, we have straight-line string constraints $\varphi_{f}\wedge \varphi_{g}\wedge \psi_{f}\wedge \psi_g$ with two distinguished two variables  $y_f$ and $y_g$ such that
		\[\exists x_1, \cdots, x_m. f(x_1, \cdots, x_m)=g(x_1, \cdots, x_m)\mbox{ iff } |y_f|=|y_g|\wedge \varphi_{f}\wedge \varphi_{g}\wedge \psi_{f}\wedge \psi_g\mbox{ is satisfiable} \]

		Finally, note that any  SL[$\concat, \replaceall$] constraints can be transformed into SL[$\replaceall$] constraints, we obtain a reduction from the Hilbert's 10th problem to the satisfiability problem of  SL[$\replaceall$] with length constraints, which entail that the latter problem is undecidable. The proof is completed.
	}
\end{proof}

\subsection{Undecidability of Depth-1 Dependency Graph}

We recall the undecidability of a depth-1 dependency graph before providing the proof below.

\medskip

\noindent\textsc{Theorem}\ref{thm-ext-int-strong}
{\em
	For the extension of $\strline[\replaceall]$ with integer constraints, even if $\strline[\replaceall]$ formulae are restricted to those whose dependency graphs are of depth at most one, the satisfiability problem is still undecidable.
}

\medskip

A \emph{linear polynomial} (resp.\ quadratic polynomial) is a polynomial with degree at most one (resp.\ with degree at most two) where each coefficient is an integer. 

\begin{theorem}[\cite{ID04}]\label{thm-quad-eq}
	%
	The following problem is undecidable: Determine whether a system of equations of the following form has a solution in natural numbers,
	\[
	\begin{array} {l l }
	A_i = B_i, & i =1, \cdots, k,\\
	y_iF_i=G_i \wedge y_i H_i = I_i, & i =1, \cdots, m,
	\end{array}
	\]
	where $A_i, B_i, F_i, G_i$ are linear polynomials on the variables $x_1,\cdots, x_n$ (Note that each variable $y_i$ occurs in exactly two quadratic equations).
\end{theorem}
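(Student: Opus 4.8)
The plan is to reduce from Hilbert's tenth problem, whose undecidability over the natural numbers is the MRDP theorem~\cite{Mat93}: it is undecidable whether an equation $f(x_1,\ldots,x_n) = g(x_1,\ldots,x_n)$, with $f$ and $g$ polynomials with nonnegative integer coefficients, has a solution in $\Nat^n$. The task is therefore to transform such an arbitrary polynomial equation into a system of the restricted shape in the statement --- a block of purely linear equations $A_i = B_i$ together with a block of ``multiplier'' pairs $y_iF_i = G_i \wedge y_iH_i = I_i$ in which each $y_i$ is used in exactly two quadratic equations and $F_i, G_i, H_i, I_i$ are linear --- in a satisfiability-preserving way.

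First I would \emph{arithmetise} the two polynomials. Evaluating $f$ and $g$ by an arithmetic circuit over $\{+, \times\}$ starting from the inputs $x_1,\ldots,x_n$ and the constant $1$, I introduce one fresh variable per gate and write down the corresponding elementary equation. Each gate equation then has one of the three forms $u = c$ (constant initialisation), $u + v = w$ (addition gate), or $u \cdot v = w$ (multiplication gate), where $u, v, w$ are circuit variables. Letting $w_f$ and $w_g$ denote the output variables of the two circuits, the original equation $f = g$ becomes equivalent to the conjunction of all gate equations together with the single linear equation $w_f = w_g$. The only nonlinear atoms are the multiplication gates; all other equations ($u = c$, $u + v = w$, and $w_f = w_g$) are linear and can be placed verbatim into the $A_i = B_i$ block.

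The key step is encoding each multiplication gate $u \cdot v = w$ by a multiplier pair of exactly the allowed form. For each such gate I introduce a fresh variable $y$ and assert
\[
    y \cdot 1 = u \ \wedge\ y \cdot v = w,
\]
that is, I take $F = 1$, $G = u$, $H = v$, and $I = w$, all of which are linear polynomials in the circuit variables (a constant such as $1$ being a degree-zero, hence linear, polynomial). The first conjunct forces $y = u$ over $\Nat$ (since $F = 1$ always divides $G$), and substituting into the second conjunct yields precisely $u \cdot v = w$. Because $y$ is fresh and never appears in any linear equation, it occurs in exactly these two quadratic equations, matching the structural requirement. Collecting all gate encodings together with the linear equations produces a system of the required form that has a solution in $\Nat$ if and only if $f = g$ does, which establishes the undecidability.

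The main obstacle is purely in the bookkeeping: one must check that the arithmetisation uses only binary $+$ and $\times$ gates (standard circuit/Horner evaluation), that restricting the domain to $\Nat$ loses nothing (guaranteed by stating MRDP over the naturals and by $f, g$ having nonnegative coefficients), and above all that the conditions ``exactly two quadratic equations per $y_i$'' and ``$F_i, G_i, H_i, I_i$ linear'' are met. Both follow from dedicating one private fresh multiplier variable $y$ to each multiplication gate and from the constant polynomial $1$ qualifying as linear. No genuinely hard number-theoretic content is required beyond the black-box MRDP theorem; the entire work lies in the normal-form transformation.
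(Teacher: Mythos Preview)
The paper does not actually prove this theorem: it is quoted verbatim from the cited reference \cite{ID04} and used as a black box in the reduction establishing Theorem~\ref{thm-ext-int-strong}. So there is no ``paper's own proof'' to compare against.

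That said, your proposed argument is correct and is essentially the standard way one derives such a normal form from the MRDP theorem. The arithmetisation into binary $+/\times$ gates followed by the per-gate encoding $y\cdot 1 = u \wedge y\cdot v = w$ is exactly the right trick: it isolates each multiplication into a pair of equations sharing a fresh multiplier variable, with all four right-hand and coefficient polynomials linear (indeed, single variables or the constant~$1$). The bookkeeping points you flag---freshness of each $y$, linearity of $F_i,G_i,H_i,I_i$, and that the linear gate equations populate the $A_i=B_i$ block---are the only things to verify, and they all hold. One minor remark: the theorem as stated in the paper lists only $A_i,B_i,F_i,G_i$ as linear, but this is evidently a typo for $A_i,B_i,F_i,G_i,H_i,I_i$; your reading is the intended one.
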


We can get a reduction from the problem in Theorem~\ref{thm-quad-eq} to the satisfiability of the extension of $\strline[\replaceall]$ with integer constraints as follows: For each monomial $y_i x_j$ in the quadratic polynomials, we use an $\strline[\replaceall]$ formula $z_{y_i x_j} = \replaceall(y_i, a, x_j)$ to simulate $y_i x_j$, where $z_{y_i x_j}$ are freshly introduced string variables. Since each equation $y_iF_i=G_i$ or $y_i H_i = I_i$ can be seen as a linear combination of the terms $y_i x_j$ and $x_j$ for $i \in [m]$ and $j \in [n]$, we can replace each variable $x_j$ with $|x_j|$, and each term $y_ix_j$ with $|z_{y_i x_j}|$,  thus transform them into the (linear) integer constraints $F'_i = G'_i$ or $H'_i = I'_i$. Similarly, after replacing each variable $x_j$ with $|x_j|$, we transform each equation $A_i= B_i$ into an integer constraint $A'_i = B'_i$. Therefore, we get a formula
$$
\begin{array}{l c l }
\bigwedge \limits_{i \in [m], j \in [n]} z_{y_i x_j} = \replaceall(y_i, a, x_j) \wedge \bigwedge \limits_{i \in [m]} y_i \in a^*\ \wedge  \bigwedge \limits_{j \in [n]} x_j \in a^* \  \wedge\\
\hspace{2cm} \bigwedge \limits_{i \in [k]} A'_i = B'_i \wedge \bigwedge \limits_{i \in [m]} (F'_i = G'_i \wedge H'_i = I'_i),
\end{array}
$$
where the dependency graph of the $\strline[\replaceall]$ subformula is of depth at most one.

\hide{
	From this class of quadratic Diophantine equations, we can introduce string variables $x_1, \cdots, x_k$ and $y_1, \cdots, y_m$, together with relational string constraints
	\[z_{i,j}=\replaceall(x_i, a, y_j)\]
	for $1\leq i\leq k$ and $1\leq j\leq m$. Note that, for each $i$,  $t_i F_i=G_i$ can be written as
	\begin{equation} \label{eq:dio}
	t_i\cdot \left(a_0+\sum_{j=1}^s a_j s_j\right) =  b_0+\sum_{j=1}^s b_j s_j
	\end{equation}
	where $a$'s and $b$'s are all natural numbers. Moreover, \eqref{eq:dio} holds iff
	\[a_0\cdot |y_i|+ \sum_{j=1}^s a_j |z_{i,j}| =  b_0+ \sum_{j=1}^s b_j |x_j| \]
	which is an integer constraint defined in Definition~\ref{def:intconst}. This entails that
}

\subsection{Undecidability of the Character Constraints}

We provide part of the proof of Proposition~\ref{prop-ext-ch-index}, in particular, we show the undecidability of character constraints.

\begin{proposition}\label{prop-ext-char}
	For the extension of $\strline[\replaceall]$ with character constraints, the satisfiability problem is undecidable.
\end{proposition}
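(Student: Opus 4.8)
The plan is to reduce from the satisfiability problem for $\strline[\replaceall]$ extended with a single integer constraint of the form $|x| = |y|$, which is undecidable by Theorem~\ref{thm-ext-int}. The only new ingredient needed is a gadget that expresses $|x| = |y|$ using \emph{character} constraints instead of an integer constraint. The idea is to mark the end of each of the two strings with a fresh letter $\#$ and then use a \emph{single shared integer variable} to point at both marker positions: if $x$ takes a value over the unary alphabet $\{a\}$ and we form $x' = x \concat \#$, then $\#$ occurs in $x'$ exactly once, at position $|x|+1$, so the character constraint $x'[\mathfrak{n}] = \#$ pins $\mathfrak{n}$ to $|x|+1$; forcing the same $\mathfrak{n}$ to also satisfy $y'[\mathfrak{n}] = \#$ then yields $|x| = |y|$.

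More concretely, first I would recall from the proof of Theorem~\ref{thm-ext-int} that undecidability already holds for instances $C \wedge (|y_f| = |y_g|)$ where $C$ is an $\strline[\replaceall]$ formula, $y_f, y_g$ are two of its variables, and $C$ contains the regular constraints $y_f \in a^\ast$ and $y_g \in a^\ast$ (indeed $y_f, y_g$ are built from source variables by $\concat$ and $\replaceall$ over the unary alphabet $\{a\}$, so they only ever take values in $a^\ast$; the Proposition~\ref{prop-concat} rewriting used there only puts the extra fresh letters into \emph{auxiliary} variables). Given such an instance, pick a letter $\# \notin \Sigma$ and build $C'$ by adjoining to $C$: (i)~fresh string variables $y_f', y_g'$ together with the constraints $y_f' = y_f \concat \#$ and $y_g' = y_g \concat \#$, where the concatenations are encoded by $\replaceall$ via Proposition~\ref{prop-concat}, with all auxiliary variables placed last in the straight-line order so that $C'$ is still straight-line; (ii)~a fresh integer variable $\mathfrak{n}$ and the two atomic character constraints $y_f'[\mathfrak{n}] = \#$ and $y_g'[\mathfrak{n}] = \#$, where $y_f'[\mathfrak{n}] = \#$ abbreviates the character constraint comparing the $\mathfrak{n}$-th letter of $y_f'$ with the first (and only) letter of the constant string $\#$. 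The output instance is $C' \wedge (y_f'[\mathfrak{n}] = \#) \wedge (y_g'[\mathfrak{n}] = \#)$, which is an $\strline[\replaceall]$ formula with character constraints.

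For correctness, in any model of $C'$ we have $y_f \mapsto a^i$ and $y_g \mapsto a^j$ for some $i,j$, hence $y_f' \mapsto a^i\#$ and $y_g' \mapsto a^j\#$; since $\#$ occurs in $a^i\#$ only at position $i+1$, the constraint $y_f'[\mathfrak{n}] = \#$ forces $\mathfrak{n} = i+1$ (a value $\mathfrak{n}=0$ makes the constraint vacuously false, so $\mathfrak{n} \ge 1$ is automatic), and likewise $y_g'[\mathfrak{n}] = \#$ forces $\mathfrak{n} = j+1$; therefore $i = j$, i.e.\ $|y_f| = |y_g|$. Conversely, any model of $C \wedge (|y_f| = |y_g|)$ extends to a model of the new formula by setting $y_f', y_g'$ (and the auxiliary variables) as dictated by (i) and $\mathfrak{n} := i+1$. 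Thus the reduction preserves satisfiability, and undecidability of the extension with character constraints follows. The $\indexof$ half of Proposition~\ref{prop-ext-ch-index} is entirely analogous: $\indexof(\#, y_f') = |y_f|+1$, so the pair $\mathfrak{n} \le \indexof(\#, y_f') \wedge \mathfrak{n} \ge \indexof(\#, y_f')$, together with the corresponding pair for $y_g'$, plays exactly the role of the two character constraints above.

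The only delicate points are bookkeeping rather than conceptual. One must verify that appending $\#$ through the $\replaceall$-encoding of concatenation does not break the straight-line shape — it does not, since $y_f', y_g'$ and their auxiliary variables depend only on $y_f$, $y_g$, and freshly introduced variables, and are appended at the end of the straight-line order — and that $y_f, y_g$ genuinely stay within the sub-alphabet $\{a\}$ after the Proposition~\ref{prop-concat} rewriting internal to the proof of Theorem~\ref{thm-ext-int}. I expect the write-up obstacle to be merely making these two invariants precise; the core reduction itself is short.
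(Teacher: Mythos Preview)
Your proposal is correct and follows essentially the same route as the paper: append a fresh end-marker to $y_f$ and $y_g$ via concatenation (encoded by $\replaceall$), then use a single shared integer variable $\mathfrak{n}$ in two character constraints to pin it to the unique marker position in each, forcing $|y_f|=|y_g|$. Your write-up is in fact more careful than the paper's about the straight-line bookkeeping and the fact that $y_f,y_g$ stay over $\{a\}$; as a minor aside, your marker-based argument for the $\indexof$ half differs from the paper's (which instead uses $1=\indexof(x,y)\wedge 1=\indexof(y,x)$ to express $|x|=|y|$ directly over the unary alphabet), but both are valid.
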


The arguments for Proposition~\ref{prop-ext-char} proceed as follows. Recall that in the proof of Theorem~\ref{thm-ext-int}, we get a formula $C_f \wedge C_g \wedge |y_f| = |y_g|$ such that $f(x_1,\cdots, x_n) = g(x_1,\cdots, x_n)$ has a solution in natural numbers iff $C_f \wedge C_g \wedge |y_f| = |y_g|$ is satisfiable. Let $\$ \neq a$. Suppose  $z_f = y_f \concat \$$, and $z_g = y_g \concat \$$. Then $|y_f| = |y_g|$ can be captured by $z_f[\mathfrak{n}] = \$[1] \wedge  z_g[\mathfrak{n}] = \$[1]$, where $\mathfrak{n}$ is a variable of type $\intnum$. More precisely,
we have
\begin{quote}
	\centering
	$C_f \wedge C_g \wedge |y_f|= |y_g|$ is satisfiable \\
	iff \\
	$C_f \wedge C_g \wedge z_f = y_f \concat \$ \wedge z_g = y_g \concat \$ \wedge z_f[\mathfrak{n}] = \$[1] \wedge  z_g[\mathfrak{n}] = \$[1]$ is satisfiable.
\end{quote}
Therefore, we get a reduction from Hilbert's tenth problem to the satisfiability problem for the extension of $\strline[\replaceall]$ with character constraints.

%
%
%
%

\subsection{Undecidability of the $\indexof$ Constraints}

We provide the final part of the proof of Proposition~\ref{prop-ext-ch-index}, in particular, we show the undecidability of $\indexof$ constraints.

\begin{proposition}\label{prop-indexof}
	For the extension of $\strline[\replaceall]$ with the $\indexof$ constraints, the satisfiability problem is undecidable.
\end{proposition}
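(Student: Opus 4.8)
The plan is to mimic the argument for character constraints (Proposition~\ref{prop-ext-char}): reduce from Hilbert's tenth problem by reusing, essentially verbatim, the construction in the proof of Theorem~\ref{thm-ext-int}, so that the only genuinely new ingredient is an encoding of the constraint $|y_f| = |y_g|$ using $\indexof$ atoms. Recall that for every pair of polynomials $f, g$ with positive coefficients over $x_1, \dots, x_n$, that construction produces an $\strline[\replaceall]$ formula $C_f \wedge C_g$ over the unary alphabet $\{a\}$, with designated output variables $y_f, y_g$ forced to lie in $\Ll(a^\ast)$, such that $f(x_1, \dots, x_n) = g(x_1, \dots, x_n)$ has a solution in $\Nat$ iff $C_f \wedge C_g \wedge |y_f| = |y_g|$ is satisfiable. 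Since $y_f, y_g$ range over a unary alphabet, $|y_f| = |y_g|$ amounts to $y_f = y_g$, so it suffices to express this equality with $\indexof$ constraints.

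First I would pick a fresh letter $\$ \neq a$ and two fresh string variables $z_f, z_g$, and add the relational constraints $z_f = y_f \concat \$$ and $z_g = y_g \concat \$$ (these are concatenations with a constant string, hence already in $\strline[\replaceall]$; alternatively they can be replaced by $\replaceall$ terms using Proposition~\ref{prop-concat}). Because $y_f, y_g \in \Ll(a^\ast)$ and $\$ \notin \{a\}$, the letter $\$$ occurs in $z_f$ (resp.\ $z_g$) exactly once, namely at position $|y_f| + 1$ (resp.\ $|y_g| + 1$); hence $\indexof(\$, z_f) = |y_f| + 1$ and $\indexof(\$, z_g) = |y_g| + 1$, and in particular both $\indexof$ values are well defined and nonzero. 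The remaining obstacle is that $\indexof$ atoms only permit comparisons of the form $t\ \mathfrak{o}\ \indexof(s_1, s_2)$ with $\mathfrak{o} \in \{\ge, \le\}$, not a direct equation between two $\indexof$ terms. This is circumvented by a two-sided pinning trick: take a fresh integer variable $\mathfrak{n}$ and conjoin
\[
\mathfrak{n} \ge \indexof(\$, z_f) \ \wedge\ \mathfrak{n} \le \indexof(\$, z_f) \ \wedge\ \mathfrak{n} \ge \indexof(\$, z_g) \ \wedge\ \mathfrak{n} \le \indexof(\$, z_g),
\]
which forces $\mathfrak{n} = \indexof(\$, z_f) = \indexof(\$, z_g)$, i.e.\ $|y_f| + 1 = |y_g| + 1$, i.e.\ $|y_f| = |y_g|$.

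Finally I would verify that the resulting formula $C_f \wedge C_g \wedge z_f = y_f \concat \$ \wedge z_g = y_g \concat \$$ conjoined with the four $\indexof$ atoms above belongs to the extension of $\strline[\replaceall]$ with $\indexof$ constraints: the relational part is still straight-line since $z_f, z_g$ are fresh variables occurring only on left-hand sides introduced after $y_f, y_g$, and the $\indexof$ atoms place no requirements on the relational constraints. Equisatisfiability with $C_f \wedge C_g \wedge |y_f| = |y_g|$ is immediate from the discussion above, so the satisfiability problem for this fragment is undecidable, which establishes Proposition~\ref{prop-indexof} (and thereby completes Proposition~\ref{prop-ext-ch-index}). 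I do not anticipate any serious difficulty; the only points needing care are the first-occurrence semantics of $\indexof$ (harmless here because $\$$ occurs exactly once in each $z_f, z_g$) and the $\le/\ge$-only shape of $\indexof$ atoms (handled by the pinning trick).
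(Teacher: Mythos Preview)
Your proof is correct, but it takes a genuinely different route from the paper's. The paper does not introduce a sentinel letter, fresh string variables, or an integer variable at all: it simply observes that over the unary alphabet $\{a\}$, we have $\indexof(x,y)=1$ iff $x$ is a prefix of $y$, and hence $|y_f|=|y_g|$ is equivalent to $1=\indexof(y_f,y_g)\wedge 1=\indexof(y_g,y_f)$ (with $=$ expressed as the conjunction of $\le$ and $\ge$). This is a more direct encoding that stays entirely within the unary alphabet and uses only four $\indexof$ atoms with constant~$t=1$. Your approach, by contrast, faithfully ports the character-constraint argument: append $\$$, then pin a shared integer variable to the unique position of $\$$ in each of $z_f,z_g$. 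Both are sound; the paper's argument is shorter and needs no auxiliary objects, while yours has the virtue of reusing the Proposition~\ref{prop-ext-char} template verbatim and of encoding $|y_f|=|y_g|$ in a way that does not depend on $y_f,y_g$ being unary.
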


Proposition~\ref{prop-ext-char} follows from the following observation and Theorem~\ref{thm-ext-int}: For any two string variables $x,y$ over a unary alphabet,
$1= \indexof(x,y)$ iff $x$ is a prefix of $y$. Therefore, $|x| = |y|$ iff $1=  \indexof(x,y) \wedge 1= \indexof(y,x)$. This implies that in the proof of Theorem~\ref{thm-ext-int}, we can replace $|y_f| = |y_g|$ with $1=\indexof(y_f, y_g) \wedge 1 = \indexof(y_g, y_f)$ and get a reduction from Hilbert's tenth problem to the satisfiability problem for the extension of $\strline[\replaceall]$ with the $\indexof$ constraints.
Note that $=$ can be simulated as a conjunction of $\leq$ and $\geq$.

}

\end{document}